\documentclass[11pt]{article}
\usepackage[letterpaper, margin=1.2in]{geometry}

\usepackage{macros}
\usepackage{stackengine}
\usepackage{centernot}

\newcommand{\domain}{\mathbb D}

\newcommand{\lincomb}{\operatorname{Lin}}

\newcommand{\fsfun}[2]{ #1 \underset{\text{fs}}{\longrightarrow} #2}

\newcommand{\leftderivative}[2]{#1(#2\underline{\hspace{2mm}})}

\newcommand{\termop}{ \underset{\text{term}}{\longrightarrow}}

\input{knowledges.kl}

\begin{document}

\title{Regularity as seen by Alice and Bob}
\author{Miko{\l}aj Boja\'nczyk, Aliaume Lopez, Rafa{\l} Stefa\'nski, Omid Yaghoubi }

\maketitle 
\begin{abstract}
The goal of this paper is to propose a unifying model for Nerode-style
characterizations of regularity across functions with different output
domains. Building on Hauser's work in communication complexity, we
generalize the setting by relaxing the computability assumptions and
allowing non-Boolean output domains. We consider functions of type
$\Sigma^* \to \domain$, where $\Sigma$ is a finite alphabet and
$\domain$ is an arbitrary domain. For several domains, we show that
the model coincides with known models of computation. We further
conjecture that an analogous correspondence holds for other domains
that currently lack a Nerode-style characterization of regularity,
and we provide ample supporting evidence. In the model, an input
string $w$ is split as $w = w_1 w_2$ and distributed between two
cooperating parties, Alice and Bob, who exchange a constant number
of messages to compute the value of the function. Each message is
either an element of the output domain or a signal drawn from a
finite set of signals, and the parties must produce the correct
output for every admissible split $w = w_1 w_2$. We further extend
the framework to infinite alphabets in the setting of nominal sets,
and investigate its expressiveness on languages of words with atoms.
\end{abstract}

\newpage

\section{Introduction}
\label{sec:introduction}

\AP
This paper is motivated by a desire to understand the notion of  regularity in formal language theory. We take the functional perspective, in which we consider functions 
\begin{align*}
f : \Sigma^* \to \domain
\end{align*}
that input strings, and output values from some domain $\domain$. If the output
domain is the Booleans, then such functions are languages, and there is no
question about which languages should be considered \intro[regular language]{regular}. There are tens --
if not hundreds -- of equivalent definitions, including regular expressions,
finite automata in numerous forms, monoids, monadic second-order logic, and
variants of $\lambda$-calculus. But what about other outputs? Let us review
three examples where the nature of regularity is a topic of genuine
discussion.

\begin{enumerate}
    \item \textbf{String outputs.}
Consider string-to-string functions 
\begin{align*}
f : \Sigma^* \to \Gamma^*.
\end{align*}
Similarly to languages, the literature on automata theory offers countless  models. This time, however, not all of them are equivalent, but there is at least some semblance of order. Let us mention three classes of functions  of particular interest:  the \emph{rational}, \emph{regular}, and \emph{polyregular} functions. These classes are described in~\cref{fig:transducer-classes} in Section~\ref{sec:string-outputs} together with the appropriate references, with each one having at least five different characterisations, using models of varied origins, including logic, algebra and programming language theory. Which of these classes, if any, should be considered ``the'' regular string-to-string functions? We could simply go with the middle one, because  the word ``regular'' is traditionally used for it, but a more principled approach would be preferable.

\item \textbf{Number outputs.}
Consider string-to-number functions, say functions 
\begin{align*}
f : \Sigma^* \to \Rat
\end{align*}
that output rational numbers (more generally, the outputs could be in some field). Here, the literature offers two natural candidates, namely \emph{weighted automata}~\cite{schutzenberger1961definition}, and \emph{polynomial automata}~\cite[Section IV]{DBLP:conf/lics/BenediktDSW17}. In both cases, there is an automaton that reads the input string in one pass, and stores in its state a vector $\Rat^d$ of some fixed dimension. In weighted automata, this vector is updated using linear maps, while polynomial automata can use polynomial maps. These models are not equivalent -- polynomial automata are strictly more powerful -- but both have a good mathematical theory, one based on linear algebra, and the other based on algebraic geometry. Again, we might be tempted to ask: which  is the right one?

\item \textbf{Infinite input alphabets.}
Our final example is of a different kind than the previous ones. We return to functions with Boolean outputs
\begin{align*}
f : \Sigma^* \to \set{\text{true, false}},
\end{align*}
i.e.~languages, but this time the input alphabet is no longer required to be finite. The infinity needs to be somehow tamed, and the standard approach to do this is to use an infinite alphabet where letters can only be compared for equality~\cite{kaminskiFiniteMemoryAutomata1994}. This allows for languages such as ``all letters are different'' or ``the first letter is equal to the last one'', but not for languages such as ``the letters are in increasing order'', since the letters are not equipped with a linear order, or any other kind of structure.
The literature is rife with automata models for such languages, with seventeen examples listed in \cref{fig:automata-infinite-alphabets}, all describing pairwise non-equivalent models. 
 Again, we might be tempted to ask: which  is  the right one?
\end{enumerate}

This type of question can be asked in other settings, with   outputs such as  trees,  graphs or elements of some abstract semiring. One could also vary the inputs, and consider regularity for, say, graph-to-graph functions, but we refrain from considering general outputs in this paper, and we stay with string inputs. This paper attempts to provide a unified theory of regularity for such functions.   We are guided by  the following principle, which we believe to be essential for regularity:
\begin{description}
    \item[Constant information flow.]   If the input is split into parts, then only a constant amount of information  flows between them, as far as the output of the function is concerned.
\end{description}
This principle is only a vague guideline, since it does not identify what
``information flow'' means, or how to quantify its amount. For Boolean outputs,
i.e.~languages, constant information flow has a standard interpretation, which is the
Myhill-Nerode Theorem, and it is known to
correspond exactly to the regular languages. However, in the case of more
complicated outputs, things are less clear.  For example, if the outputs of the
function are rational numbers, then it should be legitimate for the information
flow to contain some rational numbers. How should this be formalised?

\AP
We choose to use a formalisation that is based on communication complexity~\cite{YAO79,KUSH97}. In our model, the function is computed by two cooperating parties, called Alice and Bob. 
There are \textbf{no uniformity assumptions} and the two parties
have unrestricted computational power; the goal is to measure information 
and
not computation.  The input string is split into two parts $w = w_1 w_2$, and
Alice has access to $w_1$ while Bob has access to $w_2$. The two parties  exchange  a
\textbf{constant number of messages} in order to compute the function, where the constant  depends only on the
protocol and not on the input string. The  output of the computation must
be \textbf{split invariant}, which means that the output depends only on the
input string $w$, and not on the split $w = w_1 w_2$. In the communication,
there are two kinds of messages:  either bits in $\set{\text{true, false}}$  or
elements of the output domain (as far as we can tell, the messages from the domain are the novelty of our approach).
For example, if the outputs are rational
numbers, then the messages can contain rational numbers. However, there is a
restriction on the access to messages from the output domain, which is called
the \emph{\textbf{\intro{black box restriction}}}. This restriction (which will be formally
defined later in the paper) is intended to prevent tricks such as Alice sending
her input string to Bob encoded as a rational number. Intuitively speaking,
the  black box restriction says that the parties cannot read the messages from
the output domain, and instead they can only operate on them using predefined
operations. For example, in the case of numbers, the operations are addition
and multiplication.

The model, which we call \emph{protocols}, can be applied to any output domain,
and we study several examples in this paper. In the case of Boolean outputs, the model and its connection to finite automata have already been studied before~\cite[Theorem 5]{hauser1989}, however the results on other output domains are new up to our best knowledge.   As we discover, despite its
non-uniformity, the model  can only define well-behaved functions. In
particular, in all cases that we have studied  these functions are: (1)  always
computable, even in linear time; and (2) the output size is always at most
linear (with the size of a rational number measured by the number of bits
needed to represent it).  This seems to indicate that the split invariance,
together with a  constant number of messages under the black box restriction,
has unexpected computational consequences. In particular, in all cases that we
have studied, protocols coincide -- or are conjectured by us to coincide --
with existing automata models. Since automata are conceptually very different 
than protocols, we believe that those equivalences, summarized in the table below,
justify the protocol-based approach to regularity.

\begin{center}
    \begin{tabular}{ll}
    \textbf{Output} & \textbf{Automata Model} \\
    \hline
    \kl[Boolean domain]{Booleans} & Finite automata (\cref{thm:boolean-domain}) \\
    \kl[String domain]{Field}   & Weighted automata (\cref{thm:field-domain}) \\
    \kl[String domain]{Strings} & Two-way automata with output (\cref{conj:protocol-regular-string-to-string}) \\
    Boolean, but infinite input alphabet & Unambiguous automata (\cref{conj:protocols-unambiguous}) 
\end{tabular}
\end{center}

In the remainder of this introduction, we give a more detailed review of the four rows
in the table, including substantial evidence for the conjectured equivalences.

\subsection{Protocols for Boolean outputs}
\label{sec:intro-boolean}

We begin by studying the protocol model for  functions
\begin{align*}
f : \Sigma^* \to \set{\text{true, false}},
\end{align*}
i.e.~languages. This case is not new, and it has already been studied in~\cite{hauser1989}. For Boolean outputs, the two parties only exchange bits, 
and they need to decide if the input string is in the language or not.
Here is an example.

\begin{myexample}[Parity]
    \label{ex:three-letters}
Suppose that the language is ``the string has even length''. In a protocol for this language, Alice sends the parity for her part of the input (one bit), and Bob uses this bit to return the final answer.  
\end{myexample}

As mentioned before, the parties have unbounded computational power, which means that  their messages
can contain answers to potentially undecidable questions about their parts of the input.
Despite this, the protocols compute exactly the regular languages.
This was first shown in~\cite[Theorem 5]{hauser1989}, but we restate it here for completeness, and provide a self-contained proof later in the paper.

\begin{restatable}{theorem}{booleandomain}
 \label{thm:boolean-domain} A language $L \subseteq \Sigma^*$ is computed by a \kl[boolean protocol]{protocol} if and only if it is 
 \kl[regular language]{regular}.
\end{restatable}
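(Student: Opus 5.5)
The plan is to prove the two directions separately, with the backward direction essentially the parity example and the forward direction a finite-index argument via the Myhill--Nerode theorem. Throughout I take a protocol to be a fixed finite tree of rounds: a constant number $k$ of messages, alternately sent by Alice and Bob. Each message is a bit computed by an arbitrary function of the sender's own input part and the messages received so far, and at the end a designated party announces the output. Split invariance says that for every $w$ and every split $w = w_1 w_2$ the announced output equals $L(w)$.

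\emph{Regular implies protocol.} Fix a deterministic finite automaton for $L$ with state set $Q$. Alice computes the state $q$ reached after reading $w_1$ and sends it encoded in $\lceil \log_2 |Q| \rceil$ bits, which is a constant number of messages. Bob then runs the automaton from $q$ on $w_2$ and outputs whether the resulting state is accepting. The output is $L(w_1w_2)$ for every split, so split invariance holds trivially.

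\emph{Protocol implies regular.} Given a protocol with $k$ bit messages, I will show that the Myhill--Nerode congruence of $L$ has finite index. For a fixed $w_1$, consider Alice's behaviour as a function: it maps each possible history of Bob's messages to her next bit. Because there are at most $k$ messages, all bits, there are only finitely many such strategies; at most $2^{2^k}$ suffices as a crude bound. Call this Alice's \emph{transcript function} $T(w_1)$.

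The key claim is that $T(w_1) = T(w_1')$ implies $w_1 \sim_L w_1'$. Take any $w_2$. The full transcript is determined by Alice's strategy together with Bob's strategy on $w_2$. Hence the runs on the splits $(w_1, w_2)$ and $(w_1', w_2)$ produce the same transcript. The output is a function of the transcript together with the announcing party's own input. If Bob announces, the outputs coincide immediately, since Bob holds the same $w_2$ in both runs. If Alice announces, fold her final output bit into $T$ as one more message, which keeps the set of strategies finite. Either way, split invariance gives $L(w_1w_2) = L(w_1'w_2)$.

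It follows that $\sim_L$ has at most finitely many classes, and the Myhill--Nerode theorem gives that $L$ is regular. Split invariance is used only to identify the protocol's output with $L(w)$ at the particular split $(w_1, w_2)$; at a single split the argument is purely a counting one. The main obstacle is this step: setting up the formal definition of the protocol so that the output really depends only on the transcript and the announcing party's local strategy, with no hidden dependence on the split point. Once that formalism is fixed, the finite-strategy counting is routine.
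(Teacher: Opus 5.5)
Your proof is correct and is essentially the paper's argument. Your transcript function $T(w_1)$ is exactly the message Alice sends in the one-round protocol built in Lemma~\ref{lem:one-round-reduction-boolean}, so you have inlined that reduction into the Myhill--Nerode step, and working with alternating bit messages and an announcing party instead of the parallel rounds and joint output function of Definition~\ref{def:two-party-protocol-boolean} changes nothing in the argument.
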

One implication is immediate: every regular language can be computed by a protocol.  Alice can send to Bob the state of a finite automaton that recognises the language. The other implication is proved in two steps, see Section~\ref{sec:boolean-domain} for more details. In the first step, the protocol is reduced to a one-round non-interactive version, where each party independently sends a message with a constant number of bits, and the decision is then made based on these two messages. In the second step,  the Myhill-Nerode Theorem is used to show that the language must necessarily be regular. 

Theorem~\ref{thm:boolean-domain} is  simple  technically, and its main value for us lies in its role as an inspiration for other results, which use  other output domains  such as  fields or strings.


Before moving to the other output domains, let us comment on other related work that connects communication complexity with automata in the case of Boolean outputs.  Much of this work is related to \emph{state complexity}, where one studies  the number of states needed for a given language in a given automaton model, and how this number is affected by operations on languages or changes in the model. See  Wikipedia~\cite{stateComplexityWiki} for a comprehensive summary with numerous references, or the recent paper~\cite{goosKiefer2022} which shows how to transfer lower bounds from communication complexity to state complexity of unambiguous automata. 
 The work on state complexity is mainly about the exact number of states, which is of secondary concern to us. For our purposes, a protocol that exchanges $k$ bits is no different from a protocol that exchanges $2^k$ bits. We only care that this bound should be finite and independent of the input.

\subsection{Field outputs}
\label{sec:intro-field}

After the Booleans, we turn to functions with outputs in a field. This is the first original contribution of this paper. For the sake of concreteness, let us  consider functions with outputs in the  field of rationals
\begin{align*}
f : \Sigma^* \to (\mathbb Q, +, \times).
\end{align*}
We adapt the  protocol model to  compute such functions. Similarly to the Boolean case, the parties exchange messages. However, this time there are two kinds of messages: bits (as in the case of  Boolean outputs), and  elements of the field.  The elements of the field can be added and multiplied.  Division is not allowed, and its role is discussed in Example~\ref{ex:division}. Using bit messages, we can still recognise all regular languages (more formally their characteristic functions).
However, messages from the field allow computing new functions.

\begin{myexample}[Length and exponential length]\label{ex:length}
    Consider the following two functions
\begin{align*}
\myunderbrace{w \mapsto |w|}{length} \qquad \text{and} \qquad \myunderbrace{w \mapsto 2^{|w|}}{exponential length}.
\end{align*}
To compute the length of the input string, Alice  sends the length of her part, and Bob adds this to his length, thus yielding the desired output. For the exponential length, we use a similar protocol, except that multiplication is used instead of addition. 
\end{myexample}

In the presence of an infinite message space, one needs to be careful about the design of the protocol. For example, Alice could try to send her part of the input string in a single message by encoding it as a rational. This would trivialise the model, enabling every function to be computed. To prevent such tricks, we use the \emph{black box restriction} which was discussed before\footnote{The black box postulate is related to  polymorphic parametricity from the theory of programming languages~\cite[Section 7]{reynolds1983types}, or to the recent algebraic group model in cryptography~\cite[Section 1.2]{fuchsbauer2018algebraic}. An important difference with the algebraic group model is that our model does not allow for equality tests, see Example~\ref{ex:equality-tests} for a discussion.}: the messages which use the output domain (in this case, rational numbers) cannot be read directly, but can only be acted on by the operations in the output domain (in this case addition and multiplication).
 If the output domain is finite, e.g.~it is a finite field, then the black box restriction is irrelevant (which is why it was not mentioned when talking about Boolean outputs). This is because one can use bits to sent elements of a finite output domain, and the bits are a preferrable communication channel, since they can be read directly and are not subject to the black box restriction. 

\paragraph*{Definition of the protocol model.} Since there might be some ambiguity as to what exactly is allowed in the protocol, we give a more formal  definition. There is a finite input alphabet $\Sigma$, and a constant number of rounds $k \in \set{1,2,\ldots}$.   Alice and Bob send messages in alternation, with Alice sending the first message\footnote{One could consider other patterns of communication. In fact, in Section~\ref{sec:beyond-boolean-outputs} we use a more symmetric variant where both parties move in parallel in each round. These variants  do not change the expressive power of the model, only the  number of rounds.}. The messages are from 
\begin{align*}
\myunderbrace{\set{\text{true, false}} + \Rat}{disjoint union of bits and rational numbers}.
\end{align*}
When choosing their message in the $i$-th round, the corresponding  party (Alice in odd rounds, Bob in even rounds) has access to their part of the input string, and the  bits from previous messages. The numbers  are rationals, and cannot influence the decision, as per the black box restriction.  The information available  in the $i$-th round is given by  the set 
\begin{align}\label{eq:strategy-input}
\myunderbrace{\Sigma^*}{part of the  input \\ string that is \\ known to the party}
\qquad \times \qquad 
\myunderbrace{\set{\text{true, false, unknown}}^{i-1}}{messages received in  previous rounds, \\ with  numbers from $\Rat$ replaced by ``unknown''}.
\end{align}
Based on this information, the corresponding party chooses a new message to  send, which is either a bit, or a number. The number can be produced in two ways: either  a  fresh number  is produced based on the available information, or otherwise two previously received numbers are combined using addition or multiplication.  Therefore, the possibilities for the message sent in the $i$-th round are described by the set 
\begin{align}
    \label{eq:strategy-output}
\myunderbrace{\set{\text{true, false}}}{bits}
\ + \ 
\myunderbrace{\Rat}{fresh \\ number}
\  + \ 
\myunderbrace{\setbuild{(op,x,y)}{$op \in \set{+,\times}$ and $x,y \in \set{1,\ldots,i-1}$}}{addition or multiplication of previously received numbers}.
\end{align}
If addition or multiplication is used, then the party sending the message is responsible for the operation to be well-defined, i.e.~the messages sent in rounds $x$ and $y$ must have been numbers.
Summing up, the strategy in round $i$ is a function which inputs an element of the set from \eqref{eq:strategy-input}, and outputs an element of the set from \eqref{eq:strategy-output}. This function need not be computable. The protocol is then described by $k$ such strategies, one for each round $i \in \set{1,\ldots,k}$. We assume that the last message, sent in the $k$-th round, is a number and not a bit -- this number is defined to be the output of the protocol. 
Finally, the protocol must be \kl{split invariant}, i.e.~for every input string $w$, the same output must be produced regardless of the factorization $w = w_1 w_2$ into strings owned by Bob and Alice. This completes the definition of the protocol model, in the case of field outputs.

\paragraph*{Equivalence with weighted automata.} Our main result for field outputs is~\cref{thm:field-domain} below, which says that protocols are equivalent to weighted automata. The precise definition of weighted automata will be given later in Section~\ref{sec:field-domain}, but the rough idea is that a weighted automaton maintains a vector  $\domain^d$ of field elements, with each input letter updating the vector via some fixed linear map.

\begin{restatable}{theorem}{fielddomain}
    \label{thm:field-domain}
     Assume that the domain $\domain$ is a field. Then a function 
    \begin{align*}
    f : \Sigma^* \to \domain
    \end{align*}  is computed by a \kl{protocol} if and only if it is  computed by a \kl{weighted automaton} over the same field.
\end{restatable}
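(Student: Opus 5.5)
The easy direction is a translation from automata to protocols. Take a weighted automaton with initial vector $u$, transition matrices $M_a$ and final vector $v$, so that $f(w) = u^{T} M_{w} v$. On input $w = w_1 w_2$, Alice computes the row vector $u^{T} M_{w_1} \in \domain^d$ and sends its $d$ coordinates as fresh numbers, in rounds separated by dummy bit messages from Bob. Bob then forms $\sum_j (u^{T}M_{w_1})_j \cdot (M_{w_2}v)_j$. He sends each coefficient $(M_{w_2}v)_j$ as a fresh number and uses the multiplication and addition operations on previously received messages. The result equals $f(w)$ for every split, so the protocol is split invariant.

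For the converse we argue by symbolic evaluation, following the two-step argument in the proof of \cref{thm:boolean-domain}. Fix a protocol with $k$ rounds. First we make the bit part non-interactive. The bits exchanged depend only on the parties' inputs and earlier bits, since numbers are black boxes. So, as in the Boolean case, the transcript of bits is determined by finitely many bits computed separately from $w_1$ and from $w_2$. Concretely, Alice's behaviour is a function of $w_1$ into a finite set of ``reaction tables'' that assign a move to each possible bit history, and similarly for Bob. Let $\alpha(w_1)$ and $\beta(w_2)$ be these finite types.

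For fixed types, the output is a fixed formula, namely a term built from $+$ and $\times$ over the fresh numbers. Let $x_1,\dots,x_m$ be Alice's fresh numbers and $y_1,\dots,y_n$ Bob's, where $m,n \le k$. Then the output is $P_{\alpha,\beta}(x(w_1), y(w_2))$ for a polynomial $P_{\alpha,\beta}$ taken from a finite family. Expanding $P_{\alpha,\beta}$ into monomials gives a representation $f(w_1w_2) = \sum_{j} a_j(w_1)\, b_j(w_2)$. Here $j$ ranges over a finite index set independent of the split, since the monomial degrees are bounded by $2^k$ and the types are finitely many. The functions $a_j$ collect indicators of $\alpha$ times monomials in Alice's numbers, and the $b_j$ do the same for Bob.

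Now we invoke the field analogue of Myhill--Nerode, i.e.\ Fliess/Carlyle--Paz. A function $f$ is computed by a weighted automaton iff its Hankel matrix $H(u,v) = f(uv)$ has finite rank. The decomposition above shows the rank is at most the number of indices $j$. So $f$ is recognised by a weighted automaton of that dimension. If this characterisation is not assumed, we prove it directly. Take the finite-dimensional span $V$ of the row functions $f(u\,\cdot)$. This span is closed under the left shifts $g \mapsto g(a\,\cdot)$, which act linearly. The weighted automaton is then given by a basis of $V$, these shift maps, and the evaluation at $\varepsilon$ as the final functional.

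The main obstacle is the first step, making the bit communication non-interactive while retaining the number terms. One must check that the term structure of the output really depends only on the bit transcript, and that a party's fresh numbers depend only on its own input and the bit history. Once this reduction is in place, the finite-rank bound is routine.
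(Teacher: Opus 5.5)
Your proposal is correct and is essentially the paper's proof: the reaction tables are the one-round reduction, and your decomposition $f(w_1w_2)=\sum_j a_j(w_1)\,b_j(w_2)$, with type indicators multiplied by monomials, merges two of the paper's steps into one. Those are the signal-free step, where signals become $0/1$ field elements, and the scalar-product step; after that, finite Hankel rank and the Fliess Theorem finish exactly as in the paper. One small point to state: Alice's fresh numbers depend on both $w_1$ and the bit history, so $a_j$ should use the numbers for the history determined by $(\alpha,\beta)$, which is harmless because there are only finitely many histories.
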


  This result might even seem surprising. Our protocol is designed to use polynomial operations on the output domain, and therefore one could expect the relevant automaton model to be also based on polynomials, such as the  polynomial automata of~\cite{DBLP:conf/lics/BenediktDSW17}, which are the extension of weighted automata that allow polynomial maps instead of linear ones. As it turns out, the split invariance in the protocols  enforces linearity, and thus it excludes the general polynomials operations that are used in polynomial automata. The linearity phenomenon is true for outputs in a field -- because weighted automata are based on linear maps -- and it will also be true for other output domains, such as strings, see \cref{thm:evidence-for-the-conjecture}. We do not have a fully general understanding of this phenomenon.

  The proof of \cref{thm:field-domain} is given in Section~\ref{sec:proof-of-thm-field-domain}, but here we present a rough outline.  The proof is similar to the one used for the case of Boolean outputs, and has  two steps:

\begin{enumerate}
  \item We first show in Section~\ref{sec:reduction-to-scalar-product-protocols} that any protocol with outputs in a field can be reduced to a special  form, which we call a \emph{\kl{scalar product protocol}}. In this protocol, Alice and Bob apply in parallel two functions 
\begin{align*}
\sigma_A, \sigma_B : \Sigma^* \to \Rat^d
\end{align*}
to their parts of the input string, where $d$ is some fixed finite dimension. Then, the output is obtained by combining these two  $d$-dimensional vectors using scalar product. A scalar product protocol can be simulated by the general version of the protocol, but it is subject to certain restrictions: (a) there is no interaction; and (b) bits are not used, only  field elements. Since, as we prove, every product can be reduced to this scalar form, it follows that interaction and bit messages are not needed in the protocol. In fact, the interaction can be removed for all output domains, but the removal of bits is specific to fields. 
\item After reducing to the scalar product protocols, the next step (see Section~\ref{sec:from-scalar-product-protocol-to-weighted-automaton}) is to apply a version of the Myhill-Nerode Theorem for weighted automata. This is called the  Fliess Theorem~\cite{fliess1974}, and it says that recognisability by a weighted automaton is equivalent to having finite rank for a certain matrix, which is called the Hankel matrix. Roughly speaking, the rows in the Hankel matrix, in the context of our protocols, describe strategies of Alice, and the columns describe strategies of Bob. Therefore, it is not hard to show that in a scalar product protocol that uses vectors of dimension $d$,  the Hankel matrix has rank at most $d$. This, together with the Fliess Theorem, shows that protocols are equivalent to weighted automata, completing the proof of \cref{thm:field-domain}.
\end{enumerate}

\begin{myexample}[Division]\label{ex:division} What if we added division to the operations?  Consider the function  $w \mapsto 1/(|w|+1)$. This function can easily be computed using a protocol with division. We now argue that it cannot be computed using addition and multiplication only, thus proving that division gives extra power. Since the function depends only on the length of the input, it can be seen as having type $\Nat \to \Rat$. In such a type, weighted automata are the same as linear recurrence sequences. The inverse function $1/(n+1)$ is not a  linear recurrence sequence, which can be shown using the exponential polynomial form~\cite[Theorem 2.1]{BerstelReutenauer08}. Summing up, the choice of operations is important; we use a field, but we only allow the ring operations. We do not know what happens if division is allowed.
\end{myexample}

\paragraph*{Related work.} \cref{thm:field-domain} can be seen as a machine independent characterisation of  weighted automata. This would not be the first such characterisation, e.g.~the Fliess Theorem itself can be seen as a machine independent characterisation. Other research related to the Fliess Theorem is the categorical approach to  minimisation of weighted automata from~\cite{colcombetPetrisan2017}. We think that the value of our approach is that it places weighted automata in a broader context, which is defined purely in terms of communication, and in a way that is applicable to  other output domains, such as strings that will be considered next. As far as we know, the only  work  which takes such a broad view is the cost register automata of~\cite[Section C]{alurDantoniDeshmukhYuan2013}, which are an automaton model that describes functions with outputs in an arbitrary output domain, similarly to our setting. However, unlike our model, cost register automata are defined in terms of a  finite state machine model, and as such they lack the abstract machine independent flavour of our approach.

\subsection{String outputs}
\label{sec:intro-strings}

Our third group of results concerns string-to-string functions
\begin{align*}
f : \Sigma^* \to \Gamma^*.
\end{align*}
We  use the same kind of protocol as in the previous section, except that instead of numbers, the black box messages  contain strings from $\Gamma^*$, and instead of addition and multiplication, we have string concatenation.  Here are some examples.

\begin{myexample}[Reversal and duplication]\label{ex:reverse-duplicate}
    Using the protocol, we can compute string reversal: Alice sends to Bob the reverse of her part of the input, and  Bob concatenates this with  his part of the reverse. Another string-to-string function that can be easily computed by a protocol is string duplication $w \mapsto ww$. 
\end{myexample}

The string-to-string case is of particular interest because, as we have mentioned earlier in the introduction, there is no consensus as to  which string-to-string functions should be considered ``regular''. There are numerous automata models to choose from, some of which are summarized in \cref{fig:transducer-classes}, which contains twenty models, grouped by equivalence into three classes. We can exclude the weakest class (the rational functions), since it is too weak: it  cannot compute the reverse or duplicate functions from Example~\ref{ex:reverse-duplicate}. We can also exclude the strongest class (the polyregular functions), since it is too strong: polyregular functions can have superlinear output size, and as we will see in a moment, protocols can only have linear output size. By a process of elimination, we are left with the final  class from \cref{fig:transducer-classes}, which is traditionally called the ``regular functions''. One of the definitions of the class is in terms of   deterministic two-way automata with output~\cite{shepherdson1959reduction}; this model  is formally defined in Section~\ref{sec:string-outputs}. We conjecture  that this class     is the correct answer (thus validating the traditional name): 

\begin{conjecture}\label{conj:protocol-regular-string-to-string}
    A string-to-string function is computed by a protocol iff it is computed by a deterministic two-way automaton with output.
\end{conjecture}

In Section~\ref{sec:string-outputs} we discuss this conjecture in detail, and provide evidence in its favour, including a proof of the  implication 
\begin{align*}
\text{protocol} \quad \impliedby \quad \text{two-way automaton},
\end{align*} 
This implication is rather easy, since a two-way automaton can be neatly simulated by the repeated interactions of the protocol. The content of the conjecture, and the subject of the more technical results, is the $\implies$ implication. Most of Section~\ref{sec:string-outputs} is devoted to evidence for this implication. 
Our first argument  is the following result, which shows that functions computed  by protocols have many properties which are known to hold for deterministic two-way automata with output.

\begin{restatable}{theorem}{evidencefortheconjecture}
    \label{thm:evidence-for-the-conjecture}
    If a string-to-string function  is  computed by a \kl{protocol}, then:
    \begin{enumerate}
        \item \label{it:linear-size-outputs} outputs have at most linear size;
        \item \label{it:linear-time-computable} outputs can be   computed in linear time (ignoring logarithmic factors);
        \item \label{it:regular-preimages} preimages of regular languages are regular.
    \end{enumerate}
\end{restatable}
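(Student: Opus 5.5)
The plan is to first bring every protocol for a function $f:\Sigma^*\to\Gamma^*$ into a \emph{non-interactive normal form}, analogous to the reduction to scalar product protocols for fields, and then read off the three items from it. The normal form is as follows. There are finite sets of types $T_A,T_B$, a dimension $d$, and functions
\begin{align*}
\alpha:\Sigma^*\to T_A\times(\Gamma^*)^d,\qquad \beta:\Sigma^*\to T_B\times(\Gamma^*)^d .
\end{align*}
For each pair of types $(a,b)$ there is a concatenation term $t_{a,b}$ over variables $x_1,\ldots,x_d,y_1,\ldots,y_d$. The output on a split $w=uv$ is $t_{a,b}(\vec x,\vec y)$, where $\alpha(u)=(a,\vec x)$ and $\beta(v)=(b,\vec y)$.

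To obtain the normal form I would follow the Boolean case. By the black box restriction, the bits exchanged form a finite-depth game tree. Each party can therefore send, up front, its whole strategy as a bit vector, namely its reply to every possible history of bits. It can also send every fresh string it would produce at any node of that tree. The final output is then a fixed concatenation term, determined by the pair of bit vectors, over these fresh strings. This argument does not depend on the domain, and it is the step the introduction says works for all domains.

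\textbf{Item 3 (regular preimages).} This item follows directly from the normal form. Fix a regular $L\subseteq\Gamma^*$ with syntactic morphism $h:\Gamma^*\to M$ into a finite monoid. Alice sends, as bits, her type $a$ together with $h(x_1),\ldots,h(x_d)$. Bob then evaluates $h(t_{a,b})$ in $M$ using his own strings and decides membership in $L$. This is a Boolean protocol, and split invariance is inherited from the original protocol. Hence $f^{-1}(L)$ is regular by \cref{thm:boolean-domain}.

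\textbf{Item 1 (linear size).} I would pass to lengths. Write $m_{a,b,i}$ and $n_{a,b,j}$ for the number of occurrences of $x_i$ and $y_j$ in $t_{a,b}$. Then
\begin{align*}
|f(uv)| = A_{a,b}(u)+B_{a,b}(v),\qquad A_{a,b}(u)=\textstyle\sum_i m_{a,b,i}|x_i|,\quad B_{a,b}(v)=\textstyle\sum_j n_{a,b,j}|y_j| .
\end{align*}
I would compare the splits $u\,|\,cz$ and $uc\,|\,z$, where $c$ is a letter and $z$ ranges over a fixed finite set of shortest witnesses, one for each Bob type. Split invariance gives
\begin{align*}
A_{a(uc),b(z)}(uc)-A_{a(u),b(cz)}(u)=B(cz)-B(z),
\end{align*}
and the right-hand side is bounded by a constant, since only finitely many short strings are involved. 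The difficulty is that the left-hand side mixes different combinations $A_{a,b}$ for different type pairs. To handle this I would define the potential $P(u)=\max_{b}A_{a(u),b}(u)$ and show it grows by at most a constant per letter. Applying the displayed identity to every witness type $b$ shows $A_{a(uc),b}(uc)\le A_{a(u),b(cz_b)}(u)+C\le P(u)+C$. Hence $P(u)\le C|u|+C'$. Taking $v=\varepsilon$ then gives $|f(u)|\le P(u)+\text{const}$, which is linear.

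\textbf{Item 2 (linear time).} I expect this to be the main obstacle. The functions $\alpha$ need not be computable, so I have to replace them by something uniform. The plan is to show that, after normalising Alice's strings, one may assume the following. Each component of $\alpha(uc)$ that is actually used is a concatenation of components of $\alpha(u)$ and constant strings, via a fixed substitution depending only on $(a(u),c)$.

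To prove this I would first discard variables that are never used. Every used component $x_i(u)$ is then a factor of $f(uz_b)$ for some witness $z_b$. I would then use three-part split invariance, $f(xyz)$ for the splits $x\,|\,yz$ and $xy\,|\,z$, together with combinatorics on words: equations in the free monoid, and Lyndon--Schützenberger to handle periodic ambiguity. The aim is to express the new strings through the old ones plus bounded-length pieces, which is where the bound from item 1 enters. What I would try first is to choose canonical factor positions inside $f(uz_b)$ and prove they shift by bounded amounts when a letter is appended.

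Given such a substitution rule, one obtains a deterministic register automaton whose registers have linear size by item 1. Its registers can be maintained as shared DAGs with pointers, so it runs in linear time up to logarithmic factors.
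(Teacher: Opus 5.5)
\textbf{Items 1 and 3 are fine.} Your arguments for these two items are correct, and more elementary than the paper's, which derives all three items from field continuity.
\begin{itemize}
\item For item 3 you push $f$ through the syntactic morphism of $L$, which gives a Boolean protocol, and then invoke \cref{thm:boolean-domain}.
\item For item 1, the potential $P(u)=\max_b A_{a(u),b}(u)$ grows by at most a constant per letter. Here $b$ should range over \emph{realised} Bob types, so that the witnesses $z_b$ exist.
\end{itemize}

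\textbf{The gap is item 2.} Everything there rests on an unproved claim: after normalisation, the used components of $\alpha(uc)$ are obtained from those of $\alpha(u)$ by a fixed concatenation substitution depending only on $(a(u),c)$. The claim also implicitly requires that $a(uc)$ is determined by $a(u)$ and $c$. You only name tools for this (canonical factor positions, Lyndon--Sch\"utzenberger), and the claim is far from routine. A finite-state device whose string registers are updated by concatenation substitutions is a streaming string transducer, copyful in general. Copyful transducers with linear output growth, which item 1 would supply, compute regular functions. So your intermediate statement is essentially the implication protocol $\Rightarrow$ regular of Conjecture~\ref{conj:protocol-regular-string-to-string}, which the paper leaves open. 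As it stands, item 2 is not proved.

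\textbf{The missing idea.} Get uniformity from the field case rather than from combinatorics on words.
\begin{itemize}
\item Let $g:\Gamma^*\to\mathbb{Q}$ read a string as a base-$|\Gamma|$ numeral after prepending the digit $1$. This map is injective and is computed by a weighted automaton.
\item By \cref{lem:postcomposition-weighted-automaton} and \cref{thm:field-domain}, $f;g$ is computed by a weighted automaton. That automaton is a finite list of rational matrices, hence a computable object, even though the strategies are not. This uniformity is exactly what the Fliess theorem buys.
\item Since the outputs are natural numbers, the Fatou property lets you take integer weights.
\item Then $g(f(w))$ is a fixed row vector, times a product of $|w|$ matrices from a fixed finite set, times a fixed column vector.
\item Evaluate the product along a balanced binary tree. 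A subtree of height $h$ has entries of $O(2^h)$ bits, so with quasi-linear integer multiplication each level costs quasi-linear time, and there are logarithmically many levels.
\item Converting the result to base $|\Gamma|$ and dropping the leading $1$ gives $f(w)$.
\end{itemize}
The same integer automaton also gives item 1 directly, since each input letter adds only $O(1)$ digits.
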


One can invent functions which satisfy the three conditions in the above
theorem, but which are not computed by deterministic two-way automata with output, see
Example~\ref{ex:not-regular-but-continuous-over-finite-fields}. However, all
known examples of such functions  are artificial, and none can be computed by
protocols (or any known transducer models).  The proof of
\cref{thm:evidence-for-the-conjecture}, which is given in
Section~\ref{sec:string-outputs}, uses a linear representation of strings as
matrices, and then applies \cref{thm:field-domain} about protocols with field
outputs. In fact, this technique suggests an alternative approach to regularity,
which connects string-to-string functions with the better understood case of
string-to-field functions. This approach is discussed in
Section~\ref{sec:continuity}.

Our second argument in favour of the \cref{conj:regular-continuous} is that we
can prove it in the special case when the output alphabet is unary, i.e.~it has
only one letter. This is the content of
Section~\ref{sec:unary-output-alphabet}. When the output alphabet is unary, the
output strings are commutative, i.e.~the order of letters is irrelevant; this commutativity is essential to our proof of the conjecture in the unary output case.  The proof is based on well-quasi orders, plus some results on weighted automata.

\paragraph*{Related work.} One of the consequences of the Myhill-Nerode theorem
for string-to-Boolean functions, or the Fliess Theorem for string-to-number
functions, is the existence of  canonical devices. There have been several
attempts to generalise this to string-to-string functions, with a special
emphasis the canonical devices. Before recalling this work, we observe that our
approach seems to go in a  different direction. Although we think of
Conjecture~\ref{conj:protocol-regular-string-to-string} as being a machine
independent characterisation, it does not necessarily  yield canonical devices.
In particular, our proof of the conjecture for unary alphabets does not yield a
canonical device.

Here is a summary of results on canonical devices for string-to-string
transducers: they have been proposed for subsequential functions~\cite[Théorème
1.1]{choffrut1977}, rational functions in~\cite[Theorem
1]{reutenauerSchutzenberger1991},  and for rational functions on infinite
words~\cite[Section 4]{canonicalRational2018}. A certain drawback of this line
of work is that: (a) the canonical devices are relative to a given automaton
model, which does not help in choosing one model over another; and (b) the
``canonical'' devices are not truly unique, since they depend on extra
parameters, such as the output delay for subsequential functions or the
lookahead for rational ones. Let us now move to the larger class of regular
functions, which is the subject of our conjecture. Here,  canonical model are
unknown, and the only known way to recover them uses non-standard semantics,
called origin semantics~\cite[Theorem
1]{bojanczykTransducersOriginInformation2014}. Another result of this kind,
which is a machine independent characterisation of the regular functions that
does not yield canonical devices, see \cite[Theorem
3.2]{bojanczykTitoRegular23}, is also implicitly based on origin semantics.
Finally, for the polyregular functions, the situation is of course even harder,
and the only known results concern a unary output alphabet~\cite[Section
IV]{Zpolyreg23}.

\subsection{Infinite input alphabets}
\label{sec:intro-infinite}
Our final group of results is about languages over infinite input alphabets. This is a departure from the previous setting, where  the input alphabet was always finite. Following the standard approach in automata theory, we assume that letters can only be compared for equality. Formally, we only want to consider languages that are invariant under permutations of the
alphabet, i.e.
\begin{align*}
w \in L 
\quad \iff \quad
\pi(w) \in L \qquad \text{for every permutation $\pi$ of the alphabet}.
\end{align*}
An example of such a language is ``all letters are different'', but not ``the letters are in the increasing order''.
 As mentioned earlier in the introduction, there is a rich literature on automata for such languages, see the surveys~\cite{neven2003power,segoufin2006automata,bojanczykOrbitFiniteSetsTheir2017} or the lecture notes~\cite{bojanczyk_slightly}.  The  relevant automata models typically use registers to store some letters from the input, so that they can be compared to later letters. Essentially any automaton model for finite alphabets can be lifted this way to infinite alphabets~\cite[Figure 1]{neven2003power}, and there is even a systematic way to do this, which is based on the theory of orbit-finite sets~\cite[Chapter 2]{bojanczyk_slightly}. Unfortunately, after this lifting, previously equivalent models become non-equivalent.  This sad situation is illustrated in~\cref{fig:automata-infinite-alphabets}, which describes seventeen non-equivalent automata models for infinite alphabets; all of these models collapse to the regular languages when considered for finite alphabets. Equally sadly, there are almost no results in the literature on infinite alphabets that prove non-trivial equivalences of models. The only known cases of this kind are about the  weakest of the available models, namely orbit-finite monoids~\cite{bojanczykNominalMonoids2013}, which are known to be equivalent in expressive power to a certain variant of \mso~\cite[Theorems 4.2 and 5.1]{DBLP:journals/corr/ColcombetLP14}, and also to single-use register automata~\cite[Theorem 6]{bojanczykstefanski2020}.
This situation desperately calls for some unifying principles. 

Since protocols have successfully identified important models in the previous
cases, we try to see  what happens in the case of an infinite input alphabet.
When extending protocols to infinite input alphabets, we adapt them as follows:
(1) messages can contain input letters; (2)  input letters can only be compared for
equality. Condition (2) is formalised by saying that the execution of the
protocol is invariant under permutations of the input alphabet, similarly to
the languages that we consider. The protocols  work their magic once again, and
they point to (as we conjecture)  one of the models in the literature. Before
revealing this model, let us briefly compare protocols to the two most prominent models 
for infinite alphabets.

\begin{myexample}[Deterministic too weak]\label{ex:reg-det-too-weak} We begin by considering a popular deterministic automaton model for infinite alphabets, namely i.e. \emph{deterministic register automata} \cite[Definition~3]{kaminskiFiniteMemoryAutomata1994}. This model is defined formally in Section~\ref{sec:infinite-alphabets}, but roughly speaking it is nondeterministic a finite automaton that is additionally equipped  with a fixed number of registers, which can be used to store input letters and compare them for equality with later input letters.  For example, the automaton can store the first input letter in a register, and then compare it with all later letters, thus recognising the language ``the first letter appears elsewhere in the word''. 

    A protocol can simulate any deterministic register automaton, using the same idea as for finite alphabets,
    i.e. Alice sends the intermediate state, including the register values, to Bob, who 
    continues the run on his side. To see that the inclusion is strict, consider the language
    ``the last letter appears elsewhere in the word'', which is the reverse of the language described in the previous paragraph. It is computed by a protocol, 
    in which Bob checks the condition on his side, and sends the last letter to Alice, so that she can verify
    that it does not appear on her side. On the other hand, it is a well-know fact that this language
    is not recognised by a deterministic register automata, as it proves that deterministic register automata 
    are not closed under reverse \cite[Examples~4~and~8]{kaminskiFiniteMemoryAutomata1994}. 
\end{myexample}

\begin{myexample}[Nondeterministic too strong]
    \label{ex:reg-ndet-too-strong} Consider the nondeterministic variant of the automata from the previous paragraph~\cite[Definition~1]{kaminskiFiniteMemoryAutomata1994}. This model is already too strong for the protocols, as attested by the language
    ``some letter appears twice'', which can be computed by an automaton, see~\cite[Example~1]{kaminskiFiniteMemoryAutomata1994}, but not by  a protocol. The intuitive reason for the negative result is that if neither Alice or Bob sees a repetition in their parts of the input string, then they should exchange all their letters to check for cross-part repetitions, which cannot be done in a constant number of messages -- 
    a detailed proof is given in \cref{sec:infinite-alphabets}. 
\end{myexample}

Which automaton model, if any, corresponds to protocols? As explained in the previous two examples, deterministic register automata are too weak, while nondeterministic register automata are too strong.
We conjecture (in Conjecture~\ref{conj:protocols-unambiguous}), that the winner is a seemingly  unexpected candidate, namely unambiguous register automata~\cite[Section 5]{colcombet2015unambiguity}. This is the special case of  nondeterministic register automata, in which for every input string there is at most one accepting run.
We discuss this conjecture in \cref{sec:infinite-alphabets}, and provide evidence in its favour. We start with an actual proof of one implication, namely:
\begin{align*}
\text{protocol} \quad \impliedby \quad \text{unambiguous automaton}.
\end{align*}
Contrary to  previous variants of this implication, the proof is non-trivial -- the usual construction does not work, because the automaton is nondeterministic. One  interesting phenomenon is that, in the case of infinite input alphabets, the interactive multi-round nature of the protocols becomes essential, and protocols cannot be reduced to one round, as was the case for finite input alphabets. In our proof of the implication $\impliedby$, we design a protocol where  the two parties  progressively eliminate the uncertainty about letters used in the run of the automaton, until the unique accepting run is identified or its existence disproved.
The proof uses a variant of the sunflower lemma.

Therefore,  the content of the conjecture is -- as in previous cases --  the other implication, namely that protocols can be simulated by unambiguous register automata. We provide evidence for this implication, using the recently developed theory of orbit-finite vector spaces~\cite{orbitFiniteVectorTheoretics}. We show that every function computed by a protocol can be computed by a weighted automaton with registers. This is almost like an unmabiguous automaton, except that some runs might have negative weights, and the weights always cancel out to give a final result that is either $0$ or $1$. In particular, the functions computed by protocols are computable, which is not a priori clear from the model. Along the way, we develop some new theory, in particular an orbit-finite generalisation of the Fliess Theorem.

\section{Boolean outputs}
\label{sec:boolean-domain}

In this section, we formally describe our model of computation for the simplest
output domain, namely the Booleans, and we prove that it defines exactly the
regular languages. (As mentioned in the introduction, this result was already shown in~\cite{hauser1989}.)  The definition that we describe, see
Definition~\ref{def:two-party-protocol-boolean} below, has minor differences
with respect to the informal description from the introduction. The messages
are not necessarily bits, but they belong to some finite set, which is fixed in
advance before the input is known. Also,  the two parties send their messages
in parallel in each round. These generalisations do not change the expressive
power of the protocol (they might influence the number of rounds), but they
will be useful in later sections, when we consider restrictions and
generalisations. 

\begin{definition}[Boolean protocol]
    \label{def:two-party-protocol-boolean}
   A \intro{Boolean protocol}
   is given by the following ingredients: 
  \begin{enumerate}
    \item a finite input alphabet $\Sigma$;
    \item a number of rounds $k \in \set{1,2,\ldots}$;
    \item message spaces for Alice and Bob, which are finite sets $Q_A$ and $Q_B$;
    \item for each round $i \in \set{1,\ldots,k}$, two strategies
    \begin{align*}
    \myoverbrace{\sigma_A^i : \myunderbrace{\Sigma^*}{Alice's \\ local\\ string} \times \myunderbrace{Q_B^{i-1}}{message \\ history}  \to \myunderbrace{Q_A}{new\\ message}}{stategy for Alice in the $i$-th round}
    \qquad \text{and} \qquad 
        \myoverbrace{\sigma_B^i : \myunderbrace{\Sigma^*}{Bob's \\ local\\ string} \times \myunderbrace{Q_A^{i-1}}{message \\ history}  \to \myunderbrace{Q_B}{new\\ message}}{stategy for Bob in the $i$-th round};
    \end{align*}
    \item an output function of type $(Q_A \times Q_B)^{k} \to \set{\text{yes, no}}$.
  \end{enumerate}
\end{definition}
\AP
Given an input string $w \in \Sigma^*$ and a split $w = w_1 w_2$ into two strings,  which are called the \emph{local
strings} of Alice and Bob, respectively, the  protocol is run as follows. There
are $k$ rounds. In each round, both parties send messages, and therefore after
$i$ rounds are played, the communication history contains $i$ messages sent by
Alice and $i$ messages sent by Bob.  In round  $i \in \set{1,\ldots,k}$,  each
of the two parties  looks at their local string and the $i-1$ messages sent by
the other party in the previous rounds (only the messages sent by the other
party are needed, since the party knows their own messages). Based on this
information, each party uses their strategy to produce a new message. At the
end of the protocol, the output function is used to determine the value of the
function, based on all messages in  the communication history. We say that the
protocol computes a language $L \subseteq \Sigma^*$ if for every string $w$ and every split $w = w_1 w_2$, the output of the protocol tells us if  $w$
belongs to the language. This corresponds to the split invariance condition
that was discussed in the introduction.  Also, the reader will recognise the
restriction on the total number of bits (this is bounded by the number of
rounds times the logarithm of the size of the message spaces), and the
non-uniformity (there is no restriction on the strategies of Alice and Bob). By
non-uniformity, the first message sent by Alice could contain an answer to some
undecidable problem. However, as we will see, the split invariance restriction
will make it impossible to use this information, since the protocol can only
compute regular languages, as stated in the \cref{thm:boolean-domain} from the
introduction, which we now recall.

\booleandomain*

\begin{proof}
\AP
  We begin with the easier right-to-left implication, which says that the protocol can compute every regular language. If the language is  regular, then it  is recognised by a deterministic finite automaton, say  with state space $Q$. To compute the language, we can use a
  \intro{one-round protocol} (as we will see in a moment, this is not a coincidence, since all protocols can be reduced to one round). Alice sends the state in $Q$ of the automaton after reading her local string, and Bob sends the dependency $Q \to \set{\text{yes, no}}$ which says how Alice's state determines acceptance. Once these two pieces of information are known, we can apply the function from Bob's message to the state in Alice's message to determine the output.

  The rest of this proof is devoted to the left-to-right implication, i.e.~to showing that every language computed by the protocol is regular.  We begin by reducing to one round. 
  \begin{lemma}\label{lem:one-round-reduction-boolean}
    For every protocol, there is an equivalent one-round protocol. 
  \end{lemma}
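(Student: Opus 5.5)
The idea is to exploit that both message spaces and the number of rounds are finite. In the one-round protocol, each party sends its \emph{entire strategy profile} on its local string, so that the referee can simulate every possible interaction. Concretely, fix Alice's local string $w_1$. Her behaviour in the original $k$-round protocol is determined by the tuple of functions
\begin{align*}
\sigma_A^i(w_1, \cdot) : Q_B^{i-1} \to Q_A, \qquad i \in \set{1,\ldots,k}.
\end{align*}
There are only finitely many such tuples, since each is a function between finite sets. Let $\hat Q_A$ be the finite set of all these tuples, and define $\hat Q_B$ symmetrically from Bob's strategies $\sigma_B^i(w_2,\cdot) : Q_A^{i-1} \to Q_B$.

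The one-round protocol has message spaces $\hat Q_A$ and $\hat Q_B$. Alice's strategy maps $w_1$ to her tuple $(\sigma_A^i(w_1,\cdot))_{i}$, and Bob does the same with $w_2$. Neither party needs a message history, since there is only one round. The new output function takes a pair of tuples and simulates the original run round by round. Round $1$ applies both functions to the empty history. In round $i$ one applies Alice's $i$-th function to Bob's messages from rounds $1,\ldots,i-1$ already computed, and Bob's $i$-th function to Alice's earlier messages. This reconstructs the full communication history in $(Q_A\times Q_B)^k$, to which the original output function is then applied.

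It then remains to check, by induction on $i$, that the simulated history coincides with the actual history of the original protocol on the split $w = w_1 w_2$. This holds because at each round the original strategies depend only on the local string and on the other party's earlier messages, which is exactly the data the simulation uses. The output of the new protocol on any split therefore equals the output of the old one on the same split. Split invariance and the computed language carry over directly.

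I expect no real obstacle here. The only point requiring care is the finiteness of $\hat Q_A$ and $\hat Q_B$: it follows from $|Q_A|,|Q_B|$ and $k$ being finite, which gives the bound $|\hat Q_A| \le \prod_{i=1}^k |Q_A|^{|Q_B|^{i-1}}$. The exponential blow-up in message size is irrelevant for expressive power.
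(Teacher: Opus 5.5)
Your proposal is correct and is essentially the paper's argument. The paper has each party send a single causal function $(Q_B)^k \to (Q_A)^k$, whose $i$-th output coordinate depends only on the first $i-1$ inputs; this is just a repackaging of your tuple $(\sigma_A^i(w_1,\cdot))_{i}$, and the two functions are then combined round by round exactly as you describe, with your induction and finiteness bound supplying details the paper leaves implicit.
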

  \begin{proof}
    The general idea is that instead of engaging in interactive communication, each party sends the dependency of their message upon the unknown messages of the other party. Suppose that we are  Alice. The sequence of messages that we send will depend on our local string, and  the messages sent by Bob. Once the local string is fixed, this dependency is captured by a function of type 
    \begin{align*}
    (Q_B)^k \to (Q_A)^k,
    \end{align*}
    which satisfies the following \emph{causality} constraint: the $i$-th coordinate of the output depends only on the first $i-1$ input coordinates.
    Instead of waiting for Bob's messages, Alice sends this function. At the same time, Bob sends an analogous function of type 
    \begin{align*}
    (Q_A)^k \to (Q_B)^k,
    \end{align*}
    which describes the dependency of his messages. Due to the causality constraints, the two functions combine to create a unique output in $(Q_A \times Q_B)^k$, which can be used to determine the output of the protocol.
  \end{proof}

The proof of the above lemma incurs an exponential cost in the size of the message spaces. This is of little concern to us, since we only care about the protocol having a constant number of rounds.
  To complete the proof of the theorem, we use the  Myhill-Nerode Theorem.  Indeed, consider the strategy of  Alice: 
  \begin{align*}
  \sigma_A : \Sigma^* \to Q_A.
  \end{align*}
  For all we know, this function could be non-computable. However, it classifies all local strings into finitely many categories, with one category for each message in $Q_A$. Furthermore, if two strings $w_1$ and $w'_1$ are in the same category, then they are equivalent in the following sense: 
  \begin{align}\label{eq:myhill-nerode-equivalence}
  w_1 w_2 \in L \Leftrightarrow w'_1 w_2 \in L 
  \qquad \text{for every $w_2 \in \Sigma^*$.}
  \end{align}
  The equivalence described above is the same equivalence as in the Myhill-Nerode Theorem. In particular, equivalence classes of this equivalence are the same as states of the minimal deterministic automaton. Since the number of possible messages in $Q_A$ is finite, it follows that the minimal automaton is finite, and therefore the language is regular.  (Observe that we have shown that the language has at most $Q_A$ equivalence classes of Myhill-Nerode equivalence, which gives an upper bound on the size of the minimal automaton recognizing the language.)
\end{proof}

In \cref{thm:boolean-domain}, we have seen that protocols with a constant number of bits exchanged define exactly the regular languages.  Before continuing, let us briefly discuss what happens if a small, but nonconstant, number of bits is allowed (the observations in this discussion were suggested by Katzper Michno). Already by sending $\log n$ bits, where $n$ is the input length, one can compute any property of the length of the input string, including undecidable properties. This is because $\log n$ bits are enough to get the length of the input string. For one-round protocols, there is nothing between $\log n$ bits and a constant number. Indeed, as mentioned at the end of the proof of \cref{thm:boolean-domain}, the minimal number of bits that can be sent by Alice in a one-round protocol for a string of length $n$ is 
\begin{align*}
\log( \myunderbrace{\text{number of Myhill-Nerode equivalence classes for strings of length at most $n$}}{this is called the state complexity of the language} ).
\end{align*}
For a regular language the state complexity is constant. For a non-regular language, the state complexity cannot be smaller than $n$. This can be proved similarly to the Morse-Hedlund, see~\cite[Theorem 1.3]{ormes2016extender} for a generalisation which talks about multidmensional strings. We do not discuss this further, since our focus is on regular languages (and their generalisations to other output domains that will be dicussed in later sections), in which only a constant amount of communication is needed.
\section{Beyond Boolean outputs}
\label{sec:beyond-boolean-outputs}

In the previous section, we considered Boolean outputs. In this section, we
generalise protocols to account for an arbitrary \kl{output domain}. The inputs
remain unchanged -- they will always be strings in this paper.  For the
output domain, we use a very general notion, namely a set with some
operations. 
\begin{definition}[Output domain]
  \AP
    An \intro{output domain} consists of: 
    \begin{enumerate}
        \item an underlying set $\domain$; together with
        \item a family of \emph{operations}, each one having  type $\domain^n \to \domain$ for some $n \in \set{0,1,\ldots}$.
    \end{enumerate}
\end{definition}
An output domain is the same thing as a (non-indexed) algebra, in the sense of universal algebra~\cite[p.5]{hobby1988structure}. 
The output domain will typically  be infinite. In principle, the family of operations might be infinite as well,
although any protocol will only use finitely many operations.
By abuse of notation, we use the same symbol $\domain$ to denote the output domain and its underlying set, with the operations being implicit. Here are the output domains that will be studied in this paper: 
\begin{itemize}
    \item \intro{Boolean domain}. The set is $\set{\text{true, false}}$, and there are no basic operations. (We could add basic operations, such as the Boolean operations $\lor,\land$ and $\neg$, but this will not affect the expressive power of our protocol, so we choose to have no operations.)
    \item \intro{Field domain}. The set is a field, such as the rationals or reals, and there are two operations for addition and multiplication. This is not one domain, but a family of domains, with one for each field. Division is not included as an operation.
    \item \intro{String domain}. The set is $\Gamma^*$ for some finite alphabet $\Gamma$, and the operation is string concatenation.
\end{itemize}

In the protocol, the output value will be constructed in a constant number of
steps, by using operations from the output domain. We will not distinguish
between the operations that are in the output domain, and other operations
that are derived by composing them, as described in the following definition.

\begin{definition}[Term operation]\label{def:term-operations}
    Consider an output domain $\domain$. An operation 
    \begin{align*}
    t : \domain^n \to \domain
    \end{align*}
    is called a \intro{term operation} if it can be obtained by applying the operations in the domain to variables $x_1,\ldots,x_n$. Each variable can be used multiple times, or not at all.  We write 
    \begin{align*}
    \domain^n \termop \domain
    \end{align*}
    for the set of term operations with $n$ arguments.
\end{definition}

\begin{myexample}\label{ex:identity-term}
    Even if the output domain has no operations, the identity operation $\domain \to \domain$ is allowed, since it is defined by a term that consists of just a variable $x_1$.
\end{myexample}

\begin{myexample}
  If the operations are $+$ and $\times$, then after applying the usual distributivity laws,  \kl{term operations} are the same as polynomials with natural coefficients, such as 
\begin{align*}
x_1x_2^2 + x_1x_2x_3^3  + \myunderbrace{2x_1}{same as $x_1 + x_1$}.
\end{align*}
If the output domain is the \kl{Boolean domain}, which has no operations, then the only kind of \kl{term operation} is a single variable, e.g.~$x_2$.  In the case of a \kl{string domain}, a \kl{term operation} is some concatenation of the variables, such as $x_1 x_3 x_1 x_2$.
\end{myexample}

\AP
We now generalise the \kl[boolean protocol]{protocol} to cover functions 
\begin{align*}
f : \Sigma^* \to \domain,
\end{align*}
where $\domain$ is some possibly infinite output domain. As in the Boolean
version of the protocol, the output value is constructed by Alice and Bob, as a
result of an exchange of a constant number of messages. Each message has two
parts, which are called the \intro{signal part} and the \intro{output part}. The
signal part consists of a finite amount of information, and is used to exchange
information between the two parties as in the \kl{Boolean protocol}. The \kl{output part}
is a list of elements from the  output domain, and  is meant to be part of the
output value. As in the Boolean case, the above definition slightly deviates
from the informal description in the introduction. In particular, a message can
contain $d$ elements of the output domain. This does not affect the expressive
power of the protocol, but it might affect the number of rounds, and the above
definition will be more convenient later one, where we consider one-round
protocols. Another important difference is that the operations from the output
domain are only applied once at the end of the protocol. This, again does
affect the expressive power, since the two parties can wait with their
operations until all \kl[signal part]{signals} have been exchanged.  

\begin{definition}\label{def:two-party-protocol-general}
  A \intro{two-party protocol}
   is given by the following ingredients: 
  \begin{enumerate}
    \item an \kl{output domain} $\domain$;
    \item a finite input alphabet $\Sigma$;
    \item a \intro{number of rounds} $k \in \set{1,2,\ldots}$;
    \item \kl{signal spaces} for Alice and Bob, which are finite sets $Q_A$ and $Q_B$;
    \item a \intro[protocol dimension]{dimension} $d \in \set{0,1,\ldots}$;
    \item for each round $i \in \set{1,\ldots,k}$, a \intro{strategy}
    \begin{align*}
    \myoverbrace{\sigma_A^i : \myunderbrace{\Sigma^*}{Alice's \\ local\\ string} \times \myunderbrace{Q_B^{i-1}}{history of \\ signals \\ from Bob}  \to \myunderbrace{Q_A \times \domain^d}{new\\ message}}{stategy for Alice in the $i$-th round}
    \qquad 
        \myoverbrace{\sigma_B^i : \myunderbrace{\Sigma^*}{Bob's \\ local\\ string} \times \myunderbrace{Q_A^{i-1}}{history of \\ signals \\ from Alice}  \to \myunderbrace{Q_B \times \domain^d}{new\\ message}}{stategy for Bob in the $i$-th round}
    \end{align*}
    \item an \intro[protocol output function]{output function} of type \begin{align*}
    (Q_A \times Q_B)^{k} \to (\domain^{2dk} \termop \domain).
    \end{align*}
  \end{enumerate}
\end{definition}

\AP
The \kl{protocol} is executed on a pair of strings $(w_1,w_2)$. In each round, each
of the two parties  sends a message (which consists of a signal and some
elements of the output domain) that is based on their local string and the
history of signals coming from the other party. After all $k$ rounds have been
executed, the joint signal history of both players is used, by the output
function, to determine a term operation. This operation is then applied to the
output history, yielding the final result of the protocol. As in the Boolean
case, we are interested in protocols that are \intro{split invariant}, which means that
for every string $w \in \Sigma^*$, the same output is produced for every
possible decomposition $w = w_1 w_2$. Such protocols compute a function of type
$\Sigma^* \to \domain$. 

\begin{myexample}[Boolean domain]
    Consider the Boolean output domain. In this case, the distinction between
    \kl{output values} and \kl{signals} is irrelevant, since the output values
    can be sent as signals. To produce the output, one of the parties sends it in a message, and the output function is a single variable $x_i$ that copies it, as in Example~\ref{ex:identity-term}. Therefore, the general protocol coincides with the
    \kl{Boolean protocol} from the previous section, and  can only compute regular
    languages.  The same remarks apply for a general but finite output domain
    -- a function $f : \Sigma^* \to \domain$ can be computed by a \kl{protocol} if
    and only if for every $d \in \domain$, the inverse image $f^{-1}(d)$ is a
    regular language.
\end{myexample}

Other examples of output domains are fields and strings. These were mentioned in the introduction in Sections~\ref{sec:intro-field}
and~\ref{sec:intro-strings},  and will be discussed at length in
Sections~\ref{sec:field-domain} and~\ref{sec:string-outputs}.

We have little say to say about \kl{protocols} in the case of a general output
domain. The  only result that we have at this level of generality is a
reduction to one-round protocols, similarly to the \kl[Boolean
protocol]{Boolean case}.

\begin{lemma}
\label{lemma:one-round-reduction-general}
  Every \kl{protocol} is equivalent to a one-round \kl{protocol}.
\end{lemma}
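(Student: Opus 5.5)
We follow the proof of Lemma~\ref{lem:one-round-reduction-boolean}: each party sends, in a single round, the whole dependency of its messages on the other party's signals. The new feature is the \kl{output part}. It cannot be read, but it also never influences any signal: the strategies only see the signal history. So a party can send its output-domain elements for every possible signal history at once, and the output function selects the relevant ones at the end.

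\textbf{Step 1: signals.} Fix a $k$-round protocol with signal spaces $Q_A,Q_B$ and dimension $d$. For a local string $w_1$, Alice's behaviour on signals is a causal function $Q_B^{k} \to Q_A^{k}$: the $i$-th output coordinate depends only on the first $i-1$ input coordinates, i.e.\ on $\sigma^i_A(w_1,\cdot)$ projected to $Q_A$. Let $Q'_A$ be the finite set of such causal functions, and define $Q'_B$ symmetrically. In the new one-round protocol, Alice's signal is her causal function and Bob's signal is his. As in the Boolean case, the two causal functions determine a unique joint signal history $h \in (Q_A\times Q_B)^k$. This is computed by induction on rounds, using causality.

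\textbf{Step 2: outputs.} For each round $i$ and each history $u\in Q_B^{i-1}$, Alice's original strategy yields a vector in $\domain^d$. She concatenates all of these into one vector of dimension $d' = d\sum_{i=1}^{k}|Q_B|^{i-1}$. Bob does the same with $Q_A$; we pad to a common dimension if needed. The new output function takes the pair of signals $(f_A,f_B)$ and computes the history $h$. It then takes the term $t = $ (old output function)$(h)$ and precomposes it with the projection picking, for each round $i$, the coordinates of Alice's vector indexed by $(i,\text{Bob's first } i-1 \text{ signals in } h)$, and symmetrically for Bob. Precomposing a term operation with a renaming of variables is again a term operation, since variables may be used repeatedly or not at all (Definition~\ref{def:term-operations}). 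So the output function has the required type $(Q'_A\times Q'_B) \to (\domain^{2d'}\termop\domain)$.

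\textbf{Step 3: correctness.} For every split $w=w_1w_2$, the new protocol returns exactly the term and the argument values of the original run. Hence it produces the same output, and split invariance is inherited.

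The only delicate point is Step 2: making sure the black box restriction is respected. It is, because only selection of variables is used, never inspection of domain values. The rest is bookkeeping.
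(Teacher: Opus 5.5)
Your proposal is correct and takes essentially the same approach as the paper. The paper sends, in the single round, the other party's-signal-indexed table of domain elements (functions $(Q_B)^k \to \domain^{dk}$) together with the causal signal dependency from the Boolean case, and lets the output function select the relevant entries. Your indexing by prefixes $u \in Q_B^{i-1}$, the padding to a common dimension, and the remark that variable renaming preserves term operations just make explicit details that the paper's sketch leaves implicit.
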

\begin{proof}
  The proof of \cref{lem:one-round-reduction-boolean} generalises to 
  arbitrary output domains with minor modifications. Here we need to take care of elements of the output domain that are sent in messages. This kind of dependecy similarly captured by functions of type:
  \begin{align*}
    (Q_B)^k \to \domain^{dk}
  \end{align*}
  for Bob, and
  \begin{align*}
    (Q_A)^k \to \domain^{dk}
  \end{align*}
  for Alice. These functions can be included in the first message that each party sends, and therefore the same construction as in the proof of \cref{lem:one-round-reduction-boolean} applies.
\end{proof}

Slightly ahead of time, we mention that the reduction to one-round protocols will no longer be valid for infinite input alphabets, which will be studied in Section~\ref{sec:infinite-alphabets}. As we will see, in that case  the interactive nature of the protocols
will be essential.

\section{Field outputs}
\label{sec:field-domain}
In this section, we discuss functions where the output domain is a field, equipped with addition and multiplication. We prove that protocols have  exactly the same expressive power as weighted automata.
We begin by recalling the notion of weighted automata.

\paragraph*{Weighted automata.} \AP A \kl{weighted automaton} is a device that
is used to compute a function from strings to a field (more generally, a
semiring, but we consider the case of fields here). This model was originally
introduced by \schutz~\cite{schutzenberger1961definition}. A weighted automaton is defined like a nondeterministic automaton, except that the transitions have weights in the field, and instead of choosing subsets of initial and final states, we also have assignments of weights. A weighted automaton makes sense not only for fields, but also for semirings, so we define it that way. 
\begin{definition}
    [Weighted automaton] 
    \label{def:weighted-automaton-nondeterministic}
    \AP
    A \intro[wa-nfa]{weighted automaton} over a semiring $\domain$ consists of a finite input alphabet $\Sigma$,  a finite set of states $Q$, and functions: 
    \begin{align}
        \label{eq:weight-functions}
    \myunderbrace{I : Q \to \domain}{initial}
    \quad
    \myunderbrace{F : Q \to \domain}{final}
    \quad
    \myunderbrace{\Delta : Q \times \Sigma \times Q \to \domain}{transitions}.
    \end{align}
\end{definition}
A run of this automaton is defined in the usual way: it is a sequence of
transitions, one for each input letter, such that consecutive transitions
agree on the connecting states. The weight of a run is the product of: (1) the
initial weight of its source state; (2) the weights used by its transitions;
and (3) the final weight of its target state. If the semiring is
non-commutative, the order of multiplication is  important, and transitions
are multiplied in the order corresponding to the input string. For an input
string, the output of the automaton is the sum of weights of all runs over
this automaton (sum is always commutative).

The main result of this section is that our protocol is equivalent to weighted automata, as stated in the following theorem from the introduction, which we now recall:
\fielddomain*


As we will see in Example~\ref{ex:non-commutative-semirings} below, the theorem is false for general semirings that are not fields. It is possible that the theorem extends to commutative semirings; there is some evidence supporting this conjecture (for example, the supports of the relevant functions appear to be regular), but it remains unproven. The problem with such generalisations is that our proof uses the  Fliess Theorem, which is only known  for fields.  Before proving the theorem in
Section~\ref{sec:proof-of-thm-field-domain}, we return to the issue of
division, which was already discussed in Example~\ref{ex:division}.

\begin{myexample}[Division, continued]\label{ex:division-continued}
    Because it is undefined for zero, division is not a total operation, and
    therefore technically speaking it does not fall into our framework. We
    could, however try to incorporate it, by making the two parties responsible
    for avoiding division by zero. Under this framework, we could use a
    \kl{protocol} to compute the function $1/|w|$ (a better choice would be
    $1/(|w|+1)$, since it would avoid problems with the empty string). As we have
    discussed in Example~\ref{ex:division}, such a function cannot be computed
    by a \kl{protocol} that uses only addition and multiplication. We do not know
    what functions can be computed if division is also allowed.
\end{myexample}

\begin{myexample}[Semiring outputs]
    \label{ex:non-commutative-semirings} 
    In this example we show that for semirings which are not fields, the \kl{protocol} need not be equivalent to \kl{weighted automata}. The implication 
    \begin{align*}
    \text{protocol} \quad \impliedby \quad \text{weighted automaton}
    \end{align*}
    in \cref{thm:field-domain}, as we will see in a moment, holds for any
    semiring, and therefore the problematic implication is the other one. Here
    is an example where it fails. Let $\domain$ be  the free (non-commutative)
    idempotent semiring generated by two letters $a$ and $b$. Elements of this
    semiring are finite sets of words in $\set{a,b}^*$, such as 
    \begin{align*}
    \set{3ab, 5ba, 7aab}
    \end{align*}
    The addition operation is multiset union, 
    and the multiplication operation is concatenation of words, 
    extended to sets in the natural way, as illustrated on this example:
    \begin{align*}
    \set{a,b}\cdot \set{a,b} = \set{aa, ab, ba, bb}.
    \end{align*}
    \kl{Weighted automata} over this semiring are the same as the 
    rational relations~\cite[Chapter IX]{Eilenberg74}. 
    On the other hand, a protocol can define string-to-$\domain$ functions
    that are not rational. This is witnessed already by functions that produce 
    singleton sets (call these singleton functions), which can be seen as 
    functions of type $\Sigma^* \to \set{a,b}^*$. For example, consider 
    the singleton version of the  reverse function, i.e.
    \begin{align*}
    w \mapsto \set{\text{reverse of $w$}} \in \domain.
    \end{align*}
    This function can be computed by a \kl{protocol}, using the same idea as in 
    Example~\ref{ex:reverse-duplicate}. This function, however, is not
    a rational relation, and therefore it is not computed by a weighted automaton over $\domain$. 

    For protocols over this semiring where the outputs are always singletons, we can connect with the results on string-to-string functions that will be discussed in Section~\ref{sec:string-outputs}.
In this case, all messages sent during the protocol must be singletons (this is because once a non-singleton is produced, it can never be turned into a singleton). Therefore, the operation $+$ can never be used in a non-trivial way, and thus the protocol can only use multiplication. This means that it coincides with the protocols with outputs that are strings with concatenation, as discussed in Section~\ref{sec:string-outputs}. According to Conjecture~\ref{conj:protocol-regular-string-to-string}, the singleton functions are therefore exactly the \kl{regular functions}. It is worth mentioning that, in view of \cref{thm:field-domain}, two‑way weighted automata over fields do not have greater expressive power than one‑way weighted automata: every two‑way weighted automaton over a field can be simulated by a protocol over the same field, and again by \cref{thm:field-domain} that protocol can be simulated by a one‑way weighted automaton. The example above shows that this equivalence can fail for general semirings. But similar to one direction of the proof of \cref{thm:field-domain}, protocols over a general semiring can simulate two‑way weighted automata over the same semiring, but the converse is not clear for us. We conjecture that, for general semirings, protocols correspond to two‑way weighted automata; however, we currently have no evidence for.

\end{myexample}

\begin{myexample}[Equality tests]
\label{ex:equality-tests}
In this example, we discuss an extension of the \kl{protocol} which allows for
equality tests, similarly to the algebraic group
model~\cite{fuchsbauer2018algebraic}. Clearly, equality tests cannot be
completely unrestricted. Otherwise, in the presence of a countable output
domain (which is the case for all protocols studied in this paper), the
receiver could compare the message with all possible values one by one, until
the correct one would be identified. This would  invalidate the \kl{black box
discipline}. A reasonable restriction is to allow a constant number of equality
tests for each message; this constant can also be brought down to one, by
possibly sending more copies of the same message. The resulting protocol would
be able of complementing a weighted automaton $\Aa$, in the following sense:
\begin{align*}
w \in \Sigma^* 
\quad \mapsto \quad 
\begin{cases}
    1 & \text{if $\Aa(w) =0$}\\
    0 & \text{otherwise}.
\end{cases}
\end{align*}
This form of complementation is undesirable from the point of view of decidability. For example, language equivalence is undecidable for \kl{weighted automata} that are complemented in this way~\cite[Theorem 4.9]{bojanczyk_automata_2025}. 
Since we strive for protocols that describe ``regular'' functions, and such functions should be decidable, we avoid equality tests.
\end{myexample}

\begin{myexample}[Wrong output domains]\label{ex:wrong-output-domains}
 This discussion of equality tests from Example~\ref{ex:equality-tests} also
 explains why we should not expect results about regularity that work for any
 \kl{output domain}. For example, if we would extend the field domain with a unary
 complementation operation 
 \begin{align*}
 x 
 \quad \mapsto \quad 
 \begin{cases}
    1 & \text{if $x =0$}\\
    0 & \text{otherwise},
\end{cases}
 \end{align*}
 then our protocols could recover the undecidable model discussed in the
 previous paragraph.  Of course, one can come up with even more obviously wrong
 \kl{output domains}, such as a domain that consists of Turing machines with certain
 evaluation operations. We do not know where the dividing line is between
 ``right'' and ``wrong'' \kl{output domains}.
\end{myexample}

\subsection{Proof of \cref{thm:field-domain}}
\label{sec:proof-of-thm-field-domain}
\AP
We now return to the proof of \cref{thm:field-domain}. The right-to-left
implication says that every \kl{weighted automaton} can be simulated by a \kl{protocol}.
This is proved essentially in the same way as in the \kl[Boolean protocol]{Boolean case}, and the proof is valid for all semirings and not just fields. 
Suppose that
the function is computed by a \kl{weighted automaton}, which has state space $Q$. To every input string $w \in \Sigma^*$, we can associate a $Q \times Q$ matrix over the semiring. This matrix is defined by 
\begin{align*}
M[p,q] = \sum_\rho \text{product of weights of transitions used by $\rho$},
\end{align*}
where $\rho$ ranges over runs of the weighte automaton that start in state $p$, read the input string $w$, and end in state $q$. This map is a homomorphism from strings to matrices, i.e.~the matrix corresponding to a string $w_1 w_2$ is obtained by multiplying the matrices corresponding to $w_1$ and $w_2$. In the \kl{protocol}, Alice sends
the matrix  which corresponds to her local string, and Bob sends the  matrix
which corresponds to  his local string. These matrices are multiplied using the
semiring operations, and then multiplied with vectors that represent the initial and final weights of states. This
protocol has one round and is \kl{signal-free}, i.e.~no information is conveyed
using signals.

The rest of this proof is devoted to the left-to-right implication,
i.e.~showing that every function computed by a \kl{protocol} is computed by a
\kl{weighted automaton}. As in the Boolean case, we will do a sequence of
reductions, such that the \kl{protocol} becomes more and more restrictive. In
particular, we will show that the \kl{protocol} can be reduced to a version that has
one-round and is \kl{signal-free}.

\subsubsection{Reduction to a scalar product protocol}
\label{sec:reduction-to-scalar-product-protocols}
\AP
In the first step, we show that each \kl{protocol} can be constrained to have a
special form, which has one round and is \kl{signal-free}. This protocol uses only
the scalar product,  as explained in the following definition. 

\begin{definition}[Scalar product protocol] \label{def:scalar-product-protocol}
    Assume that the \kl{output domain} is a field.
    A \intro{scalar product protocol} is defined as follows.
    First, each of the two parties uses their local string to 
    produce a vector of field elements, 
    of some fixed dimension $d$, as expressed by two functions: 
    \begin{align*}
    \sigma_A, \sigma_B : \Sigma^* \to \domain^d.
    \end{align*}
    Next, the output is defined to be the scalar product of the two vectors. 
\end{definition}

This \kl[scalar product protocol]{protocol} has the same power as \kl{general
protocols}. Furthermore, the proof works not only for fields, but also for the more general case of commutative semirings (these are semirings where both addition and muliplication are commutative, in contrast to general semirings, where only addition is assumed to be commutative). The assumption that the output domain is a field, and not just a commutative semiring, will be used at a later stage, when we apply the Fliess Theorem.

\begin{lemma}\label{lem:scalar-product-reduction}
  Assume that the \kl{output domain} is a commutative semiring. 
  If a function is computed by a \kl{protocol}, 
  then it is computed by a \kl{scalar product protocol}.
\end{lemma}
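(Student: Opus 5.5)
First, by \cref{lemma:one-round-reduction-general}, we may assume the protocol has one round. Alice's strategy then maps her local string $w_1$ to a pair $(q_A, \bar a) \in Q_A \times \domain^d$, and Bob's maps $w_2$ to $(q_B,\bar b) \in Q_B \times \domain^d$. The output is $t_{q_A,q_B}(\bar a,\bar b)$, where $t_{q_A,q_B}$ is the term operation selected by the output function. Over a commutative semiring every term operation is a polynomial with natural-number coefficients. Multiplication commutes, so each monomial factors as a product $m(\bar a)\cdot m'(\bar b)$, with $m$ a monomial in Alice's variables and $m'$ a monomial in Bob's. Hence each of the finitely many polynomials $t_{q_A,q_B}$ can be written as
\begin{align*}
t_{q_A,q_B}(\bar a,\bar b) = \sum_{j=1}^{N} m_j(\bar a)\cdot m'_j(\bar b),
\end{align*}
where the monomials $m_j, m'_j$ are drawn from a fixed finite list common to all pairs $(q_A,q_B)$. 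A natural coefficient $c$ is handled by repeating the term $c$ times, and constant monomials are the empty product $1$, which any semiring has.

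Next we eliminate the signals by indexing coordinates with pairs of signals. The new dimension is $D = |Q_A|\cdot|Q_B|\cdot N$, with coordinates indexed by triples $(p,r,j)\in Q_A\times Q_B\times\{1,\ldots,N\}$. Alice's new vector $\sigma'_A(w_1)$ has coordinate $(p,r,j)$ equal to $m^{p,r}_j(\bar a)$ if $p=q_A$, and $0$ otherwise. Here $m^{p,r}_j$ denotes the $j$-th Alice-monomial in the decomposition of $t_{p,r}$, padded with zero terms if needed. Bob's vector $\sigma'_B(w_2)$ has coordinate $(p,r,j)$ equal to $m'^{p,r}_j(\bar b)$ if $r=q_B$, and $0$ otherwise. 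In the scalar product $\sum_{p,r,j}\sigma'_A(w_1)_{p,r,j}\cdot\sigma'_B(w_2)_{p,r,j}$, a summand can be nonzero only when $p=q_A$ and $r=q_B$. Using $0\cdot x = 0$ and $0+x=x$, which hold in a semiring, the sum collapses to $t_{q_A,q_B}(\bar a,\bar b)$. So for every split, the scalar product protocol produces the same output as the original protocol. Split invariance transfers automatically, and the new protocol computes the same function.

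The only delicate step is the factorisation of each term into a sum of products of an Alice-part and a Bob-part. This is exactly where commutativity of multiplication is used: in a non-commutative semiring a monomial like $a_1 b_1 a_2$ does not split, which matches \cref{ex:non-commutative-semirings}. Besides that, we must check that distributivity lets us expand every term into a sum of monomials, which holds in any semiring, and that the zero and one constants are available. The rest of the argument is bookkeeping.
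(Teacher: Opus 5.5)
Your proof is correct and follows the paper's route. You reduce to one round, eliminate signals using coordinates that are set to $0$ unless they match the actual signal, and use commutativity to split each monomial into an Alice part and a Bob part. The only difference is that you fuse the paper's Steps 2 and 3 into one construction; the paper first adds $0/1$ indicator coordinates with the term $\sum_{q_A,q_B} x_{q_A} y_{q_B}\, t_{q_A,q_B}(\bar x,\bar y)$ and then factors monomials, and composing those two steps gives exactly your vectors indexed by $(p,r,j)$.
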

\begin{proof}
    The proof is a sequence of reductions, 
    where more and more conditions are imposed on the protocol.  
    
    \paragraph*{Step 1. One-round protocol.} 
    The first step is to reduce the protocol to a one-round protocol. 
    This is done using \cref{lemma:one-round-reduction-general}.

 \paragraph*{Step 2. Signal-free protocol.}  \AP 
 We say that a protocol is
 \intro{signal-free} if both of the sets $Q_A$ and $Q_B$ have one element each.
 In other words, the signals do not convey any information, and the only
 messaging activity consists of sending elements of the \kl{output domain}. In a
 \kl{signal-free protocol}, the concept of rounds is irrelevant, since the behaviour
 of one party is not influenced by the communication from the other party. The following claim shows that every protocol can be made signal-free, even if the semiring is not necessarily commutative. 


 \begin{claim}
    \label{claim:trivial-messages}
    Assume that the \kl{output domain} is a semiring, not necessarily commutative. 
    Then every one-round protocol is equivalent to a \kl{signal-free protocol}.
 \end{claim}
 \begin{proof} 
    Consider a one-round protocol. Without loss of generality, we assume that
    both signal spaces $Q_A$ and $Q_B$ are the same space $Q$. (We can always
    use the union of two signal spaces for both parties.) Assume that each of
    the parties sends $d$ semiring elements in the protocol. In other words, the
    protocol works as follows:
    \begin{enumerate}
        \item Based on her local string, Alice chooses a message $(q_A,\bar x) \in Q \times \domain^d$;
        \item Based on his local string, Bob chooses a message $(q_B,\bar y) \in Q \times \domain^d$;
        \item Based on the signals $q_A$ and $q_B$, a term operation  with $2d$ variables is chosen, call it $t_{q_A,q_B}$, and the output is obtained by applying this term operation to $(\bar x, \bar y).$
    \end{enumerate}
    To prove the claim, we need to show that the protocol can be adapted so
    that always the same term operation is chosen, i.e.~there is no dependence
    of this term operation on the signals $q_A$ and $q_B$. This way the signals
    can be eliminated. To do this, we increase the \kl{protocol dimension} from $d$ to $d +
    |Q|$. 

    This means that for each possible signal $q \in Q$, each party sends a
    semiring element corresponding to this signal. This element will be either $0$ or $1$, which are elements that need to exist in every semiring (see~\cite[p.7]{handbook2009} for a formal definition of semirings). The idea is that instead of
    sending a signal $q \in Q$, each party will set the corresponding semiring
    element to $1$, and the remaining semiring elements to $0$. The corresponding
    \kl{term operation} is then 
    \begin{align*}
    \sum_{\substack{q_A \in Q \\ q_B \in Q}} \myoverbrace{x_{q_A} \cdot y_{q_B}}{variables corresponding \\ to the messages $q_A$ and $q_B$, } \cdot t_{q_A,q_B}(\bar x, \bar y).
    \end{align*}
    When evaluating this \kl{term operation},
    the summands that do not correspond to the intended message $(q_A,q_B)$
    will be eliminated, since they will contain a variable that is set to $0$. 
    Only the summand corresponding to the intended message will be used,
    and thus the correct output will be produced. 
 \end{proof}

 \paragraph*{Step 3. Scalar product.}
 In the previous step, we have reduced the \kl{protocol} to a special case, where
 Alice and Bob send vectors, call them $\bar x, \bar y \in \domain^d$, and then
 some fixed \kl{term operation} $t$ with $2d$ variables is applied to them. To
 complete the proof of the lemma, we show that the term operation can be turned
 into a scalar product. This term operation is a sum of monomials, with each
 monomial being a product of some variables. This part of the proof will work for commutative semirings (the previous step worked for all semirings). 
 
 Consider the monomials in the term
 operation $t$. Since muliplication is assumed to be  commutative, for each monomial, its contribution to the output  is obtained
 by multiplying two numbers: (a) the product of the  variables in the term
 operation that are contributed by Alice; and   (b) the product of the
 variables in the term operation that are contributed by Bob. We can redesign
 the protocol so that for each monomial, Alice sends the contribution (a), and
 Bob sends the contribution (b). In the new protocol, the dimension is the
 number of monomials from the original protocol, and the term operation is a
 scalar product. 
\end{proof}

\subsubsection{From a scalar product protocol to a weighted automaton}
\label{sec:from-scalar-product-protocol-to-weighted-automaton}
\AP
In this section, we complete the proof of \cref{thm:field-domain}, by showing
that \kl{scalar product protocols} can be simulated by \kl{weighted automata}.
Similarly to the \kl[boolean protocol]{Boolean case}, the proof uses a
Myhill-Nerode characterization. In the case of \kl{weighted automata}, this
characterization is called  the Fliess Theorem, which characterizes
functions computed by weighted automata in terms of a certain infinite matrix. It is here where we use the assumption that the output domain is a field, and not just a commutative semiring.

\begin{definition}[Hankel Matrix]\label{def:hankel-matrix}
  \AP
    Let $\domain$ be a field. The \intro{Hankel matrix} of a function 
    \begin{align*}
    f : \Sigma^* \to \domain
    \end{align*}  
    is the matrix where rows are words in $\Sigma^*$, columns are words in $\Sigma^*$, and the entry corresponding to a row $u$ and a column $v$ is $f(uv)$.
\end{definition}

Another perspective on the \kl{Hankel matrix} is that it describes the
\intro{derivatives} of the function $f$. Each row in the \kl{Hankel matrix} can be
seen as a function of type $\Sigma^* \to \domain$, which inputs columns
(i.e.~strings) and outputs the corresponding entries in the Hankel matrix. If
the row corresponds to a word $w$, then this function is
\begin{align*}
v \mapsto f(wv),
\end{align*}
which is called the \intro{left derivative} of $f$ with respect to $w$.
Similarly, the columns of the Hankel matrix describe \intro{right derivatives} of $f$.

The Fliess Theorem~\cite[Theorem 2.1.1]{fliess1974} states that a function 
\begin{align*}
f : \Sigma^* \to \domain
\end{align*}
is computed by a \kl{weighted automaton} if and only if  its \kl{Hankel matrix}
has finite rank, i.e.~its rows (i.e.~the \kl{left derivatives}) are spanned by
a finite subset. (This is equivalent to saying that the columns, or \kl{right
derivatives}, have a finite spanning subset.) Therefore, to complete the proof
of \cref{thm:field-domain}, it is enough to show the following lemma.

\begin{lemma}\label{lem:hankel-finite-rank}
    If a function is computed by a \kl{scalar product protocol},
    then its \kl{Hankel matrix} has finite rank.
\end{lemma}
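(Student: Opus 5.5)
The plan is to exhibit the Hankel matrix of $f$ as a product of two matrices that pass through the finite-dimensional space $\domain^d$; this bounds its rank by $d$. Suppose $f$ is computed by a \kl{scalar product protocol} with functions $\sigma_A, \sigma_B : \Sigma^* \to \domain^d$. By \kl{split invariance}, for every pair of words $u,v \in \Sigma^*$ we can take the split $w = uv$ with $w_1 = u$ and $w_2 = v$. This gives
\begin{align*}
f(uv) = \langle \sigma_A(u), \sigma_B(v) \rangle = \sum_{j=1}^{d} \sigma_A(u)_j \cdot \sigma_B(v)_j .
\end{align*}
In other words, the entry of the \kl{Hankel matrix} in row $u$ and column $v$ equals the scalar product of Alice's vector for $u$ with Bob's vector for $v$.

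Next, I will read this identity row by row. For $j \in \set{1,\ldots,d}$, define the function $g_j : \Sigma^* \to \domain$ by $g_j(v) = \sigma_B(v)_j$. The \kl{left derivative} of $f$ with respect to $u$, namely the row $v \mapsto f(uv)$, is then the linear combination
\begin{align*}
\sum_{j=1}^d \sigma_A(u)_j \cdot g_j .
\end{align*}
Its coefficients $\sigma_A(u)_j$ are field elements, and they depend only on $u$. So every row of the Hankel matrix lies in the span of the $d$ fixed functions $g_1,\ldots,g_d$, and the row space has dimension at most $d$.

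The phrase ``spanned by a finite subset'' in the statement of the Fliess Theorem refers to rows of the matrix themselves. The $g_j$ need not be rows, so one more remark is needed. The rows form a subset of a vector space of dimension at most $d$. Any such subset contains a linearly independent subset of size at most $d$ that spans the same space; for example, take a maximal linearly independent subset of the rows. That subset is a finite set of rows spanning all rows, so the Hankel matrix has rank at most $d$. Together with the Fliess Theorem, this gives a weighted automaton for $f$.

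I do not expect a real obstacle in this lemma. The only step that needs care is the first one: split invariance is exactly what allows every pair $(u,v)$ to be treated as a split of $uv$, so the factorisation holds for every entry of the matrix. Non-computability of $\sigma_A$ and $\sigma_B$ does not matter, since the argument is purely linear-algebraic. The field assumption is used only when invoking the Fliess Theorem and when extracting the finite spanning set of rows.
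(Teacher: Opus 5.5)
Your proof is correct and takes essentially the same approach as the paper. Both arguments use split invariance to write the Hankel entry at $(u,v)$ as the scalar product of $\sigma_A(u)$ and $\sigma_B(v)$, which is linear once one argument is fixed, so the rank is at most $d$; your extra step of extracting a finite spanning set of actual rows is a correct tidying-up of the paper's sub-matrix phrasing.
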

\begin{proof}
  Essentially by definition, the \kl{Hankel matrix} of a function computed by a
  \kl{scalar product protocol} with dimension $d$ can be obtained as a
  sub-matrix of the following matrix: rows and columns are vectors in
  $\domain^d$, and the entries are obtained by taking scalar products. This
  matrix is easily seen to have finite rank, namely $d$, since the scalar
  product  becomes a linear operation once one of the two arguments is fixed.
\end{proof}

\section{String outputs}
\label{sec:string-outputs}
\AP
In this section, we consider the case where the \kl{output domain} is strings over
some finite alphabet. We use the name \intro{string-to-string function} for any
function of type $\Sigma^* \to \Gamma^*$, where both alphabets $\Sigma$ and
$\Gamma$ are finite. For such functions, the \kl{protocols} are assumed to use the
output domain of strings $\Gamma^*$ equipped with concatenation. In the case of
\kl{string-to-string functions}, we conjectured, see
Conjecture~\ref{conj:protocol-regular-string-to-string},  that protocols define
exactly the so-called \kl{regular functions}, which will be formally defined in
Section~\ref{sec:regular-string-to-string-functions} below.
 

\newcommand{\functionclass}[2]{
    \text{#1} & 
\left\{{\text{\begin{minipage}{11cm}
    \small 
#2
\end{minipage}}}\right. 
}

\newcommand{\vertinclusion}[1]{\\ &
\mathrel{\rotatebox[origin=c]{270}{$\subsetneq$}}
\text{\qquad \small(#1)}\\ \\ }

\begin{figure}
    \centering
\begin{align*}
    \functionclass{rational}{
        unambiguous one-way automata with output~\cite[Chapter IX]{Eilenberg74}  =  Eilenberg bimachines~\cite[XI.7]{Eilenberg74} =  compositions of left-to-right and right-to-left sequential functions~\cite[Proposition 7.3]{elgot_relations_1965}   = \mso relabelings~\cite[p.~11]{bloem_comparison_2000} = a certain Myhill-Nerode style characterisation~\cite[Theorem 1]{reutenauerSchutzenberger1991}
    } \\
\vertinclusion{the reverse function is regular but not rational}
\functionclass{regular}{ deterministic two-way automata with output~\cite[Note 4]{shepherdson1959reduction} = streaming string transducers~\cite[Theorems 1,2,3]{alurExpressivenessStreamingString2010} = certain regular expressions for transducers~\cite[Theorem 15]{alur2014regular}  = \mso transductions = a calculus based on combinators~\cite[Theorem 6.1]{bojanczykRegularFirstOrderList2018} = lambda calculus with a certain fold operator~\cite[Theorem VI.1]{polyregular-fold} =   a characterisation in terms of natural transformations~\cite[Theorem 3.2]{bojanczykTitoRegular23} = macro string transducers with linear sized outputs~\cite[Theorem 7.1]{engelfrietMacroTreeTranslations2003} = polyregular functions with linear sized outputs~\cite[Example 11]{polyregular-survey}= transducers with Church encodings of output strings~\cite[Theorem 1.1]{implicit2}  } \\
\vertinclusion{regular functions have linear output size, unlike polyregular ones}
\functionclass{polyregular}{
deterministic pebble automata with output~\cite[p.~235]{engelfriet2002two} = for-transducers~\cite[Theorem 2.1]{polyregular-survey} = compositions of regular functions and squaring~\cite[Theorem 3.2]{polyregular-survey} = a functional programming language~\cite[Theorem 4.1]{polyregular-survey} = \mso interpretations~\cite[Theorem 7]{msoInterpretations} = lambda calculus with a more powerful fold operator~\cite[Theorem V.3]{polyregular-fold} 
} 
\end{align*}
    \caption{Three important classes of string-to-string functions.}
    \label{fig:transducer-classes}
\end{figure}

The content of this section is an extended discussion of this conjecture.
In Section~\ref{sec:regular-string-to-string-functions} we  formally
define the \kl{regular functions} and  prove the implication 
\begin{align*}
\text{protocol} \impliedby \text{regular}.
\end{align*}
The content of the conjecture is therefore the implication 
\begin{align*}
\text{protocol} \implies \text{regular}.
\end{align*}

In Section~\ref{sec:continuity}, we present some evidence for this implication.
We show that \kl{string-to-string functions} computed by \kl{protocols} share
many good properties of the \kl{regular functions}, such as linear output size
and computability. In Section~\ref{sec:unary-output-alphabet}, we present
further evidence for the implication, namely we  prove it in the special case
where the output alphabet has only one letter (the remaining case is two output
letters, since more letters do not change the situation). 

\subsection{Regular string-to-string functions}
\label{sec:regular-string-to-string-functions}
\AP
In this section, we define the class of \kl{regular string-to-string functions}, and
we prove the implication $\impliedby$ in the conjecture. Historically, class of
regular string-to-string  functions was first defined in terms of \kl{deterministic
two-way automata with output}~\cite[Note 4]{shepherdson1959reduction}. Although
numerous equivalent definitions appeared later on, we will use the original
definition.

 \begin{definition}[Two-way automaton]
   \AP
   A \intro{deterministic two-way automaton with output} is
   given by the following ingredients:
    \begin{enumerate}
        \item a finite input alphabet $\Sigma$;
        \item a finite output alphabet $\Gamma$;
        \item a finite set of states $Q$, with an initial state $q_0 \in Q$;
        \item a transition function  
        \begin{align*}
        \delta : 
        \myunderbrace{Q}{old \\ state} \times 
        \myunderbrace{(\Sigma + \{\vdash, \dashv\})}{input letter\\ under  the head} \to  \set{\text{halt}} + (
        \myunderbrace{Q}{new \\ state}
         \times 
         \myunderbrace{\{-1,0,1\}}{head \\ movement} \times 
         \myunderbrace{\Gamma^*}{added \\ output}) .
        \end{align*}
    \end{enumerate}
 \end{definition}

\AP
The automaton works as follows. The input string $w$ is placed on a tape,
with the left end marked by $\vdash$ and the right end marked by $\dashv$.
The automaton starts in state $q_0$, with its head on the left end of the
tape, which contains the marker $\vdash$. In each step, the automaton looks
at its current state and the letter under its head, and based on this
information, it uses the transition function to decide if it halts, or it
continues its computation. In case it continues, it chooses a  new state,
the direction in which it moves its head, and a string over the output
alphabet $\Gamma$, which is appended to the output tape. We assume that the
automaton is always halting, which means that for every input string, the
computation eventually halts. In particular, the computation must be
well-defined, which means that the head never falls off the input by moving
outside the endmarkers.   The semantics of such an automaton is of type
$\Sigma^* \to \Gamma^*$. (For automata which are not necessarily halting,
the function would be partial, since it would be undefined for inputs where
the automaton does not halt.)

    \begin{myexample}[Reverse]
        For each input alphabet $\Sigma$, the reverse function of type
        $\Sigma^* \to \Sigma^*$ is computed by a \kl{two-way automaton}, which
        first moves its head to the end of the string, and then starts copying
        it to the output while moving in the left direction.
    \end{myexample}

    The class of functions computed by \kl{two-way automata} has a remarkable number
    of equivalent descriptions, originating in different fields, including:
    monadic second-order transductions~\cite[Section
    4]{engelfrietMSODefinableString2001}, streaming string
    transducers~\cite[Section 3]{alurExpressivenessStreamingString2010},
    certain kinds of regular expressions~\cite[Section 2]{alur2014regular}, a
    calculus of functions based on  combinators~\cite[Theorem
    6.1]{bojanczykRegularFirstOrderList2018}, a characterisation based on
    natural transformations~\cite[Theorem 3.2]{bojanczykTitoRegular23}. For
    this reason, some authors (starting with Engelfriet and Hoogeboom), use the
    name \emph{regular} for this class of function, with the intended meaning
    being that these functions are the functional analogue of regular
    languages. Although this thesis can be questioned,
    see~\cite{polyregular-survey}, for the purposes of this paper we adopt the
    terminology of Engelfriet and
    Hoogeboom~\cite[p.~217]{engelfrietMSODefinableString2001}, as expressed in
    the following definition.

    \begin{definition}[Regular string-to-string function]
        \label{def:regular-string-to-string}
        A \kl{string-to-string function} is called \intro[regular function]{function}
        if it is computed by a \kl{deterministic two-way automaton with output}.
    \end{definition}

    One good  property of the \kl{regular string-to-string functions} is that
    they are closed under composition~\cite[Theorem
    2]{chytilSerialComposition2Way1977}. In particular,
    Conjecture~\ref{conj:protocol-regular-string-to-string} would imply that
    the same is true for functions computed by \kl{protocols}. Without proving
    the conjecture, we do not see any direct way of proving composition for
    protocols. The  \kl{regular string-to-string functions} have other good
    properties, such as decidable equivalence~\cite[Theorem
    1]{gurariEquivalenceProblemDeterministic1982}, but we do not discuss these
    in more detail, since they seem to have little direct bearing on protocols.

    The following lemma shows one of the implications in the conjecture.

\begin{lemma}\label{lem:from-regular-to-protocol}
    If a \kl{string-to-string function} is \kl[regular function]{function}, 
    then it is computed by a \kl{protocol}.
\end{lemma}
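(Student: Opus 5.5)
We will show that every deterministic two-way automaton with output can be simulated by a protocol. By \cref{lemma:one-round-reduction-general} the number of rounds is irrelevant, so we may use an interactive protocol with a constant number of rounds, chosen to follow the run of the automaton across the boundary between $w_1$ and $w_2$. The key observation is that on the tape $\vdash w_1 w_2 \dashv$, the run of a halting deterministic two-way automaton crosses the boundary between the last position of $w_1$ and the first position of $w_2$ a bounded number of times. Since the run is halting and deterministic, it cannot be at the same boundary in the same state and moving in the same direction twice; otherwise it would loop. Hence the number of crossings is at most $2|Q|$, which is a constant independent of the input.

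The protocol decomposes the run into maximal segments that stay entirely on one side of the boundary. A segment on Alice's side starts either at the initial configuration (on $\vdash$) or when the head enters her part from the right in some state $q$. It ends either by halting, or by crossing to Bob's side in some state $q'$. The segment is entirely determined by $w_1$ and its entry state, and so is the output string produced during it. Symmetrically for Bob, whose segments start when the head enters his part from the left in some state.

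The rounds proceed as follows. Alice simulates the first segment from $q_0$ and sends a signal: either ``halted'', or the exit state $q'$ in $Q$. As the output part she sends the string produced during the segment. Bob, knowing $q'$, simulates his segment starting at his first position in state $q'$, then sends his exit state (or ``halted'') together with his produced string. This alternation continues for at most $2|Q|+1$ rounds. Once halting is signalled, parties send the empty string; the empty string is obtainable as a fresh element, so no operation beyond concatenation is needed.

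In the parallel-round format of \cref{def:two-party-protocol-general}, the inactive party in each round sends a dummy signal and the empty string. The output function, depending on the signal history, is the term that concatenates all the output-domain messages in chronological order. This is exactly the output of the automaton, because the output tape is filled in run order. Split invariance is immediate, since for every split the protocol reproduces the unique run on $w$.

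The only subtle step is the bound on crossings, together with the boundary conventions: the endmarkers belong to the respective sides, and a transition from the last letter of $w_1$ moving right counts as entering Bob's side in the new state. We handle this by stating the crossing bound with respect to the configurations ``head at the last position of $w_1$, moving right, in state $q$'' and ``head at the first position of $w_2$, moving left, in state $q$''. Each such configuration can occur at most once in a halting deterministic run. The cases $w_1$ or $w_2$ empty are covered by letting the endmarker cells play the role of the adjacent position.
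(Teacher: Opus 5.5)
Your proposal is correct and follows the paper's proof essentially verbatim. Both arguments simulate the run through its crossing sequence at the boundary between $w_1$ and $w_2$, bound the number of crossings by $2|Q|$ using determinism and halting, send the crossing state as a signal and the output piece produced in each segment as the output part, and concatenate the pieces at the end. Your handling of the endmarkers, the parallel-round format and the empty strings from the inactive party only spells out details the paper leaves implicit; the appeal to \cref{lemma:one-round-reduction-general} is unnecessary, since multi-round protocols are allowed directly.
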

\begin{proof}
    The two parties can simulate a \kl{two-way automaton with output}. The execution
    of the protocol describes the crossing sequence of the automaton, i.e.~how
    it crosses the boundary between the two local strings of Alice and Bob.
    Here is a picture: \mypic{1} More formally, the crossing sequence is
    defined as follows, given  a split of the input string into two parts $w_1
    w_2$. We run the automaton until the first configuration which is in the
    word $w_2$. Then we run it until the first configuration which is in the
    word $w_1$. We continue this way, with odd-numbered steps describing runs
    inside $w_1$ that end in  configurations from $w_2$, and even-numbered
    steps describing runs inside $w_2$ that end in a configuration from $w_1$.
    The last step is exceptional, since it ends with an accepting
    configuration. The number of steps in a crossing sequence is bounded by twice the number of states, since otherwise the automaton would enter an infinite loop. This bound is the number of rounds in the protocol. In each round,
    the state corresponding to this round is sent as a signal, and the output
    value in the message is the  part of the output string that is produced in
    this  step. At the end of the protocol, the pieces of the output string are
    concatenated. 
\end{proof}

In view of the above lemma, the content of the conjecture is the opposite
implication, namely that every protocol computes are regular function.  The
rest of Section~\ref{sec:string-outputs} is devoted to  evidence for the
opposite implication.

\subsection{Evidence for the conjecture}
\label{sec:continuity}
\AP
In this subsection, we show that the \kl{string-to-string functions} computed by
\kl{protocols} share some  good properties of \kl{regular functions}, such as linear size
outputs and computability. Even computability is not a priori obvious, due to
the non-uniformity of the protocols. These results can be seen as evidence of
the open implication 
\begin{align*}
\text{protocol} \implies \text{regular}
\end{align*}
in the conjecture. To prove these results, we will leverage the results on
\kl{weighted automata} from Section~\ref{sec:field-domain}. The point of
departure is the following lemma, which connects \kl{weighted automata} and
\kl{string-to-string functions} that can be computed in our protocol.

    \begin{lemma}
        \label{lem:postcomposition-weighted-automaton}
        Let $\domain$ be a semiring, and consider two functions
        \[
        \begin{tikzcd}
        \Sigma^* 
        \ar[r,"f"]
        &
        \Gamma^*
        \ar[r,"g"]
        & 
        \domain,
        \end{tikzcd}
        \]
        such that $f$ is computed by a \kl{protocol} (with string outputs) and $g$ is computed by a \kl{weighted automaton}, then the composition  $f;g$ is computed by a protocol (with outputs in $\domain$).
    \end{lemma}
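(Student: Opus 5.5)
The plan is to use the matrix representation of the weighted automaton for $g$. This turns concatenation of strings into multiplication of matrices. The protocol for $f$ can then be run as is, with every string it exchanges replaced by a matrix.

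Let $g$ be computed by a weighted automaton over $\domain$ with state space $Q$. As in the right-to-left direction of the proof of \cref{thm:field-domain}, each string $u \in \Gamma^*$ determines a $Q \times Q$ matrix $M(u)$ over $\domain$. The map $M$ is a monoid homomorphism from $(\Gamma^*, \cdot)$ to $Q\times Q$ matrices under matrix multiplication, and
\begin{align*}
g(u) = I^{T} M(u) F,
\end{align*}
where $I, F$ are the initial and final weight vectors. Now take a protocol for $f$. By \cref{lemma:one-round-reduction-general} we may assume it has one round. Alice sends a signal $q_A$ and strings $x_1,\ldots,x_d \in \Gamma^*$, Bob sends $q_B$ and $y_1,\ldots,y_d$, and the output is $t_{q_A,q_B}(\bar x,\bar y)$. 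Here $t_{q_A,q_B}$ is a term operation of the string domain, that is, a fixed concatenation $z_1 z_2 \cdots z_m$ of variables, possibly with repetitions.

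The new protocol keeps the same signals. Instead of each string $x_j$, Alice sends the $|Q|^2$ entries of $M(x_j)$, and Bob does the same for each $y_j$. The dimension becomes $d|Q|^2$. For signals $(q_A,q_B)$, the new output term is the polynomial expression
\begin{align*}
I^{T}\, M(z_1) \cdots M(z_m)\, F,
\end{align*}
expanded entrywise into sums of products of the transmitted entries. The constants from $I$, $F$ and the transition weights need care, because term operations only combine variables. We handle this by letting one party, say Alice, additionally send the entries of $I$ and $F$ as domain elements. Her strategy may output any fixed values, so the term can multiply by these transmitted copies. The resulting term also respects the order of multiplication, so no commutativity is needed. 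Since $M$ is a homomorphism, the term evaluates to $g(t_{q_A,q_B}(\bar x,\bar y)) = g(f(w))$.

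Split invariance is inherited from the original protocol: for every split the underlying string output is $f(w)$, and so the new output is $g(f(w))$. The only delicate point is the one about constants, and it is resolved by having a party transmit the initial and final weights as ordinary messages. Everything else follows directly from the homomorphism property of $M$.
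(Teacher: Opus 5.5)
Your proposal is correct and follows the paper's proof: you map each output string to its transition matrix under the weighted automaton for $g$, and the parties send these matrices instead of strings, with each concatenation term replaced by the corresponding matrix product. You also spell out two details the paper leaves implicit. First, the initial and final weight vectors are transmitted as ordinary messages, because term operations have no constants. Second, the order of multiplication is preserved, so no commutativity of the semiring is needed.
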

    \begin{proof}
      For each string in $\Gamma^*$, the \kl{weighted automaton} for $g$ has an
      associated matrix over the semiring $\domain$, as explained in the proof at the beginning of Section~\ref{sec:proof-of-thm-field-domain}.
      Similarly to that proof, we can modify the protocol for $f$ to a protocol for $f;g$ where the parties send matrices instead of strings. 
    \end{proof}

\begin{corollary}\label{cor:protocol-field-continuity}
    If the domain $\domain$ is a field, then the conclusion of \cref{lem:postcomposition-weighted-automaton} can be strengthened to say that $f;g$ is computed by a weighted automaton.
\end{corollary}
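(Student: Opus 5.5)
This corollary follows by chaining two results stated earlier. Since a field is in particular a semiring, \cref{lem:postcomposition-weighted-automaton} applies to the composition $f;g$. It yields a \kl{protocol} with outputs in the field $\domain$ that computes $f;g : \Sigma^* \to \domain$. The left-to-right implication of \cref{thm:field-domain} then says that every function $\Sigma^* \to \domain$ computed by a protocol over a field is computed by a \kl{weighted automaton} over that field. Applying it gives a weighted automaton for $f;g$, which is the claimed strengthening.

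One bookkeeping point needs checking. The protocol built in \cref{lem:postcomposition-weighted-automaton} replaces each string message $u \in \Gamma^*$ by the matrix of $u$ under the automaton for $g$, written as $|Q|^2$ field entries. Each concatenation in the term operation of the protocol for $f$ becomes a matrix product. The final matrix is multiplied by the initial and final weight vectors. All of these are term operations in $+$ and $\times$, so the result is a genuine \kl{two-party protocol} in the sense of Definition~\ref{def:two-party-protocol-general}, using only ring operations and no division.

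The protocol is \kl{split invariant}. For every split $w = w_1 w_2$, the protocol for $f$ evaluates to the string $f(w)$. By the homomorphism property of the matrix representation, the modified protocol evaluates to the matrix of $f(w)$, and hence to $g(f(w))$. This value is independent of the split.

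There is no real obstacle. The only step needing care is the observation that matrix products and vector contractions are expressible as fixed term operations, so the hypotheses of \cref{thm:field-domain} are satisfied.
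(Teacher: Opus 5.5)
Your proof is correct and is the paper's argument: apply Lemma~\ref{lem:postcomposition-weighted-automaton} to get a protocol over the field computing $f;g$, then use the protocol-to-weighted-automaton direction of Theorem~\ref{thm:field-domain}. Your bookkeeping paragraphs only re-verify the lemma and are not needed. They contain one small imprecision: the initial and final weight vectors are constants, and constants are not term operations in the sense of Definition~\ref{def:term-operations}, so a party must send them as output elements; this is harmless because strategies are unrestricted.
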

\begin{proof}
    For field outputs, protocols are equivalent to weighted automata, thanks to \cref{thm:field-domain}.
\end{proof}
The above corollary establishes a property of $f$, namely that \kl{weighted automata}
(over a field) are closed under precomposition with $f$. We think that this is
an important property, and therefore we give it a name.

\begin{definition}[Field continuity]
    \label{def:weighted-continuity}
    \AP
    A \kl{string-to-string function} $f : \Sigma^* \to \Gamma^*$ is called
    \intro{field continuous} if functions computed by weighted automata over a
    field are closed under precomposition with $f$.
\end{definition}

The name ``continuous'' is inspired by a
similar terminology that is used in automata theory for functions that preserve
regularity under inverse images, see~\cite[Theorem 4.1]{PinSilva05}
or~\cite[Footnote 2]{continuity20}. For the latter notion, we use the name
\emph{Boolean continuity}.

\begin{definition}[Boolean continuity]
  \AP
  A \kl{string-to-string function} $f : \Sigma^* \to \Gamma^*$ is called
  \intro{Boolean continuous} if preimages of \kl{regular languages} are \kl[regular language]{regular}.
\end{definition}

As we have shown in \cref{cor:protocol-field-continuity}, all
\kl{string-to-string functions} computed by \kl{protocols} are \kl{field continuous}. In
particular, since every \kl{regular string-to-string function} is computed by a
\kl{protocol}, it follows that every regular string-to-string function is field
continuous\footnote{To the best of our knowledge, this is a new result. It can
also be proved directly, without passing through protocols.}. We
conjecture that the converse is also true.

\begin{conjecture}\label{conj:regular-continuous}
  A \kl{string-to-string function} is \kl{field continuous} if and only if it is 
  \kl[regular function]{regular}.
\end{conjecture}

In Example~\ref{ex:quadratic-counterexample} later in this section,  we show that the conjecture becomes false if the
left side is relaxed from \kl{field continuous} to \kl{Boolean continuous}.

Conjecture~\ref{conj:regular-continuous} can be seen as a machine independent
characterisation of the \kl{regular string-to-string functions}. This would be
a very valuable contribution. Almost all known characterisations of the
\kl{regular string-to-string functions} have somewhat lengthy definitions,
based on specific computational models, and it is something of a miracle that
all of these models are equivalent. A possible exception is the
characterisation in~\cite{bojanczykTitoRegular23}, which does not use a machine
model; however that characterisation uses the abstract language of category
theory, and is less elementary than the one in
Conjecture~\ref{conj:regular-continuous}.

As in Conjecture~\ref{conj:protocol-regular-string-to-string}, the content of Conjecture~\ref{conj:regular-continuous} is the left-to-right implication
\begin{align*}
\text{field continuous} \implies \text{regular}.
\end{align*}
Conjecture~\ref{conj:regular-continuous} is stronger than Conjecture~\ref{conj:protocol-regular-string-to-string}, as explained in the following diagram, which shows the known relations between three kinds of string-to-string functions:
\[
\begin{tikzcd}
\text{~~~~regular~~~~}
\ar[d,Rightarrow,shift right=2, "\text{\cref{lem:from-regular-to-protocol}}"']
\\
\text{computed by protocols}
\ar[d,Rightarrow, shift right=2, "\text{\cref{lem:postcomposition-weighted-automaton}}"']
\ar[u,Rightarrow, shift right=2,"\text{Conjecture~\ref{conj:protocol-regular-string-to-string}}"']
\\ 
\text{field continuous} 
\ar[uu,bend right=89, Rightarrow, shift right=2,"\text{Conjecture~\ref{conj:regular-continuous}}"']
\end{tikzcd}
\]

The following theorem gives some evidence for the stronger conjecture,  and
therefore also the weaker one, by showing that the field continuous functions
share some well-known properties of the regular string-to-string functions. 

\evidencefortheconjecture*


\begin{proof}
    For properties~\ref{it:linear-size-outputs}
    and~\ref{it:linear-time-computable}, we embed strings into numbers. An
    output string over alphabet $\Gamma$ can be seen as a number in base
    $|\Gamma|$. To avoid the ambiguity that could result from leading zeros, we
    first prepend the string with the digit 1. Let 
    \begin{align*}
    g : \Gamma^* \to \Nat \subseteq \Rat
    \end{align*} 
    be the corresponding encoding. This encoding can be computed by a \kl{weighted
    automaton} over the field $\Rat$, see~\cite[Lemma 8.10]{bojanczyk_automata_2025}.
    By the assumption on \kl{field continuity}, the
    composition $f;g$ can be computed by a \kl{weighted automaton}. This is a
    \kl{weighted automaton} that works in the field of rationals $\Rat$, but  only
    produces natural numbers on its output. By~\cite[p.
    110]{BerstelReutenauer08} the automaton can be chosen so that it only uses
    integers $\Int$, possibly including negative integers. Summing up, we have
    a \kl{weighted automaton} over $\Int$ that outputs the representation, in base
    $|\Gamma|$, of the output string produced by $g$. We claim that for such an
    automaton, the output number
    \begin{enumerate}
        \item has a linear number of digits;
        \item can be computed in linear time.
    \end{enumerate}
    These two claims yield the corresponding items in the statement of the
    theorem. The first item, about a linear number of digits, is true because
    it is true for every \kl{weighted automaton} over $\Int$. This is because
    applying a fixed linear map can only add a constant number of digits, and therefore each input letter can increase the output by a constant number of digits. For the second item, use the fact that a weighted automaton can be evaluated in linear time. We did not find this result in the literature, so we give a proof sketch below.

    \begin{claim}
        Fix a function $f : \Sigma^* \to \Int$ computed by a weighted automaton over $\Int$. Then $f$ can be computed in linear time (ignoring logarithmic factors).
    \end{claim}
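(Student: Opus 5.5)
We represent the automaton linearly: there are an initial vector $I \in \Int^{1\times Q}$, a final vector $F \in \Int^{Q\times 1}$, and integer matrices $M_a$ for $a \in \Sigma$, such that
\begin{align*}
f(a_1\cdots a_n) = I\, M_{a_1}\cdots M_{a_n} F .
\end{align*}
The naive evaluation multiplies the row vector $I M_{a_1}\cdots M_{a_i}$ by $M_{a_{i+1}}$ one letter at a time. The entries of this vector have $O(i)$ bits, since each step multiplies by constants and adds a constant number of terms, which adds $O(1)$ bits. Each step costs $O(i)$, so the total cost is $O(n^2)$. That is too slow, so I will use divide and conquer on the matrix product instead of a left-to-right scan.

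The plan is to compute the matrix product $M_w = M_{a_1}\cdots M_{a_n}$ by a balanced binary tree of multiplications. Split $w = uv$ with $|u|,|v| \approx n/2$, compute $M_u$ and $M_v$ recursively, and multiply them. The first step is to bound the entry sizes. By the same reasoning as above, every entry of $M_u$ has absolute value at most $c^{|u|}$ for a constant $c$, for instance $c = |Q|\cdot\max|M_a[p,q]|$ plus $1$. Hence the entries of $M_u$ have $O(|u|)$ bits.

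The second step is the cost of one multiplication. Multiplying two $|Q|\times|Q|$ matrices whose entries have $m$ bits takes $|Q|^3 = O(1)$ integer multiplications of $m$-bit numbers, plus additions. With fast integer multiplication (Schönhage--Strassen, or Harvey--van der Hoeven) this costs $O(m \log m)$, up to $\log\log$ factors. The recurrence is therefore
\begin{align*}
T(n) = 2T(n/2) + O(n\log n),
\end{align*}
which solves to $T(n) = O(n \log^2 n)$. This is linear up to logarithmic factors. Finally we compute $I M_w F$, which costs $O(|Q|^2)$ operations on $O(n)$-bit numbers. Converting the result to base $|\Gamma|$, as needed for item~\ref{it:linear-time-computable}, can likewise be done in quasi-linear time by the standard divide-and-conquer radix conversion.

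I expect the main obstacle to be conceptual rather than technical. One has to notice that the sequential evaluation is quadratic because the intermediate values grow linearly, and that balancing the product tree together with fast integer multiplication is what brings the cost down to quasi-linear. The remaining subtlety is to make sure that negative integers cause no problems. They do not: the bound on absolute values works the same way, and sign handling costs $O(1)$ per operation.
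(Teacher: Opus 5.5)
Your proposal is correct and matches the paper's proof. Both evaluate $M_{a_1}\cdots M_{a_n}$ along a balanced binary tree, use the fact that a product of $m$ letter matrices has $O(m)$-bit entries, and rely on fast (FFT-based) integer multiplication so that each of the $O(\log n)$ levels costs quasi-linear time; your recurrence $T(n)=2T(n/2)+O(n\log n)$ is the paper's level-by-level count written differently. The remarks on the final product with $I$ and $F$ and on radix conversion to base $|\Gamma|$ go slightly beyond what the paper spells out, and they are correct.
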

    \begin{proof}
        We use the \textsc{ram} model of computation.  In this model, numbers with $n$ digits can be multiplied in time linear in $n$ (ignoring logarithmic factors), using Fast Fourier Transform~\cite[p.291]{schonhage1971schnelle}. If we would try to evaluate the weighted automaton naively, from left to right, we would be doing a linear number of multiplications, involving numbers with a linear number of bits. This would give a quadratic time algorithm. To get linear time, we use  divide and conquer approach which was suggested to us by Marek Soko{\l}owski, and which is inspired by similar algorithms for multipoint evaluation of polynomials~\cite[Section 10.1]{von2003modern}.

        For each input letter, the weighted automaton has an associated matrix. Therefore, the essence of the  problem is to compute the multiplication of $n$ matrices $
        A_1 \cdots A_n$,
        which are taken from some fixed finite set of square matrices of same dimension. As we have already remarked before, if we multiply $n$ matrices, then each entry in the resulting matrix has a $\Oo(n)$ bits. Instead of multiplying the matrices from left to right, we organise the multiplication in a balanced binary tree, and evaluate this tree bottom-up. At the bottom of this tree, we have $n/2$ multiplications of matrices from the finite set. More generally, in each subtree of height $h$, the resulting matrix has entries with $\Oo(2^h)$ bits, and there are $n / 2^h$ such subtrees. Therefore, assuming that we have evaluted the value of each subtree of height $h-1$, then we can evaluate each subtree of height $h$, by using $\Oo(n/2^h)$ instances of matrix multiplication, with the ocrresponding matrices having entries that have  $\Oo(2^h)$ bits. By using algorithms for addition and multiplication that are linear (ignoring logarithmic factors), the passage from height $h-1$ to height $h$ can be done in linear time (ignoring logarithmic factors). Since the number of heights is  logarithmic, we get the desired result.
    \end{proof}


    We are left with property~\ref{it:regular-preimages}, about \kl{Boolean
    continuity}. This will follow from the special case of \kl{field continuity},
    where the field is the two-element field.  This is because of  the
    following  folklore correspondence between regular languages and weighted
    automata over the two-element field. 
        
        \begin{claim}\label{claim:regular-weighted-automata}
            A language $L \subseteq \Gamma^*$ is regular iff its characteristic
            function $\Gamma^* \to \set{0,1}$ is computed by a \kl{weighted
            automaton} over the two-element field 
        \end{claim}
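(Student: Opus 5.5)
Both directions go through finite-dimensional vector spaces over the two-element field $\mathbb{F}_2 = \set{0,1}$. Weighted automata over $\mathbb{F}_2$ sum run weights modulo $2$, and since $\mathbb{F}_2$ is a field, we can use the Fliess Theorem (applicable over any field) together with the Myhill-Nerode Theorem.

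\emph{Left-to-right.} Suppose $L$ is regular, and fix a deterministic finite automaton for it with state space $Q$. We read this automaton as a weighted automaton over $\mathbb{F}_2$: the initial weight is $1$ on the initial state and $0$ elsewhere, the final weight is $1$ on accepting states and $0$ elsewhere, and $\Delta(p,a,q)=1$ exactly when $q$ is the successor of $p$ under $a$. By determinism, every input string has exactly one run with nonzero weight along transitions, namely the run of the deterministic automaton. Its weight is $1$ if that run ends in an accepting state and $0$ otherwise. The sum over runs is therefore the characteristic function, and no cancellation modulo $2$ can occur because only one run contributes.

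\emph{Right-to-left.} Suppose the characteristic function $\chi_L$ is computed by a weighted automaton over $\mathbb{F}_2$. By the Fliess Theorem, its Hankel matrix has finite rank, say $r$. Alternatively, we can argue directly: the map $w \mapsto$ (row vector of initial weights times the matrix of $w$) sends strings into $\mathbb{F}_2^{Q}$, and $\chi_L(wv)$ depends only on this vector and on $v$. In either formulation, each left derivative $v \mapsto \chi_L(wv)$ lies in a vector space over $\mathbb{F}_2$ of dimension at most $r$ (respectively $|Q|$). Such a space is finite: it has at most $2^r$ (respectively $2^{|Q|}$) elements. Hence there are finitely many distinct left derivatives, i.e.\ the Myhill-Nerode equivalence \eqref{eq:myhill-nerode-equivalence} has finitely many classes, and $L$ is regular.

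I do not expect a serious obstacle. The one point that needs care is that finiteness of the field is essential in the right-to-left direction: finite rank over an infinite field would not bound the number of distinct rows, whereas over $\mathbb{F}_2$ a space of finite dimension is a finite set. With the claim in hand, property~\ref{it:regular-preimages} follows. Given a regular $K \subseteq \Gamma^*$, we take $g = \chi_K$ as a weighted automaton over $\mathbb{F}_2$. By field continuity, $f;g = \chi_{f^{-1}(K)}$ is computed by a weighted automaton over $\mathbb{F}_2$, and the claim then shows that $f^{-1}(K)$ is regular.
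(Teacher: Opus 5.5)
Your proof is correct and is essentially the paper's argument. For regular $\Rightarrow$ weighted, the paper likewise reads a deterministic automaton as a weighted automaton over $\{0,1\}$ in which at most one run contributes. For the converse, the paper simulates a weighted automaton over a finite field by a deterministic automaton whose natural states are your configuration vectors in $\{0,1\}^Q$, so your Myhill--Nerode phrasing is the same idea.

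One small correction to your closing remark: finiteness of the field is not needed for the conclusion. Over any field, a $\{0,1\}$-valued function whose Hankel matrix has rank $r$ has at most $2^r$ distinct rows, since each row is determined by its entries in $r$ columns that span the column space. Only your particular counting step, which counts the elements of an $r$-dimensional space, uses $|\mathbb{F}_2| = 2$.
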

        \begin{proof}
            For the left-to-right implication, we observe that a weighted
            automaton over a finite field can be simulated by a deterministic
            finite automaton. For the other direction, we observe that a
            weighted automaton can count the parity of  the number of runs in a
            finite automaton, and if the automaton is deterministic then the
            number of runs is either zero or one, and thus the parity gives the
            right answer.
        \end{proof}

        In terms of the correspondence from the above claim, preimages of
        regular languages become precompositions of \kl{weighted automata} over the
        two-element field. In particular, regularity is preserved. 
\end{proof}

One could think that already the three properties in the above theorem are not
only necessary for regularity, but also sufficient. This is not the case, as
shown by the following example.

\begin{myexample}[Factorials]
    \label{ex:not-regular-but-continuous-over-finite-fields}
    Consider a \kl{string-to-string function}
    \begin{align*}
    g : \Sigma^* \to \set{a}^*
    \end{align*}
    where both the input and output alphabets are unary. A sufficient condition 
    for \kl{Boolean continuity} of such functions is given in~\cite[Example 2.12]{bojanczykTitoRegular23},
    using \emph{factorials}, i.e.~numbers in the set $\setbuild{n!}{$n \in \Nat$}$. 
    This sufficient condition is that: 
    (a) every output string arises from finitely many inputs; and
    (b) every output string has length that is a factorial. 

    It is not hard to come up with a non-regular function that has properties
    (a) and (b), thus ensuring \kl{Boolean continuity}, and which has furthermore
    linear size outputs and is computable in linear time. For example, the
    function could map an input string $w$ to the longest string of factorial
    length that is shorter than $w$. 
\end{myexample}

\subsection{Unary output alphabet}
\label{sec:unary-output-alphabet}
\AP
In this section, we provide further evidence for
Conjectures~\ref{conj:protocol-regular-string-to-string}
and~\ref{conj:regular-continuous},
 by showing that they are true for output
alphabets with only one letter. 
To prove the conjectures, we will pass through rational functions, which are a fragment of the regular functions. When the output alphabet has only one letter, rational functions and regular functions coincide, which means that the results on rational functions are relevant in this case.

\subsubsection{Rational functions and monotone protocols}
In this section, we use a variant of the  protocols  to characterise exactly the rational string-to-string functions.  This will be done by applying a characterisation of the rational functions due to Sch\"utzenberger and Reutenauer~\cite{reutenauerSchutzenberger1991}, which happens to be ideally matched to our protocol model, subject to a monotonicity restriction. Let us begin by defining the relevant notions.

\paragraph*{Rational functions.} There are several ways of defining the rational string-to-string functions. For the results in this section, the particular choice of definition will not be important, since we will refer to an external result about rational functions. Nevertheless, for the reader's convenience we include a definition, which is based on weighted automata over the semiring of finite sets of words. 

 Consider a nondeterministic automaton with an input alphabet $\Sigma$, together with an output map 
\begin{align*}
\text{output} : 
\myunderbrace{(I + F + \Delta)}{disjoint union of initial states, final states, and transitions}
\to \Gamma^*.
\end{align*}
Using the output map,  we can associate to each accepting  run of the automaton an output string, which is obtained by concatenating the output strings associated with the initial state,  the transitions used in the run, and the  final state. This way, the automaton defines a relation between input and output strings, which maps an input string to the outputs of all possible accepting runs. Such a relation is called a  \emph{rational string-to-string relation}. A \emph{rational string-to-string function} is the special case where each input string is mapped to exactly one output string. 

\begin{myexample}
  The identity function is rational, and so is the function which deletes all $a$'s from the input string. Another example is the function which swaps the first and last letters, and leaves the rest of the input string unchanged. String reversal is not rational; also duplication is not rational.
\end{myexample}

\paragraph*{Monotone protocols.}
A protocol with string-to-string outputs is called \emph{monotone} if the output string is produced by concatenating two strings, the left one produced by Alice, and the right one produced by Bob. One can consider monotone protocols with one or more rounds. The reduction to one-round protocols from Lemma~\ref{lemma:one-round-reduction-general} also works in the presence of the monotone restriction, and therefore we will only talk about one-round monotone protocols. Such a protocol can be described more elementarily as follows. The two players have strategies 
\begin{align*}
    \sigma_A & : \Sigma^* \to Q_A \times (\Gamma^*)^{d},\\
        \sigma_B & : \Sigma^* \to Q_B \times (\Gamma^*)^{d},
\end{align*}
in which they send a signal as well as a tuple of $d$ strings over the output alphabet. Based on the signals received from both players, the output is produced by concatenating two of the output strings, one from Alice and one from Bob. This is formalised by  an output function of type
\begin{align*}
  \text{output} : Q_A \times Q_B \to \set{1,\ldots,d}.
\end{align*}
The output of the protocol is then defined as follows: we apply the output function to the signals produced by both players, which gives us an index $i \in \set{1,\ldots,d}$, and then the output string is obtained by concatenating the $i$-th string from Alice with the $i$-th string from Bob (thus ensuring that Alice's part of the output always comes before Bob's).

\paragraph*{Equivalence of the models.} We now show that the monotone protcols compute exactly the rational string-to-string functions. As it turns out, this equivalence was essentially already proved by Sch\"utzenberger and Reutenauer in~\cite{reutenauerSchutzenberger1991}, albeit in a different language. The corresponding characterisation is stated below.

\begin{theorem}
  \label{thm:monotone-protocols-rational}
  For a string-to-string function $f : \Sigma^* \to \Gamma^*$, the following are equivalent:
  \begin{enumerate}
    \item \label{item:monotone-protocol} $f$ is computed by a monotone protocol;
    \item \label{item:partial-functions} there is a finite family of partial string-to-string functions, indexed by a finite set $I$
\begin{align*}
\set{\alpha_i,\beta_i  : \Sigma^* \to \Gamma^*}_{i \in I}
\end{align*}
such that the (total) function 
\begin{align*}
  (w_1,w_2) \quad \mapsto \quad f(w_1 w_2)
\end{align*}
is equal to  the union of partial functions
\begin{align*}
  \bigcup_{i \in I} \Big( (w_1,w_2) \mapsto \myunderbrace{\alpha_i(w_1) \cdot \beta_i(w_2)}{the concatenation is defined \\  only if both functions are defined} \Big).
\end{align*}
    \item \label{item:rational-function} $f$ is a rational string-to-string function.
  \end{enumerate}
\end{theorem}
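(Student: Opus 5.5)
I will prove the cycle of implications \ref{item:monotone-protocol} $\Rightarrow$ \ref{item:partial-functions} $\Rightarrow$ \ref{item:rational-function} $\Rightarrow$ \ref{item:monotone-protocol}, with the middle step delegated to the characterisation of Sch\"utzenberger and Reutenauer~\cite[Theorem 1]{reutenauerSchutzenberger1991}.

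\emph{From monotone protocols to partial functions.} Take a one-round monotone protocol with strategies $\sigma_A,\sigma_B$ and output function $\text{output} : Q_A \times Q_B \to \set{1,\ldots,d}$. Let the index set be $I = Q_A \times Q_B$. For $i=(q_A,q_B)$, define $\alpha_i(w_1)$ to be the $j$-th string of Alice's tuple, where $j = \text{output}(q_A,q_B)$. This value is defined only when Alice's signal on $w_1$ is $q_A$. Define $\beta_i(w_2)$ symmetrically, defined only when Bob's signal is $q_B$. For each pair $(w_1,w_2)$, exactly one index $i$ has both functions defined, namely the pair of actual signals. On that index the concatenation equals the protocol's output, which by split invariance is $f(w_1w_2)$. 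So the union of partial functions is exactly $(w_1,w_2)\mapsto f(w_1w_2)$, as required.

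\emph{From rational functions to monotone protocols.} For \ref{item:rational-function} $\Rightarrow$ \ref{item:monotone-protocol}, I would use an unambiguous one-way transducer for $f$ (the first entry of \cref{fig:transducer-classes}), with state set $Q$. Alice sends as her signal the set of pairs (initial state, state $q$) such that some run from that initial state reaches $q$ on $w_1$. For each $q$ she also sends the output of one such run, which must be chosen canonically. Bob sends the set of states from which $w_2$ can reach acceptance, together with the corresponding outputs. The output function picks the unique $q$ lying on the unique accepting run. The subtle point is that several partial runs may reach $q$ on Alice's side while only one of them extends to acceptance. This would break determinism, so I would first pass to a transducer that is both unambiguous and trim, and then use a bimachine instead. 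In a bimachine the left state after $w_1$ is determined by a deterministic automaton, and the right state before $w_2$ by a co-deterministic one. Alice then sends her left DFA state together with a table of her outputs indexed by the right states, and Bob does symmetrically. This is exactly monotone with $d = |Q_{\text{left}}|\cdot|Q_{\text{right}}|$. I would take the bimachine route from the start.

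\emph{The main obstacle} is \ref{item:partial-functions} $\Rightarrow$ \ref{item:rational-function}. Here I would appeal to \cite[Theorem 1]{reutenauerSchutzenberger1991}, which states that a function is rational if and only if the family of its "translated" left residuals $w_2 \mapsto f(w_1w_2)$, taken up to a left-prefix offset, is finite in a suitable sense. The work is to match our condition to their hypothesis. For fixed $w_1$, the function $w_2 \mapsto f(w_1w_2)$ is a gluing, over the finitely many indices $i$ with $\alpha_i(w_1)$ defined, of $\beta_i$ restricted to the domain carved out by those indices, each prefixed by the string $\alpha_i(w_1)$. Hence only finitely many "shapes" of residuals occur, and they differ only by the finitely many prefixes $\alpha_i(w_1)$. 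I expect this to coincide with their bounded-residual condition after some bookkeeping: one has to partition $\Sigma^*$ according to the set $\setbuild{i}{\alpha_i(w_1)\text{ defined}}$ and check that the definedness of the $\beta_i$ partitions $w_2$ consistently. If their formulation requires regularity of the domains, I would not try to derive that here and would instead take their theorem in the non-uniform form in which it is stated.
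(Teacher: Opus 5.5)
Your proof is correct, but you close the cycle differently from the paper.

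\textbf{What the paper does.} It uses item 2 as a hub. It cites Sch\"utzenberger--Reutenauer for the equivalence of items 2 and 3 in both directions, and proves $1 \Leftrightarrow 2$ by unfolding the definitions. For $2 \Rightarrow 1$, Alice signals the set of indices $i$ with $\alpha_i(w_1)$ defined and sends those values, and Bob does the same. The output function then picks any index in the intersection. This choice is consistent because the union of the partial functions is a function.

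\textbf{Where you agree with it.} Your $1 \Rightarrow 2$ is the paper's argument. Your $2 \Rightarrow 3$ is the same citation. The residual-matching bookkeeping you anticipate is not needed, because the paper cites the criterion directly in the form of item 2.

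\textbf{Where you differ.} The real difference is $3 \Rightarrow 1$, which you prove directly from Eilenberg's bimachine characterisation. Alice signals her left state $l$ and sends her output as a function of the right state at the cut. Bob does the symmetric thing. Indexing by pairs $(l,r)$ makes the common index select matching strings, which is exactly what the monotone format requires.

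\textbf{What each route buys.} Your route gives a self-contained proof of the easy half of Sch\"utzenberger--Reutenauer: your protocol immediately yields the family $\alpha_{(l,r)},\beta_{(l,r)}$ of item 2. So the only external input you need is the hard, Nerode-style direction $2 \Rightarrow 3$, plus the bimachine theorem. The paper's route is shorter, because both protocol-related implications become definitional.

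\textbf{Side remark on your abandoned first idea.} The unambiguous-transducer approach actually works, and your worry about it is unfounded. In a trim unambiguous transducer, at most one run from an initial state reaches a given state $q$ on $w_1$. Otherwise, extending both runs by any word leading from $q$ to acceptance would give two accepting runs. The symmetric statement holds on Bob's side. So Alice can signal her set of reachable states and send the unique outputs. Bob signals his set of co-reachable states with his unique outputs. The output function picks the unique state in the intersection, and no canonical choice of run is ever needed.
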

\begin{proof}
The equivalence of items \ref{item:partial-functions} and \ref{item:rational-function}, which is the essence of the theorem, was proved  by Sch\"utz-enberger and Reutenauer in~\cite[p. 674] {reutenauerSchutzenberger1991}. It remains to show the equivalence of items \ref{item:monotone-protocol} and \ref{item:partial-functions}, which is a simple unfolding of the definitions.

\begin{itemize}
  \item 
    \ref{item:partial-functions} $\implies$ \ref{item:monotone-protocol}. Suppose that $f$ satisfies item~\ref{item:partial-functions}. 
    As her signal, Alice says which of the functions $\alpha_i$ are defined on her part of the input, and sends all the corresponding outputs. Bob does the same on his side. Based on the signals, they can determine an index $i$ such that both $\alpha_i$ and $\beta_i$ are defined, and then they can concatenate the corresponding outputs to produce the final output. This shows that item~\ref{item:monotone-protocol} holds.
  \item
    \ref{item:monotone-protocol} $\implies$ \ref{item:partial-functions}. Suppose  that $f$ satisfies item~\ref{item:monotone-protocol}.
    Based on the monotone protocol from the assumption, we define the indexing set  $I$ to be  $Q_A \times Q_B$, i.e.~all possible combinations of signals of Alice and Bob. For each such pair of signals $(q_A,q_B) \in Q_A \times Q_B$, we define two partial functions. The first one  is defined on those strings $w_1$ for which Alice's signal is $q_A$, and in this case it produces the $i$-th string from Alice, where $i$ is obtained by applying the output function to $(q_A,q_B)$. The second one is defined similarly on Bob's side. It is easy to see that the conditions from item~\ref{item:partial-functions} are now satisfied.
\end{itemize}
\end{proof}

Before continuing, let us remark that a similar approach could be taken to characterise the subsequential functions, which are the subclass of rational functions that corresponds to deterministic automata. In this case, the corresponding restriction on the protocols is  monotonicity, plus the extra requirement  is that the output function does not depend on the signal sent by Bob, i.e.~the only signal information travels from left to right. Using a characterisation of subsequential functions from~\cite{choffrut1977}, one can show that such protocols compute exactly the subsequential functions. The details are left to the reader.

\subsubsection{Proofs of Conjectures~\ref{conj:protocol-regular-string-to-string}
and~\ref{conj:regular-continuous} for a unary output alphabet}
Using the above characterisation of rational functions, we can now prove Conjectures~\ref{conj:protocol-regular-string-to-string} and~\ref{conj:regular-continuous} for unary output alphabets. The first conjecture is an immediate corollary. 
\begin{theorem}\label{thm:unary-string-to-string}
  Conjecture~\ref{conj:protocol-regular-string-to-string} holds for string-to-string functions where the output alphabet has only one letter. In other words, for an output alphabet $\Gamma$ with one letter, protocols compute exactly the regular string-to-string functions. 
\end{theorem}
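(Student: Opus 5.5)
The plan is to show that, for a unary output alphabet, every protocol can be turned into a \emph{monotone} protocol. Then \cref{thm:monotone-protocols-rational} shows the function is rational, and rational functions are regular. The converse implication, regular $\Rightarrow$ protocol, is already \cref{lem:from-regular-to-protocol}, so only the direction protocol $\Rightarrow$ regular needs an argument.

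First, by \cref{lemma:one-round-reduction-general}, assume the protocol has one round. Alice sends $(q_A,\bar x)\in Q_A\times(\Gamma^*)^d$ and Bob sends $(q_B,\bar y)\in Q_B\times(\Gamma^*)^d$. The output is $t_{q_A,q_B}(\bar x,\bar y)$, where $t_{q_A,q_B}$ is a term operation, i.e.\ some concatenation of the variables, possibly with repetitions. Since $\Gamma=\set{a}$, the monoid $\Gamma^*$ is isomorphic to $(\Nat,+)$, and in particular it is commutative. Hence any concatenation of variables can be rearranged so that all of Alice's variables come first and all of Bob's variables come last. Concretely, $t_{q_A,q_B}(\bar x,\bar y)=u\cdot v$, where $u$ concatenates Alice's variables with their multiplicities in $t_{q_A,q_B}$, and $v$ does the same for Bob's.

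Next, verify item~\ref{item:partial-functions} of \cref{thm:monotone-protocols-rational} directly. Take the index set $I=Q_A\times Q_B$. For each pair $(q_A,q_B)$, let $\alpha_{q_A,q_B}(w_1)$ be defined exactly when Alice's signal on $w_1$ is $q_A$, with value Alice's part $u$ of the rearranged term evaluated on her vector $\bar x$. Define $\beta_{q_A,q_B}(w_2)$ symmetrically from Bob's part $v$.

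For any split $w=w_1w_2$, exactly one index has both functions defined, namely the pair of signals actually sent. For that index, $\alpha_{q_A,q_B}(w_1)\cdot\beta_{q_A,q_B}(w_2)$ equals the protocol's output. Split invariance of the protocol then gives that this value is $f(w_1w_2)$. Equivalently, Alice can precompute her part of every term $t_{q_A,q_B}$ for all possible $q_B$ and send it, which yields a monotone protocol in the sense of item~\ref{item:monotone-protocol}.

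Finally, \cref{thm:monotone-protocols-rational} gives that $f$ is rational. Rational functions are regular: a rational function is computed by an unambiguous one-way transducer, which belongs to the first class of \cref{fig:transducer-classes}, and that class is contained in the regular functions. This closes the equivalence.

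The only point that needs care is the commutativity rearrangement, i.e.\ checking that splitting each term into an Alice part and a Bob part is legitimate once signals are fixed. This is exactly where the unary hypothesis is used, and it is what fails for two or more output letters.
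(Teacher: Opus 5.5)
Your proposal is correct and follows the paper's proof: commutativity of the unary output monoid turns any protocol into a monotone one, and \cref{thm:monotone-protocols-rational} then gives rationality. The one small difference is that you take the regular $\Rightarrow$ protocol direction from \cref{lem:from-regular-to-protocol}, so you only need the general inclusion of rational functions into regular functions, whereas the paper invokes the folklore fact that the two classes coincide for unary outputs.
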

\begin{proof}
  When the output alphabet has only one letter, then: (a) rational string-to-string functions coincide with the regular string-to-string functions, which is a folklore result; and (b) monotone protocols coincide with general protocols, since there is no difference between concatenating two strings in different orders. Therefore, the equivalence from \cref{thm:monotone-protocols-rational} implies the theorem.
\end{proof}

In the rest of this section, we use the above theorem to derive some further observations, and in particular to prove Conjecture~\ref{conj:regular-continuous} for unary output alphabets. The first observation is about protocols that use integers with addition and have non-negative outputs.


\begin{theorem}
  \label{thm:string-to-number-protocols-nat}
  Consider a protocol with output domain $(\Int,+)$ where all outputs are non-negative. Then the same function can be computed by a protocol with output domain $(\Nat,+)$.
\end{theorem}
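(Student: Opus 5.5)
The plan is to bring the $(\Int,+)$-protocol into a normal form in which the output is one integer from Alice plus one integer from Bob. Then shift these two integers by a constant so that both become non-negative. Since the protocols are non-uniform, this constant may be hard-wired into the strategies.

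\emph{Step 1: normal form.} By \cref{lemma:one-round-reduction-general}, we may assume the protocol has one round. So Alice sends $(p,\bar x)\in Q_A\times\Int^d$ depending on $w_1$, Bob sends $(q,\bar y)\in Q_B\times \Int^d$ depending on $w_2$, and the output is $t_{p,q}(\bar x,\bar y)$. Over $(\Int,+)$ a term operation is a linear combination of the variables with natural coefficients. Hence $t_{p,q}(\bar x,\bar y)=a_{p,q}(\bar x)+b_{p,q}(\bar y)$, where $a_{p,q}$ involves only Alice's variables and $b_{p,q}$ only Bob's.

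Define partial functions on $\Sigma^*$:
\begin{align*}
\alpha_{p,q}(w_1)&=a_{p,q}(\bar x) \quad\text{when Alice's signal on $w_1$ is $p$},\\
\beta_{p,q}(w_2)&=b_{p,q}(\bar y) \quad\text{when Bob's signal on $w_2$ is $q$}.
\end{align*}
Split invariance says exactly that, whenever Alice's signal on $w_1$ is $p$ and Bob's signal on $w_2$ is $q$, we have $f(w_1w_2)=\alpha_{p,q}(w_1)+\beta_{p,q}(w_2)$.

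\emph{Step 2: shifting by a constant.} Fix signals $(p,q)$. Let $U_p$ be the set of strings on which Alice signals $p$, and $V_q$ the set of strings on which Bob signals $q$. If either set is empty, the pair never occurs and can be ignored.

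Otherwise, fix some $u\in U_p$. For every $v\in V_q$, non-negativity of outputs gives $\beta_{p,q}(v)\ge -\alpha_{p,q}(u)$. So $\beta_{p,q}$ is bounded below on $V_q$, and being integer-valued it attains a minimum $m_{p,q}$ at some $v^*\in V_q$. Define
\begin{align*}
\alpha'_{p,q}=\alpha_{p,q}+m_{p,q}, \qquad \beta'_{p,q}=\beta_{p,q}-m_{p,q}.
\end{align*}
Then $\beta'_{p,q}\ge 0$ on $V_q$ by the choice of $m_{p,q}$. Also $\alpha'_{p,q}(u)=\alpha_{p,q}(u)+\beta_{p,q}(v^*)=f(uv^*)\ge 0$ for every $u\in U_p$. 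The sums, and hence the outputs, are unchanged.

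\emph{Step 3: the new protocol.} In the new protocol over $(\Nat,+)$, Alice sends her signal $p$ together with the tuple $(\alpha'_{p,q}(w_1))_{q\in Q_B}$ of natural numbers. Bob symmetrically sends $q$ and the tuple $(\beta'_{p',q}(w_2))_{p'\in Q_A}$. The output term for signals $(p,q)$ is the sum of Alice's $q$-th coordinate and Bob's $p$-th coordinate, which is a valid term operation over $(\Nat,+)$.

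Alice can compute her coordinates from $w_1$ alone: she knows $p$, and the constants $m_{p,q}$ may be hard-wired into her strategy because strategies are arbitrary functions. The same holds for Bob. Split invariance is inherited from the original protocol, since the output value is the same.

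The only step needing real care is Step 2, namely that the minimum $m_{p,q}$ exists. This uses non-emptiness of $U_p$ and the fact that integers bounded below attain a minimum. The rest is bookkeeping, in particular checking that after the one-round reduction the term operations split additively into an Alice part and a Bob part.
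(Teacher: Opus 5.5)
Your proof is correct. It shares its setup with the paper's proof but removes the negative values by a different mechanism.

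The common part is this. Both arguments reduce to one round and arrange that, for each signal pair, the output is one Alice number plus one Bob number. The paper does this by a ``powerset construction''; you do it by letting Alice send $\alpha_{p,q}(w_1)$ for every $q$. Both then observe that fixing a string on one side bounds the other side's number from below.

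The difference comes after that. The paper truncates Alice's negative numbers to $0$, stores the boundedly many possible deficits in her signal, and has Bob subtract the deficit from his contribution after receiving her message. This has two costs. First, Bob must see Alice's signal before producing his number, which means an extra round, or a tuple indexed by the possible deficits. Second, it handles only Alice's negative values explicitly; Bob's contribution can itself be negative (e.g.\ Alice sends $|w_1|+5$, Bob sends $|w_2|-5$), and that case is left to a symmetric argument the sketch does not spell out.

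Your shift by the hard-wired constant $m_{p,q}=\min_{v\in V_q}\beta_{p,q}(v)$ avoids both costs. The identity $\alpha_{p,q}(u)+m_{p,q}=f(uv^*)\ge 0$ makes both sides non-negative at once. No extra signals or rounds are needed, and the resulting protocol is still one-round.

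What the paper's formulation offers is mainly the intuition that a bounded negative part is finite information that can travel as a signal. Your version is the cleaner and more complete argument.
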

\begin{proof}
  Let us assume without loss of generality that $f$ is computed by a one-round
  protocol.  We use the name \emph{configuration} for the message sent by alice, similarly to the proof of \cref{thm:unary-string-to-string}. The configuration consists of a signal, and several integers. We can improve the protocol so that:  (a) for each input with a split $w = w_1 w_2$ exactly one of the numbers in Alice's configuration is part of the  final output; and (b) if Alice's configuration for a string $w_1$ uses some integer, then  there is some $w_2$ where this integer contributes to the final output.  These assumptions can be ensured by a powerset
  construction on Alice's side.

  Let us prove the following intermediate claim: the integers in  Alice's configuration are
  uniformly bounded below. This is a simple corollary of the properties (a) and (b): if the integers would be arbitrarily large negative numbers, then the contribution of Bob for some fixed string $w_2$ would not be enough to offset them.

  Thanks to the above claim, we can further improve the protocol as follows. Alice sends her original configuration, with the following change: if some numbers were negative, then they are truncated to zero, and the signal is adjusted so that it stores the negative numbers. This can be done in a finite signal space, due to the claim on lower bounds. Once Bob receives the message, he can the produce the output of the original protocol, by possibly subtracting Alice's negative part encoded in the signal from his part of the output.
\end{proof}

This theorem allows us to derive a characterisation of functions 
computed by weighted automata over $\Nat$ that have linear growth.

\begin{corollary}
  \label{cor:weighted-automata-nat-regular}
  Let $f : \Sigma^* \to \Rat$ be a function computed by a weighted automaton
  over  $\Rat$, which has linear growth (i.e.~the output number is at most linear in the input length), and such that all outputs are natural numbers. Then $f$ is a regular function, when viewed as a string-to-string function with a unary output alphabet.
\end{corollary}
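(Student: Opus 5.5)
The plan is to reduce the statement to \cref{thm:unary-string-to-string}. A unary output string $a^n$ is just the number $n$, and concatenation is addition, so it suffices to show that $f$ is computed by a protocol with output domain $(\Nat,+)$. By \cref{thm:string-to-number-protocols-nat}, it is even enough to find a protocol over $(\Int,+)$, since the outputs of $f$ are assumed to be natural numbers. So the whole task is to turn the multiplicative, scalar-product structure coming from the weighted automaton into a purely additive one, and linear growth is the hypothesis that should make this possible.

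\textbf{Step 1: an integer scalar product form.} As in the proof of \cref{thm:evidence-for-the-conjecture}, by~\cite[p.~110]{BerstelReutenauer08} I will take the automaton over $\Int$, and also minimal, of dimension $n$. The matrix construction from Section~\ref{sec:proof-of-thm-field-domain} then gives integer vectors $x(w_1) = I\cdot M(w_1)$ and $y(w_2) = M(w_2)\cdot F$ in $\Int^n$ with
\begin{align*}
f(w_1w_2) = \sum_{j} x_j(w_1)\, y_j(w_2).
\end{align*}
By minimality, the vectors $x(w_1)$ span $\Rat^n$, and so do the vectors $y(w_2)$.

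\textbf{Step 2: the orthogonality claim.} Let $U_A \subseteq \Rat^n$ be the span of the ``unbounded directions'' of Alice, that is, the directions in which the set $\{x(w_1)\}$ is unbounded. Formally, $U_A$ can be taken as the span of the limit points of $x(w_1)/\|x(w_1)\|$ as $\|x(w_1)\|\to\infty$. Define $U_B$ similarly for Bob. The key claim is that $U_A$ and $U_B$ are orthogonal under the pairing.

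The intended argument is by contradiction. Suppose there were $u \in U_A$ and $v \in U_B$ with $u\cdot v \neq 0$. Then we could pick $w_1$ and $w_2$ with $x(w_1)$ close in direction to $u$, $y(w_2)$ close in direction to $v$, and both large. The product $x(w_1)\cdot y(w_2)$ would then be of order $\|x(w_1)\|\cdot\|y(w_2)\|$. But $0 \le f(w_1w_2) \le C(|w_1|+|w_2|+1)$, so $|f(w_1w_2)|$ is linear in the lengths. I also need that the norms grow at least proportionally to the lengths, or else I should normalise by lengths instead of norms. A product of two such quantities is therefore superlinear, which is a contradiction.

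The delicate points in this step are:
\begin{itemize}
\item[(i)] making ``close in direction'' quantitative enough that the error terms do not cancel the main term;
\item[(ii)] relating $\|x(w_1)\|$ to $|w_1|$. Here I would use that Hankel entries are $f$-values, so in dual-basis coordinates $x_j(w_1)=f(w_1v_j)$ for fixed words $v_j$, hence $\|x(w_1)\| = O(|w_1|)$.
\end{itemize}
I would first try to avoid limit directions altogether by working with the subspace spanned by all $x(w_1)$ for $w_1$ outside a bounded set. I expect this orthogonality claim to be the main obstacle.

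\textbf{Step 3: splitting and the additive protocol.} Given the claim, choose a basis of $\Rat^n$ adapted to $U_A$ and a complement $V_A$. Write $x(w_1)=x^{U}(w_1)+x^{V}(w_1)$. The component $x^{V}(w_1)$ takes values in a bounded set of vectors with bounded denominators, since the $x(w_1)$ are integer vectors and the change of basis is fixed; hence it takes finitely many values. Then
\begin{align*}
f(w_1w_2)= x^{U}(w_1)\cdot y(w_2) + x^{V}(w_1)\cdot y(w_2).
\end{align*}
For the second summand, Alice sends $x^{V}(w_1)$ as a signal, and Bob computes the value himself as a rational number. For the first summand, orthogonality means that only the component of $y(w_2)$ outside $U_A^{\perp}\supseteq U_B$ matters. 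That component is bounded, so it again takes finitely many values, and Bob sends it as a signal while Alice computes the first summand. Each party thus produces one number, and the output is their sum. After clearing a fixed common denominator, using that $f$ is integer-valued and the finite sets of signal values are fixed, this is a protocol over $(\Int,+)$ with non-negative outputs. \Cref{thm:string-to-number-protocols-nat} and then \cref{thm:unary-string-to-string} finish the proof.
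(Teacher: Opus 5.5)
Your outer reduction is the same as the paper's: produce a protocol over $(\Int,+)$, then apply \cref{thm:string-to-number-protocols-nat} and \cref{thm:unary-string-to-string}. The middle step, however, has a genuine gap, and it is exactly the orthogonality claim you flag. To contradict linear growth from $f(w_1w_2)\approx\|x(w_1)\|\,\|y(w_2)\|\,(u\cdot v)$, you need a \emph{lower} bound on the norms in terms of the lengths. At least one of the two directions must be approached by words whose vectors grow linearly in their length; then you can fix $w_1$ with $\|x(w_1)\|$ large and let $|w_2|\to\infty$ along a linear-speed sequence for $v$. Your point (ii) only gives $\|x(w_1)\|=O(|w_1|)$, which is the wrong inequality. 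Your argument also uses nothing that excludes $u$ and $v$ both being approached only with, say, logarithmic speed. In that case $\|x(w_1)\|\,\|y(w_2)\|$ stays far below $|w_1|+|w_2|$ and nothing is contradicted.

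Step~3 has a second, related problem. If $U_A$ is the span of limit points of $x(w_1)/\|x(w_1)\|$, the component $x^{V}(w_1)$ need not be bounded. A sequence with $\|x^V\|\to\infty$ but $\|x^V\|=o(\|x\|)$ only produces limit points inside $U_A$. Normalising by lengths makes this worse, since sublinearly growing parts then leave $U_A$ entirely. To get finiteness you have to take $U_A$ and $U_B$ to be the smallest subspaces modulo which Alice's, resp.\ Bob's, vectors take finitely many values. These may contain directions that are not limit directions at all, and orthogonality must be proved for them as well.

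What is missing is a pumping/structure theorem for integer weighted automata of linear growth. It would say that every way in which $x(w_1)$ is unbounded can be realised at linear speed, and that modulo the span of such realisable directions only finitely many vectors occur (likewise for $y$). With that, your orthogonality and finiteness claims follow by pumping, and Step~3 goes through after a small repair. Term operations over $(\Int,+)$ have neither constants nor division, so after clearing denominators the residues must be sent as signals, and the resulting correction constant must be supplied as a value by one party.

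This structural fact is precisely the nontrivial input the paper does not reprove. It cites \cite[Proposition II.13, Theorem III.3]{Zpolyreg23}, which presents $f$ as a regular function $g:\Sigma^*\to\{a,b\}^*$ followed by summing $+1$ for each $a$ and $-1$ for each $b$. Running the protocol for $g$ from \cref{lem:from-regular-to-protocol}, with each output string replaced by its signed sum, then gives the protocol over $(\Int,+)$ immediately.
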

\begin{proof}
  It is well-known that weighted automata over $\Rat$ computing integer values
  can be simulated by weighted automata over the semiring $\Int$: this is
  sometimes referred to as $\Rat$ being a Fatou extension of $\Int$
  \cite[p. 110]{BerstelReutenauer08}.
  Therefore, we can assume without loss of generality that $f$ is computed by
  a weighted automaton over $\Int$. Then, one can leverage a 
  result from \cite{Zpolyreg23} stating that weighted automata over $\Int$
  with linear growth can be computed as follows: first, a regular function $g : \Sigma^* \to
  \{a,b\}^*$ is computed, then each letter $a$ is mapped to $+1$ and each 
  letter $b$ is mapped to $-1$, and finally all values are summed up. 
  For a precise statement, see \cite[Proposition II.13, Theorem III.3]{Zpolyreg23}.
  Now, it is clear from the above description that $f$ can be computed by a protocol
  manipulating integer values, and combining them only using addition. By
  \cref{thm:string-to-number-protocols-nat}, $f$ can also be computed by a
  protocol manipulating natural numbers. Finally, by \cref{thm:unary-string-to-string},
  $f$ is a regular function.
\end{proof}

From the above, we conclude that Conjecture~\ref{conj:regular-continuous} holds for unary output alphabets.
\begin{corollary}
  \label{cor:regular-continuous-unary}
  Conjecture~\ref{conj:regular-continuous} holds for string-to-string functions where the output alphabet has only one letter.
  In other words, a string-to-string function with unary output alphabet is field continuous if and only if it is regular.
\end{corollary}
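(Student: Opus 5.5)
The right-to-left implication needs no new work. Every regular function is computed by a protocol (\cref{lem:from-regular-to-protocol}), and every function computed by a protocol is field continuous (\cref{cor:protocol-field-continuity}). So the plan concerns only the left-to-right implication. Let $f : \Sigma^* \to \set{a}^*$ be field continuous. The idea is to read a unary output string as a natural number and invoke \cref{cor:weighted-automata-nat-regular}.

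\textbf{Step 1: reduce to a weighted automaton.} Let $g : \set{a}^* \to \Rat$ be the length function $a^n \mapsto n$. It is computed by a two-state weighted automaton over $\Rat$: a counter that adds $1$ per letter, using the matrix $\begin{pmatrix}1&1\\0&1\end{pmatrix}$. By field continuity, $f;g$, which maps $w \mapsto |f(w)|$, is computed by a weighted automaton over $\Rat$, and all of its outputs are natural numbers.

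\textbf{Step 2: establish linear growth.} \cref{cor:weighted-automata-nat-regular} needs $|f(w)|$ to be at most linear in $|w|$, and bounded digit count alone would only give an exponential bound. I would get the linear bound by composing with a different weighted automaton, namely $h : a^n \mapsto 2^n$, a one-state automaton with transition weight $2$. By field continuity, $w \mapsto 2^{|f(w)|}$ is computed by a weighted automaton over $\Rat$ with natural-number outputs. By the Fatou property \cite[p.~110]{BerstelReutenauer08}, the same function is computed by a weighted automaton over $\Int$. As argued in the proof of \cref{thm:evidence-for-the-conjecture}, the outputs of such an automaton have $\Oo(|w|)$ binary digits. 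Hence $|f(w)| = \log_2 2^{|f(w)|} = \Oo(|w|)$, which is linear growth. Equivalently, one can apply item~\ref{it:linear-size-outputs} of \cref{thm:evidence-for-the-conjecture}, whose proof uses only field continuity.

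\textbf{Step 3: conclude.} By Steps~1 and~2, $f;g$ satisfies all the hypotheses of \cref{cor:weighted-automata-nat-regular}: it is computed by a weighted automaton over $\Rat$, it has linear growth, and its outputs are natural numbers. So $f;g$, viewed as a function into unary strings, is regular. Since $f(w) = a^{(f;g)(w)}$, this function is exactly $f$, so $f$ is regular.

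The main obstacle is Step~2. Linear growth is essential, because weighted automata can produce values such as $2^{|w|}$, so it must come from a second, cleverly chosen precomposition. The exponential-encoding trick above is what I would try first. It relies on the fact that digit count is linear for weighted automata over $\Int$, which the paper already justifies.
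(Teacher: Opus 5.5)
Your proof is correct and follows the paper's route: you compose with the length automaton and apply \cref{cor:weighted-automata-nat-regular}, and the converse comes from \cref{lem:from-regular-to-protocol} and \cref{cor:protocol-field-continuity}. The paper leaves the linear-growth hypothesis implicit; it comes from item~\ref{it:linear-size-outputs} of \cref{thm:evidence-for-the-conjecture}, whose proof uses only field continuity. Your encoding $a^n \mapsto 2^n$ is exactly the base-$2$ instance of the encoding used there, so it makes that step explicit for a unary alphabet, where the paper's literal base-$|\Gamma|$ encoding degenerates.
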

\begin{proof}
  By \cref{cor:protocol-field-continuity}, it suffices to show the only-if direction. 
  Let $f : \Sigma^* \to \Gamma^*$ be a function with a unary output alphabet which is field continuous. Compose this function with the weighted automaton $\Aa : \Gamma^* \to \Rat$ which computes the length of the output string. The result is a function that is subject to the assumptions of \cref{cor:weighted-automata-nat-regular}, and therefore it is a regular function. 
\end{proof}
Let us conclude this section with an example showing that
\cref{cor:weighted-automata-nat-regular} cannot be extended to weighted
automata over $\Int$ with arbitrary growth.

\begin{myexample}[Quadratic counterexample]\label{ex:quadratic-counterexample}
     We show  a function which: (a) is a linear combination of \mso  counting functions of arity two, with negative coefficients; (b) has only non-negative outputs; and (c) cannot be presented as linear combination with positive coefficients. The idea, which is based on~\cite[Example 2.1]{BerstelReutenauer08}, is to trivially ensure non-negativity by squaring. Take the function
\begin{align*}
w \in \set{a,b}^* 
\quad \mapsto \quad 
\left(\text{(number of $a$'s in $w$)} - \text{(number of $b$'s in $w$)}\right)^2.
\end{align*}
This function clearly satisfies (a) and (b). As explained in \cite[p.3]{Zpolyreg23}, it also satisfies (c), since the inverse image of $0$ is not a regular language, as would be the case if only positive coefficients were used.
\end{myexample}

\section{Infinite alphabets}
\label{sec:infinite-alphabets}
\AP
In this section, we present a variant of our model which deals with an input
alphabet. This direction is rooted in the tradition of language theory for
infinite alphabets, which dates back to the work of Kaminski and
Francez~\cite{kaminskiFiniteMemoryAutomata1994}, and has been developed in many
subsequent papers, see e.g.~the
survey~\cite{bojanczykOrbitFiniteSetsTheir2017}. The general idea is that we
have an infinite alphabet $\atoms$ (whose elements are called \intro{atoms}), and the
languages  refer only to equality between letters, as in the following examples
\begin{align}
\setbuild{ w \in \atoms^*}{the first letter is equal to the last letter}
\label{eq:first-last}
\\
\setbuild{ w \in \atoms^*}{some letter appears at least twice}
\label{eq:some-twice}
\end{align}

There are numerous models of automata for such languages, which typically
involve some kind of finite memory, as well as registers that store letters
from $\atoms$. For example, the language~\eqref{eq:first-last} is recognised by
an automaton which loads the first letter into a register, and then toggles
acceptance depending on comparison of the register with the current input
letter. The language~\eqref{eq:some-twice} is recognised by an automaton which
nondeterministically guesses a position, puts its letter into a register, and
then waits for this letter to appear again. 

Numerous models for infinite alphabets have been proposed in the literature,
see Figure~\ref{fig:automata-infinite-alphabets} which contains a sample of
seventeen models. Interestingly, all of those models are pairwise
non-equivalent. This sharply contrasts with the finite-alphabet case, where
virtually all models coincide and capture the regular languages. 

In this section, we describe an infinite-alphabet version of our two-party
protocols. The motivation for this study is twofold: (a) a search for a
canonical model of regular languages for infinite alphabets; and (b)
mathematical interest. Regarding the point (a), we hope that the adaptability
of two-party protocols to various settings will help us  find a canonical model
of regular languages for infinite alphabets. This seems to be at least
partially successful, since there is evidence --- which we present in this
section --- that the protocols are equivalent to one of the existing automaton
models\footnote{
  Thus avoiding proliferation of standards, humanistically depicted in a famous XKCD comic.
}, namely unambiguous register automata, see item~\ref{it:orbit-finite-unamb} in
Figure~\ref{fig:automata-infinite-alphabets}. If true, this
equivalence would be unexpected, since there does not seem to be any syntactic
connection between unambiguous register automata and protocols. Regarding point
(b), one of the exciting features of our protocol model for infinite alphabets
is that the interaction between the two parties becomes essential, and the
protocol cannot be reduced to the one-round case as in
\cref{lemma:one-round-reduction-general}.

\begin{figure}
    \begin{enumerate}
    \item deterministic register automata~\cite[Definition 3]{kaminskiFiniteMemoryAutomata1994}
    \item nondeterministic register automata~\cite[Definition 1]{kaminskiFiniteMemoryAutomata1994}
    \item nondeterministic register automata with guessing~\cite[Definition 2.7]{bojanczyk_slightly}
    \item weighted register automata over the two-element field~\cite[Definition 3.1]{orbitFiniteVectorTheoretics}
    \item two-way deterministic register automata~\cite[Definition 5]{kaminskiFiniteMemoryAutomata1994}
    \item two-way nondeterministic register automata~\cite[Definition 2.1]{nevenFiniteStateMachines2004}
    \item alternating register automata~\cite[p.~16:8]{lazicDemri09}
    \item alternating register automata with one register~\cite[p.~16:19]{lazicDemri09}
    \item \label{it:orbit-finite-unamb} unambiguous register automata~\cite[Section 5]{colcombet2015unambiguity}
    \item register automata with pebbles~\cite[Section 2.2]{nevenFiniteStateMachines2004}
    \item \label{it:single-use} single-use register automata~\cite[Definition 2]{bojanczykstefanski2020}
    \item data automata~\cite[Section 4.2]{bojanczykTwovariableLogicData2011}
    \item class automata~\cite[Section III]{bojanczykExtensionDataAutomata2010} 
    \item regular expressions~\cite[Definition 2]{regexpKaminskiTan2004}
    \item three other kinds of regular expressions~\cite[Sections 4, 5, 6]{regexpLibkin2015}
    \item yet another kind of regular expressions~\cite[Section 5]{KleeneNominal2019}
    \item monadic second-order logic with equality~\cite[Section 2.4]{nevenFiniteStateMachines2004}
\end{enumerate}
    \caption{A non-exhaustive list of models of automata for infinite alphabets. All models in the list are pairwise non-equivalent. In contrast, for finite alphabets, all models in this list are equivalent, and define exactly the regular languages. }
    \label{fig:automata-infinite-alphabets}
\end{figure}


\subsection{Protocols for an infinite alphabet}
\label{sec:protocols-infinite-alphabet}
\AP
We now give a more detailed description of our model. 
As explained in the introduction, we only care about languages that are closed 
under \intro{permutations of the alphabet}, according to the following definition. 

\begin{definition}[Equivariant language] \label{def:equivariant-language}
  \AP
  A language $L \subseteq \atoms^*$ is called \intro[equivariant language]{equivariant} if 
  \begin{align*}
  w \in L \quad \iff \quad \pi(w) \in L
  \end{align*}
  holds for every permutation $\pi$ of the alphabet $\atoms$.
\end{definition}

Examples of \kl{equivariant languages} include the languages
in~\eqref{eq:first-last} and~\eqref{eq:some-twice}. On the other hand, the
language ``the first letter is a vowel'' or ``the letters are strictly
increasing'' are not \kl{equivariant}, since there is no such thing as a vowel,
or an ordering of the letters. The principle of \kl[equivariant
language]{equivariance} will also be applied to protocols, as described below.

\AP To define \intro{(simple) equivariant protocols} for infinite alphabets, we use the
same kind of \kl{protocols} as in Definition \ref{def:two-party-protocol-boolean},
except that apart from bits, the parties can also send letters from the
alphabet $\atoms$. It follows that the allowed set of messages is
$\set{\text{true, false}} + \atoms$, i.e.~the disjoint union of the Booleans
and the input alphabet. Similarly, to
Definition~\ref{def:two-party-protocol-boolean}, the number of rounds is a
fixed number $k$ and in the $i$-th round each party chooses a new message
according a strategy which is a function of the following type: 

\begin{align*}
\myunderbrace{\atoms^*}{local \\ string} \times \myunderbrace{(\set{\text{true, false}} + \atoms)^{i-1}}{messages received \\ in previous rounds} 
\to
\myunderbrace{\set{\text{true, false}} + \atoms}{message sent \\ in this round}
\end{align*}
In the last round, Bob must send a bit, and this bit is the output of the \kl[simple equivariant protocol]{protocol}.
An important restriction in the protocol is that the strategies of both parties must be \emph{equivariant} in the following sense: 
a strategy $\sigma$ is equivariant if for every permutation $\pi$ of $\atoms$ and for every $i$, it satisfies the following condition:
\begin{align*}
\sigma(w, m_1, \ldots, m_{i-1}) = m_i 
\quad \Rightarrow \quad
\sigma(\pi(w), \pi(m_1), \ldots, \pi(m_{i-1})) = \pi(m_i).
\end{align*}
In the above, $\pi$ is applied to messages in the natural way -- 
it modifies the \kl{atoms} and leaves the Booleans unchanged.

\begin{myexample}[Repetitions cannot be detected]
\label{ex:protocol-not-repetitions}
    Having formally defined the protocols for infinite alphabets,
    we can now revisit Example~\ref{ex:reg-ndet-too-strong} from the introduction
    and prove that the language ``some letter appears at least twice'' cannot be computed by a
    \kl[simple equivariant protocol]{protocol}. 
    Suppose, towards a contradiction, that there is a protocol with
    $k$ rounds that computes this language. Consider an input string with
    $2k+2$ pairwise different letters, split so that Alice and Bob get $k+1$
    letters each. In the execution of the protocol there are at most $k$ \kl{atoms}
    which are sent as messages. In particular, there must be some atom $a$ that
    appears in Alice's part of the input string, but is not sent as a message,
    and similarly there must be some atom $b$ that appears in Bob's part of the
    input string, but is not sent as a message. Consider an \kl{atom permutation}
    $\pi$ which swaps $a$ with $b$. If we apply this atom permutation to
    Alice's part of the string (but not Bob's), then the communication history
    will remain unchanged. In particular, the output of the protocol will be
    the same on both inputs. However, after applying this permutation, the
    input string has a repetition, unlike the original one. 
\end{myexample}

\subsubsection{Orbit-finite sets}
\label{sec:orbit-finite-sets}
\AP
In this section, we do a more systematic analysis of automata models for
infinite alphabets, and their relationship to our protocols. As an organising
principle, we use the approach of orbit-finite sets,
see~\cite{bojanczykOrbitFiniteSetsTheir2017} for a longer survey, which is a
generalisation of finite sets suitable for infinite alphabets sets. Using this
notion, we can lift any model of computation that uses finite sets, to a model
that uses orbit-finite sets, which allows us for a clean comparison of the two
setting. For the purposes of this paper, we use a simplified definition of
orbit-finite sets, which is sufficient for our purposes\footnote{
Definition~\ref{def:orbit-finite-sets} is weaker than the usual notion of orbit-finite sets~\cite[Section 5]{bojanczyk_slightly}; in fact it is the special case of the usual notion that is called \emph{polynomial orbit-finite sets} in~\cite[Section 1]{bojanczyk_slightly}.
The stronger  notion that is usually used  allows for two extra features: (a) restricting to equivariant subsets (e.g.~one could limit $\atoms^2$ to pairs which are non-repeating); and (b)  symmetries (e.g.~one could identify pairs in $\atoms^2$ if they agree up to swapping of coordinates, thus yielding unordered pairs). In some cases, the extra features are desirable, in particular they establish a connection with set theory~\cite{blassDedekind2016} and  nominal sets~\cite[Section 5]{PittsAM:nomsns}. However, those features do not play any role in the analysis of protocols and automata and, to avoid technicalities, we use the simpler polynomial version from Definition~\ref{def:orbit-finite-sets}. This simplification is purely technical --- all results continue to hold for the usual notion of orbit-finite sets.
}.
\begin{definition}[Orbit-finite sets] \label{def:orbit-finite-sets}
  \AP
  An \intro{orbit-finite set} is any set of the form 
    \begin{align*}
    \atoms^{d_1} + \cdots + \atoms^{d_n},
    \end{align*}
    for some natural numbers $d_1,\ldots,d_n \in \set{0,1,\ldots}$. 
\end{definition}

\AP
When a summand $\atoms^{d_i}$ uses $d_i =0$, then it  describes a set with
exactly one element, namely the empty tuple. Therefore, \kl{orbit-finite sets}
generalise finite sets, since a  finite set with $n$ elements can be seen as
the orbit-finite set which has $n$ disjoint copies of $\atoms^0$. We will only
be interested in subsets of orbit-finite sets and functions between them that
are \intro{equivariant}, i.e.~invariant under permutations of the \kl{atoms}, in
the following sense:
\begin{align*}
\myunderbrace{x \in X \iff \pi(x) \in X}{equivariant subset $X$ \\ of an orbit-finite set}
\qquad 
\myunderbrace{f(x) = y \iff f(\pi(x)) = \pi(y)}{equivariant function $f$ \\ between two orbit-finite sets}
\end{align*}
We can now discuss various orbit-finite models of computation, by generalising
finite sets to orbit-finite ones, and requiring all subsets and relations to be
\kl{equivariant}. As a first example of this approach, we can revisit the definition
of \kl{simple equivariant protocols} from Section~\ref{sec:protocols-infinite-alphabet}, and
define it in terms of \kl{orbit-finite sets}:

\begin{definition}[Orbit-finite protocol]
    \label{def:orbit-finite-protocol}
    An \intro{orbit-finite Boolean two-party protocol}
    is defined in the same way as in Definition \ref{def:two-party-protocol-boolean}, except that:
  \begin{enumerate}
    \item the input alphabet $\Sigma$, and the message spaces $Q_A$ and $Q_B$ are \kl{orbit-finite}; and 
    \item the \kl{strategies} of both players and the output function are \kl{equivariant}.
  \end{enumerate}
\end{definition}

Indeed, the \kl{simple equivariant protocols} from
Section~\ref{sec:protocols-infinite-alphabet} are the special case of the above
definition where the input alphabet is $\atoms$, and the message spaces are
both equal to 
\begin{align*}
\myunderbrace{\atoms}{letter} + \myunderbrace{\atoms^0}{bit 0} + \myunderbrace{\atoms^0}{bit 1}.
\end{align*}
On the other hand, the \kl[simple equivariant protocol]{special case} is also
equivalent to the \kl[orbit-finite protocols]{general case}, since an element of a general \kl{orbit-finite set}
can be transmitted using a constant number of bits and \kl{atoms} (by first sending
the index of the summand, and then sending the tuple of atoms). It follows that
the protocols from Definition~\ref{def:orbit-finite-protocol} and those from
Section~\ref{sec:protocols-infinite-alphabet} have the same expressive power.
From now on we will use the formalisation from
Definition~\ref{def:orbit-finite-protocol}.

Orbit-finiteness can also be used to define automata. The following definition
has the same expressive power as the standard (nondeterministic and
deterministic) register automata for infinite alphabets
from~\cite{kaminskiFiniteMemoryAutomata1994}; this equivalence was shown
in~\cite[Lemma 6.3]{bojanczykAutomataTheoryNominal2014} and is one of the
original motivations for studying orbit-finiteness.

\begin{definition}
    [Orbit-finite automata]
    \label{def:orbit-finite-automata}
    \AP
    A \intro{nondeterministic orbit-finite automaton} is defined in the same way as a 
    nondeterministic finite automaton, except that all sets are orbit-finite, 
    and all subsets and functions are \kl{equivariant}:
\begin{align*}
    \myoverbrace{
        \myunderbrace{Q}{states} \quad 
        \myunderbrace{\Sigma}{input \\ alphabet}
    }
    {orbit-finite}
    \qquad
    \myoverbrace{
        \myunderbrace{I \subseteq Q}{initial \\ states} \quad 
        \myunderbrace{F \subseteq Q}{final \\ states} \quad 
        \myunderbrace{\Delta \subseteq Q \times \Sigma \times Q}{transitions}
    }{equivariant}.
\end{align*}
A \intro{deterministic orbit-finite automaton}
is the special case which has exactly one initial state, and where the transition relation is a function.
\end{definition}

As stated in Figure~\ref{fig:automata-infinite-alphabets}, deterministic and
nondeterministic orbit-finite automata have different expressive power.
Moreover, as stated in Examples~\ref{ex:reg-det-too-weak}~and~\ref{ex:reg-ndet-too-strong}
none of these models is equivalent to \kl{orbit-finite protocols}:
\kl[of dfa]{deterministic automata} are too weak, and 
\kl[of nfa]{nondeterministic automata} are too
strong. 

In Definition~\ref{def:orbit-finite-automata}, we have defined \kl{one-way
orbit-finite automata}, which read the input string from left to right. A
natural extension are the two-way automata, which can move their reading head
in both directions according to their transition function. This extension is
particularly natural in the context of protocols, with their two-way
interaction between the communicating parties. However, in the orbit-finite
setting, two-way automata are very strong:

\begin{myexample}[Two-way too strong]\label{ex:protocol-not-2dofa}
    The language ``some letter appears at least twice'' can also be recognised
    by a deterministic two-way orbit-finite automaton~\cite[Example
    18]{bojanczyk_slightly}. Therefore, this automaton model cannot be
    simulated by protocols.

    The reason why two-way orbit-finite automata are so strong is that they can
    make an unbounded number of visits to any given position -- for example the
    automaton for the language ``some letter appears at least twice'' will
    visit the last position a linear number of times (where the length of its run is
    quadratic). One idea to tame this power is to consider the bounded-crossing
    variant of two-way automaton, which has a fixed bound $k$ on the number of
    times that the automaton can visit a position \cite[p.~92]{neven2003power}.
    We believe that this model can actually be equivalent to the protocols.
    However, in our conjecture, we will focus on the better studied model
    presented in the following section.
\end{myexample}

\subsection{Unambiguous orbit-finite automata}
\label{sec:unambiguous-orbit-finite-automata}
\AP
As we have shown in Examples~\ref{ex:reg-ndet-too-strong}, \ref{ex:reg-det-too-weak}
and \ref{ex:protocol-not-2dofa},
\kl[of protocol]{protocols} are not equivalent to one-way deterministic or
nondeterministic orbit-finite automata, or their two-way variants automata. So
what is the right automaton model?  We conjecture that the answer is
\intro{unambiguous orbit-finite automata}, i.e.~the special case of
\kl{nondeterministic orbit-finite automata} that have zero or one accepting runs for
every input string.
\begin{conjecture}
    \label{conj:protocols-unambiguous}
    A language over an \kl{orbit-finite} alphabet is computed by 
    an \kl{orbit-finite protocol} if and only if it is recognised by an \kl{unambiguous orbit-finite automaton}.
\end{conjecture}

One corollary of this conjecture would be that \kl{unambiguous orbit-finite
automata} are closed under complement, since \kl[of protocol]{protocols} can
be complemented by flipping the output bit. This corollary has been conjectured
in~\cite[p.9]{colcombet2012forms}, and to the best of our knowledge remains
open, despite apparent claims to the contrary in~\cite[Footnote
5]{colcombet2015unambiguity}.

In this section, we prove the $\impliedby$ implication of the conjecture, i.e.~we
show that \kl{orbit-finite protocols} can simulate \kl{unambiguous orbit-finite
automata}. Unlike similar results earlier in this paper, this simulation is
non-trivial. Also, despite the one-way nature of the automata, the simulation
crucially depends on the interactive nature of protocols, i.e.~it requires more
than one round of communication. In particular multi-round protocols cannot be
reduced to one round, as was the case for finite alphabets
(i.e.~\cref{lemma:one-round-reduction-general} is no longer true for the
orbit-finite case).

\begin{theorem}
    \label{thm:unambiguous-to-protocol}
    If a language $L$ over an \kl{orbit-finite} alphabet is recognised by 
    \kl{an unambiguous orbit-finite automaton}, then 
    it is also computed by an \kl{orbit-finite protocol}.
\end{theorem}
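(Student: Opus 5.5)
The plan is to reduce acceptance to an intersection question between a set computed by Alice and a set computed by Bob, and then resolve that intersection interactively using unambiguity and a sunflower argument. Fix an unambiguous orbit-finite automaton with states $Q = \atoms^{d_1}+\cdots+\atoms^{d_n}$, and let $d$ be the maximal $d_i$. For a split $w = w_1 w_2$, let
\begin{align*}
R_A = \set{q \mid q \text{ is reachable from an initial state by reading } w_1}, \qquad R_B = \set{q \mid \text{a final state is reachable from } q \text{ by reading } w_2}.
\end{align*}
Then $w$ is accepted if and only if $R_A \cap R_B \neq \emptyset$.

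The first step is to extract structural consequences of unambiguity. Let $S_A$ and $S_B$ be the sets of atoms occurring in $w_1$ and $w_2$. By equivariance, $R_A$ is invariant under permutations fixing $S_A$, and $R_B$ is invariant under permutations fixing $S_B$. Unambiguity implies that $R_A \cap R_B$ has at most one element, namely the middle state of the unique accepting run.
\begin{enumerate}
\item If $q \in R_A \cap R_B$ contained an atom outside $S_A \cup S_B$, then swapping it with another such fresh atom would give a second element of the intersection. Hence the atoms of $q$ lie in $S_A \cup S_B$.
\item More importantly, unambiguity holds for every $w_2$. Hence for each $q \in R_A$ there is essentially one way for an extension to succeed. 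This restricts how many elements of $R_A$ can be ``compatible'' with any fixed configuration of Bob.
\end{enumerate}

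The second step is to make the intersection test finitary from each side. Relative to the group fixing $S_B$, the set $R_B$ is a union of orbits. Each orbit is described by a \emph{shape}: a state component $i$, an assignment of each coordinate either to a specific atom of $S_B$ or to a ``generic'' atom (with an equality pattern among the generic coordinates), together with constraints that the generic atoms avoid $S_B$. By the first step, the generic coordinates of a witness must be atoms of $S_A \setminus S_B$. Alice's set $R_A$ has the symmetric description. The protocol then searches for a pair consisting of an Alice-shape and a Bob-shape that glue together. The difficulty is that the glueing is on coordinates holding atoms in $S_A \cap S_B$, and neither party knows which atoms the other party has.

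The third step is an interactive elimination over at most $d$ phases, and I expect this to be the main obstacle. In each phase, each party fixes a finite-index guess of a shape pattern (communicated as a signal) and considers the family $\mathcal F$ of coordinate-tuples of its candidate states that match the guess. The issue is that $\mathcal F$ may be unboundedly large, since the number of orbits grows with $|S_A|$, so the party cannot send all of it.

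Here I would apply a sunflower lemma, specifically the version for families of tuples of bounded length $d$. If $\mathcal F$ is large, it contains a large sunflower with core $C$. The other party's atoms can meet the petals in only boundedly many ways. Moreover, if two petals were both consistent with the other side, the resulting two states would give two accepting runs on a suitable input, contradicting unambiguity. Hence either:
\begin{enumerate}
\item the candidate must use only the core, and the party sends the core atoms $C$ (at most $d$ atoms); or
\item $\mathcal F$ is boundedly small, and the party sends all of it.
\end{enumerate}
The receiving party then tests which of the sent atoms lie in its own local string and reports this as bits. This fixes at least one more coordinate of the witness as an explicitly known atom or as a known-generic one, so after at most $d$ phases, times a constant for the shape guesses, the unique witness is either identified or ruled out. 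Finally, Bob outputs the answer.

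Equivariance of all strategies is automatic, because every choice is made canonically from the local string and the received atoms; for example, the sunflower core is taken as a canonical union over all maximal sunflowers, which is equivariant. The delicate part I would work on first is the quantitative bound: showing that unambiguity bounds the number of petals that can be simultaneously compatible, uniformly in $|S_A|$ and $|S_B|$, so that the number of rounds depends only on $d$ and the number of orbits of $Q$.
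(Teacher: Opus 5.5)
There is a genuine gap, located exactly at the step you flag as delicate. Your per-party dichotomy (if $\mathcal F$ is large, the candidate must use only the core $C$ of a large sunflower, so that party sends $C$) does not follow, and it is false. On the actual input, unambiguity only says that at most one of Alice's candidates is also one of Bob's. That says nothing about where the witness sits inside Alice's family.

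The paper's example (``the last letter appears at least twice'', with Alice holding pairwise distinct $a_1\cdots a_n$) shows this. Alice's candidates in the register component are $a_1,\ldots,a_n$, a sunflower with empty core. The witness, when it exists, is the petal equal to Bob's last letter, so any bounded set of atoms Alice sends may miss it. Progress can only come from Bob, whose family is the single atom $b$. Your argument therefore runs in the wrong direction: two candidates on one side that are disjoint outside the already-known atoms do not localise that side's witness; they are a reason for the \emph{other} side to speak. (Your equivariance fix also fails: the union of the cores of all maximal sunflowers can be unbounded, e.g.\ for the family $\{(a_i,b_j)\}_{i,j}$.)

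The missing idea is to apply unambiguity to a \emph{different} input, obtained by permuting one side only.
\begin{enumerate}
\item Work with an $S$-orbit $\varphi$ of states, where $S$ is the set of atoms already exchanged, so both parties know $\varphi$. This replaces your shapes relative to the private sets $S_A,S_B$.
\item Suppose Alice's candidates in $\varphi$ contain two $S$-disjoint states $p_1,p_2$, and Bob's contain two $S$-disjoint states $q_1,q_2$. Then some $\pi$ fixing $S$ sends $(p_1,p_2)$ to $(q_1,q_2)$, and $\pi(w_1)w_2$ has two accepting runs, through $q_1\neq q_2$, contradicting unambiguity.
\item Hence at least one party has no $S$-disjoint pair among its candidates, and each party can test this locally. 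For that party, the free atoms of any single candidate form a set $T$ of at most $\dim\varphi$ atoms. Every candidate, in particular the witness, uses an atom of $T$ on a free coordinate.
\item That party sends $T$. Then $\varphi$ splits into boundedly many $(S\cup T)$-orbits, and only those using an atom of $T$ on a previously free coordinate survive. All of these have strictly smaller dimension.
\item Induction on dimension bounds the number of rounds; in dimension zero a single exchanged bit decides.
\end{enumerate}
This needs no sunflower lemma with many petals, only the dichotomy ``a disjoint pair, or a hitting set of size at most $\dim\varphi$''. The progress measure is the orbit dimension. Reporting which sent atoms occur in the receiver's string, or fixing coordinates as ``known-generic'', does not advance it.
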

\begin{proof}
  For the rest of this proof fix an \kl{unambiguous orbit-finite automaton}, whose state space is the \kl{orbit-finite} set $Q$.
Suppose that the input string is factorized as $w = w_1 w_2$. The general idea of the protocol is that Alice and Bob will jointly
compute the intermediate state (if it exists), i.e.~the state $q$ which satisfies:
\begin{align*}
\myunderbrace{I \xrightarrow{w_1}q}{there is a run over $w_1$\\ from an initial state to $q$} \qquad \text{and} \qquad
\myunderbrace{q \xrightarrow{w_2} F}{there is a run over $w_2$\\ from $q$ to a final state.}
\end{align*}
By \kl{unambiguity}, there is at most one intermediate state, and it exists if and only if the string is accepted.

Observe that Alice can compute the set of states that are reachable from an
initial state by reading her string $w_1$, and Bob can compute the set of
states from which a final state is reachable by reading his string $w_2$. So,
the challenge is to compute their intersection, which is either a singleton
with the intermediate state, or the empty set. Before we explain how to do
this, let us first explain why this is non-trivial, i.e. why Alice cannot send
her set of states to Bob, or vice versa. The problem is that the set of all
reachable subsets of $Q$ might not be \kl{orbit-finite}, and therefore it cannot be sent
in a constant number of messages. This issue is illustrated in the following
example.

\begin{myexample}
Consider the following unambiguous orbit-finite automaton for  the language ``the last letter appears at least twice''. The automaton guesses a position in the input string that is the penultimate appearance of the last letter, stores this letter in a register,
and verifies that its next appearance is indeed the last letter in the input word.
Formally, it has the following orbit-finite set of states:  
\begin{align*}
  Q \qquad = \qquad  \myunderbrace{1}{waiting for the\\ penultimate \\ appearance\\ of the last letter} \qquad +
      \qquad \myunderbrace{\atoms}{verifying that\\ the next appearance is\\ indeed the last letter}  \qquad + \qquad
      \myunderbrace{1}{next appearance of \\ the guessed letter; \\ accept if at \\ the end of input}.
\end{align*}
Observe that the automaton is unambiguous, as it takes care to accept only if the guessed position is indeed the penultimate appearance of the last letter.

Suppose now, that Alice and Bob want to simulate this automaton, and Alice has read a string $w_1 = a_1 \ldots a_n$ of pairwise distinct letters.
After reading $w_1$, the automaton can be in any  of the following $n+1$ states:
(a) waiting for the penultimate appearance of the last letter;
or (b) verifying that the next appearance of $a_i$ is the last letter, for $i \in \set{1,\ldots,n}$.
It follows that the size of the set of reachable states after reading $w_1$ is unbounded (as it grows with $n$),
and therefore it cannot be transmitted to Bob in a bounded number of orbit-finite messages.
\end{myexample}

To work around the issue explained above, Alice and Bob will engage in interactive communication, which will narrow down the set of possible candidates.
At each stage, it will be represented using orbits, as defined below.

\begin{definition}[Orbit] \label{def:orbit}
  \AP
  For a finite set $S \subseteq \atoms$, the \intro{$S$-orbit} of $q \in Q$ is the following set:
    \begin{align*}
    \setbuild{ \pi(q)}{$\pi$ is a permutation of $\atoms$ such that $\pi(a)=a$ for all $a \in S$}.
    \end{align*}
  The set $S$ is called the \intro{support} of the orbit.
\end{definition}

\begin{myexample}\label{ex:tau-disjoint}
    Let $Q = \atoms^5$, and consider the $\set{\text{{John}, {Eve}}}$-orbit of the following tuple
    \begin{center}
        (\red{John}, Tom, Mary, Tom, \red{Eve})
    \end{center}
    An element of this \kl{orbit} is any tuple of the form 
    \begin{center}
        (\red{John}, $a$, $b$, $a$, \red{Eve})
    \end{center}
    where $a$ and $b$ are distinct \kl{atoms}, which are not \red{John} or \red{Eve}. 
\end{myexample}

\AP As the \kl{support} increases, the \kl{orbit} becomes smaller; in
particular the biggest orbits are the ones with empty support, i.e. the
$\emptyset$-orbits, which we call the \intro{equivariant orbits}. It is not
hard to see that every \kl{orbit-finite} set has a finite number of
\kl{equivariant orbits}~\cite[Lemma 1.4]{bojanczyk_slightly}; in fact this is
the reason for the name. Each \kl{orbit} in an \kl{orbit-finite} set is a
subset of $\atoms^d$ for some $d$. In such an orbit, we partition the
coordinates $\set{1,\ldots,d}$ into two parts: the \intro{fixed coordinates},
which use the \kl{atoms} from the \kl{support}, and the \intro{free
coordinates}, which do not use these atoms. In Example~\ref{ex:tau-disjoint},
the \kl{fixed coordinates} are  the first and last ones, while the \kl{free
coordinates} are the middle three. The \intro[orbit dimension]{dimension} of an
orbit is the number of distinct atoms in the \kl{free coordinates}. In
Example~\ref{ex:tau-disjoint}, the \kl[orbit dimension]{dimension} is two,
corresponding to the atoms $a$ and $b$. An important special case is when the
\kl[orbit dimension]{dimension} is zero; in this case the orbit contains only one state.

In the \kl[of protocol]{protocol}, Alice and Bob will jointly maintain a set
$S \subseteq \atoms$ and list of \kl{$S$-orbits} which may contain the
intermediate state (starting with $S = \emptyset$ the list of all
$\emptyset$-orbits). They will ensure that if the intermediate state exists,
i.e.~if the input string is accepted,  then the intermediate state is contained
in one of the \kl{orbits} on the list. The goal is to decrease the \kl[orbit
dimension]{dimension} of the orbits on the list until they become
\kl{zero-dimensional}, by gradually computing the set $S$ of atoms that appear in
the intermediate state. Once the orbits become zero-dimensional, they will
contain only a finite (and bounded) number of the candidates. At this point,
Alice can compute which of these candidates are reachable from an initial state
over $w_1$ and send this (bounded) information to Bob, who can then check if
one of these candidates can reach a final state over $w_2$. 

To decrease the \kl[orbit dimension]{dimension} and increase $S$, we will use the following lemma.
\begin{lemma}\label{lem:fixed-atoms}
  Let $S \subset \atoms$ be a finite set, and let $\varphi \subseteq Q$ be an infinite \kl{$S$-orbit}
  of \kl[orbit dimension]{dimension} $k$. 
  Consider an input string $w = w_1 w_2$, and the sets:
  \begin{align*}
  \myunderbrace{X_1 = \setbuild{ q \in \varphi}{$ I \xrightarrow{w_1} q$}}{states reachable on Alice's side}
  \qquad
  \myunderbrace{X_2 = \setbuild{ q \in \varphi}{$ q \xrightarrow{w_2} F$}}{states reachable on Bob's side}
  \end{align*}
  There is a set $T \subseteq \atoms$ of size at most $k$, such that either: 
  \begin{enumerate}
    \item   every state from $X_1$ contains an \kl{atom} from $T$ on some \kl{free coordinate}; or 
    \item   every state from $X_2$ contains an \kl{atom} from $T$ on some \kl{free coordinate}.
  \end{enumerate}
\end{lemma}
    \begin{proof}
      \AP
 In the proof of the lemma, we use an analysis of disjointness, which is
 inspired by the sunflower lemma. We say that two states $p,q$ in an \kl{$S$-orbit}
 are \intro{$S$-disjoint} if 
 \begin{align*}
    \forall a \in \atoms 
    \qquad 
 \text{$a$ appears in both $p$ and $q$} \quad \Rightarrow \quad a \in S.
 \end{align*}
 In other words, the atoms from $S$ can appear in both $p$ and $q$ (in fact, they must), but all other atoms
must be disjoint in the following two states.
 For example, if we take the $\{\textrm{John}, \textrm{Eve}\}$-orbit from Example~\ref{ex:tau-disjoint}, then the two states
\begin{center}
    (\red{John}, Tom, Mary, Tom, \red{Eve}) \qquad
    (\red{John}, Ann, Timmy, Ann, \red{Eve})
\end{center}
are $\{\textrm{John}, \textrm{Eve}\}$-disjoint, because the sets
$\set{\text{Tom, Mary}}$ and $\set{\text{Ann, Timmy}}$ are disjoint. 

The following claim characterises subsets of orbits that do not contain any pair of disjoint elements:
\begin{claim}\label{claim:sunflower}
  Let $Q$ be an \kl{$S$-orbit} of \kl[orbit dimension]{dimension} $d$, and let $X$ be a subset of $Q$.
  If $X$ does not contain two \kl{$S$-disjoint} elements, then there is a
  set $T$ of at most $d$ atoms such that every element of $X$ uses at least one atom from $T$ on a \kl{free coordinate}.
\end{claim}
\begin{proof}
        Take some element $x \in X$. Either there is an element of $X$ that is completely disjoint with $x$, or otherwise some atom from $x$ must appear in every other element of $X$ on a free coordinate.
\end{proof}

The claim leaves us with showing that at least one of $X_1$ or $X_2$ does not
contain an \kl{$S$-disjoint} pair of elements. Suppose, towards a contradiction that
both $X_1$ and $X_2$ contain \kl{$S$-disjoint} pairs of elements, say $p_1, p_2 \in
X_1$ and $q_1, q_2 \in X_2$. It follows that the two pairs $(p_1,p_2)$ and
$(q_1,q_2)$ are in the same \kl{equivariant orbit} (of $Q \times Q$), i.e. there is
some atom permutation $\pi$ which sends $p_1$ to $q_1$ and $p_2$ to $q_2$.
Applying $\pi$ to Alices's part of the input string, we obtain a new input
string $\pi(w_1) w_2$, in which both $q_1$ and $q_2$ are valid intermediate
states. It follows that there are at least two accepting runs (one that passes
through $q_1$ and one that passes through $q_2$), contradicting the \kl{unambiguity}
assumption. 
\end{proof}

Using the above lemma, we will construct a \kl[of protocol]{protocol} that
simulates the automaton. As explained before, the idea is to narrow down orbit
which contains the intermediate state. This idea is formalised in the following
lemma.

\begin{lemma}\label{lem:narrow-down-orbit}
  Let $S$ be a finite subset of \kl{atoms}, and let $X \subseteq Q$ be an \kl{orbit}. 
  Alice and Bob can exchange a constant number of messages
  -- which depends only on the \kl[orbit dimension]{dimension} of $X$ -- 
  to determine if the intermediate state belongs to $X$. 
\end{lemma}

Before proving the lemma, let us explain how to use it to complete the proof of
\cref{thm:unambiguous-to-protocol}. We know that the set of all states $Q$
splits into constant number of \kl{equivariant orbits}, so the two parties can run
the protocol the lemma for each of these orbits with $S=\emptyset$. Each run of
the protocol uses a constant number of rounds, so the total number of rounds is
also constant. It remains to prove the lemma.

\begin{proof}[Proof of \cref{lem:narrow-down-orbit}]
  The proof proceeds by induction on the \kl{dimension of the orbit} $X$.
    
  The induction basis is when the dimension is zero. In this case, the orbit
  has exactly one state, and Alice and Bob can simply check  if the state is
  reachable on their side and exchange this bit of information.

  Consider now the induction step. Apply \cref{lem:fixed-atoms}, to the
  orbit. In the factorisation $w = w_1 w_2$, at least one of the two
  alternatives in the conclusion of \cref{lem:fixed-atoms} must hold. Alice
  can check if the first alternative holds, and Bob can check if the second
  alternative holds.  At least one of the two parties must report success,
  which is witnessed by some finite set $T$ of atoms. The successful party sends the
  set $T$ to the other party. This is possible since the size of $T$ is
  bounded by the \kl[orbit dimension]{dimension} of $X$. 
  The orbit $X$ splits into finitely many
  orbits $X_1,\ldots,X_n$ with the larger \kl[of support]{support} $S \cup T$, see~\cite[Lemma
  10.9]{bojanczyk_slightly}. The number $n$ depends only on the \kl[orbit dimension]{dimension} 
  of $X$ (as the size of $T$ is bounded by the dimension of $X$).

  We know that the intermediate state contains at least one atom from $T$, so
  we are only interested in the orbits among $X_1,\ldots,X_n$ which use at
  least one atom from $T$ on a coordinate that was \kl[free coordinate]{free} in $X$. 
  These orbits
  have lower \kl[of dimension]{dimension}, so the parties can sequentially apply the induction
  assumption to check if the intermediate state belongs to any of these
  orbits. This completes the proof of the lemma, and therefore also of
  \cref{thm:unambiguous-to-protocol}.
\end{proof}
\end{proof}

\subsection{Weighted automata}
\label{sec:weighted-automata-atoms}

\AP
In \cref{thm:unambiguous-to-protocol}, we have proved one implication in Conjecture~\ref{conj:protocols-unambiguous}.
This section is devoted to presenting some evidence for the other implication, i.e.
\begin{align}\label{eq:missing-orbit-finite-implication}
\text{protocol} \quad \implies \quad \text{unambiguous automaton}.
\end{align}
We begin by explaining why the orbit-finite case cannot be handled using the
techniques that were used to prove this implication in the finite case.

\paragraph*{What goes wrong in the orbit-finite case?}
In the finite case, the proof had two parts: (a) a reduction to \kl{one-round protocols}, and (b) the Myhill-Nerode Theorem.
Part (b) does not seem to be problematic, as orbit-finite versions of the Myhill-Nerode Theorem are known in many variants, 
including monoids~\cite[Lemma 3.3]{bojanczykNominalMonoids2013}, automata~\cite[Section 3.2]{bojanczykAutomataTheoryNominal2014}, 
and -- as we will prove later in this section -- also for weighted automata. 
The problematic part is (a), in which the number of rounds is reduced to one.
The key argument in this reduction  was that the sets of strategies 
  \begin{align*}
    (Q_B)^k \to (Q_A)^k \qquad \text{and} \qquad (Q_A)^k \to (Q_B)^k
    \end{align*}
are finite, and thus each party could simply send their strategy as a message.
This argument fails to carry over from finite sets to orbit-finite sets. 
The reason is that \kl{orbit-finite} sets are not closed under taking function spaces $X \to Y$, see~\cite{functionSpaces2024} 
for an extended discussion of this phenomenon.
The following example shows that the one-round reduction 
is indeed impossible in the orbit-finite case.

\begin{myexample}
    [No reduction to one round]\label{ex:no-one-round-reduction}
    Consider a language $L$ that is computed by an \kl{orbit-finite protocol} with
    one round. Using the same argument as in
    \cref{lem:one-round-reduction-boolean}, we can show that the Myhill-Nerode
    equivalence relation for the language, as defined
    in~\eqref{eq:myhill-nerode-equivalence}, has an \kl{orbit-finite} set of
    equivalence classes. As mentioned above, it follows from \cite[Section
    3.2]{bojanczykAutomataTheoryNominal2014} that $L$ is also recognised by a
    \kl{deterministic orbit-finite automaton}. As we have seen in
    Example~\ref{ex:reg-det-too-weak}, such automata are not strong enough to
    capture all \kl[of protocol]{protocols}.
\end{myexample}

In light of the above example, it is no longer surprising that the proof of
\cref{thm:unambiguous-to-protocol} used multi-round protocols. In fact, we
believe that the number of needed rounds can be arbitrarily large, as suggested
by the following example. 

\begin{myexample}[Many rounds]
    We think of a string over the alphabet $\atoms^2$ as a representation of  a partial function of type $\atoms \to \atoms$. The general idea is that we view the input string as consisting of a list of (input, output) pairs, and if an input appears several times, then the leftmost occurrence is binding. More formally, the  function is defined as follows: if the input is $a$, then we search for the leftmost letter $(a,b)$ that appears in the input string and has $a$ as the first component. If there is such a letter, then the output is $b$, otherwise the ouptut is undefined. Using this representation, we define for each  $k \in \set{0,1,\ldots}$ a  language as follows:
    \begin{align*}
    L_k = \setbuild{ w \in (\atoms^2)^* }{ if $f$ is the partial function represented by $w$, then \\  $f^k(a)=b$ where $(a,b)$ is the last letter of $w$ }.
    \end{align*}
     This language can be computed by an orbit-finite protocol with $k$ rounds, with each round corresponding to one iteration of the function. It seems unlikely that any bounded number of rounds would suffice for all $L_k$, but we do not prove this claim here.
\end{myexample}

\paragraph*{Weighted orbit-finite automata.} In the reminder of this section,
we present some evidence for the missing implication in
Conjecture~\ref{conj:protocols-unambiguous}, using the orbit-finite version of
weighted automata. Namely, we will prove
\cref{thm:orbit-finite-protocol-to-weighted}, which states that every language
computed by an \kl{orbit-finite protocol} is also recognised by a \kl{weighted
orbit-finite automaton} over the two-element field. For finite alphabets, this
would be enough to ensure \kl[regular language]{regularity}, see
\cref{claim:regular-weighted-automata}. This claim is no longer true in the
orbit-finite case, see Example~\ref{ex:weighted-vs-nondet-orbit-finite}, and
therefore \cref{thm:orbit-finite-protocol-to-weighted} can only be considered
as evidence for the conjecture. However, at the very least it shows that
languages computed by orbit-finite protocols are decidable, which was not a
priori clear from the definition.

Let us begin by defining the orbit-finite version of weighted automata. 
\begin{definition}[Weighted orbit-finite automata]
    \label{def:weighted-orbit-finite-automata}
    \AP
    A \intro{weighted orbit-finite automaton} over a semiring $\domain$ is 
    defined in the same way as in Definition~\ref{def:weighted-automaton-nondeterministic}, except that:
    \begin{enumerate}
      \item the input alphabet and state space are \kl{orbit-finite}, instead of finite;
      \item the functions in~\eqref{eq:weight-functions} are \kl{equivariant}.
    \end{enumerate}
     We require that for every input string, there are only finitely many runs with non-zero weight.
\end{definition}

For the purpose of this section, already the special case of the two-element
field $\set{0,1}$ is interesting. In this case, the automaton defines a
function $\Sigma^* \to \set{0,1}$, which can be seen as the characteristic
function of a language. Therefore, we can compare \kl{weighted orbit-finite
automata} to other models, such as \kl{nondeterministic orbit-finite automata}. The
following example shows that these two  models are incomparable. 

\begin{myexample}\label{ex:weighted-vs-nondet-orbit-finite}
  The language ``some letter appears twice'' is recognised by a
  \kl{nondeterministic orbit-finite automaton}, but its characteristic function
  cannot be  recognised by a \kl{weighted orbit-finite automaton} over the
  two-element field. The non-expressivity can be proved using the orbit-finite
  version of the Fliess Theorem, see \cref{thm:orbit-finite-fliess}. On the
  other hand, the  language ``an even number of distinct letters'' is not
  recognised by a nondeterministic orbit-finite automaton, while its
  characteristic function can be computed by a \kl{weighted orbit-finite
  automaton}, see~\cite[Example 3.2]{orbitFiniteVectorTheoretics}. Therefore,
  the two models -- nondeterministic and weighted in the two-element field --
  are incomparable. Inside the intersection of these two classes we will find
  the \kl[of unambiguous automata]{unambiguous automata}, since for unambiguous automata, counting the runs
  modulo two gives the same result as checking if a run exists. This discussion
  is summed up in the following picture:
    \mypic{2}
\end{myexample}

The following theorem is the main result of Section~\ref{sec:weighted-automata-atoms}.
\begin{theorem}\label{thm:orbit-finite-protocol-to-weighted}
  Let $\Sigma$ be an \kl{orbit-finite} input alphabet, and let $\domain$ be a field.
  If a language $L \subseteq \Sigma^*$ is computed by a \kl[of protocol]{protocol}, 
  then the corresponding characteristic function of type $\Sigma^* \to \set{0,1} \subseteq \domain$
  is computed by a \kl{weighted orbit-finite automaton}.
\end{theorem}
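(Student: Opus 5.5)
The plan mirrors the finite-alphabet proof of \cref{thm:field-domain}: show that the Hankel matrix of the characteristic function has ``orbit-finite rank'', then apply an orbit-finite Fliess Theorem (\cref{thm:orbit-finite-fliess}, which the paper announces and which I take as available). Concretely, I will show that the left derivatives $v \mapsto \chi_L(uv)$, for $u \in \Sigma^*$, all lie in the span of an \kl{orbit-finite} family of functions $\Sigma^* \to \domain$; equivalently, that the Hankel matrix factors through a vector space with an orbit-finite basis, equivariantly. From this, the orbit-finite Fliess Theorem produces a \kl{weighted orbit-finite automaton}.

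The central step is to turn the interactive protocol into such a factorisation without reducing it to one round, which is impossible by Example~\ref{ex:no-one-round-reduction}. Fix a $k$-round protocol with orbit-finite message spaces $Q_A,Q_B$. A complete transcript is a sequence $h=(a_1,b_1,\dots,a_k,b_k) \in (Q_A\times Q_B)^k$, and the set $H$ of transcripts is \kl{orbit-finite}. For a transcript $h$, let $\alpha_h(w_1)\in\{0,1\}$ indicate that Alice's strategy on $w_1$ is consistent with $h$, meaning that $\sigma_A^i(w_1,b_1,\dots,b_{i-1})=a_i$ for all $i$. Define $\beta_h(w_2)$ for Bob analogously. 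Because the protocol is deterministic, for each split exactly one transcript $h$ is consistent with both parties. This gives
\begin{align*}
\chi_L(w_1w_2) \;=\; \sum_{h \in H} \alpha_h(w_1)\,\beta_h(w_2)\,\mathrm{out}(h),
\end{align*}
where the sum has exactly one nonzero term and $\mathrm{out}$ is the equivariant output function. This is an orbit-finite analogue of the scalar product protocol of Definition~\ref{def:scalar-product-protocol}, with $H$ in place of $\{1,\dots,d\}$. It also exhibits each left derivative $\chi_L(u\,\cdot)$ as a finite combination of the functions $\beta_h$ for $h \in H$. The family $h\mapsto\beta_h$ is equivariant, because the strategies are equivariant. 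Hence the row space of the Hankel matrix is contained in the span of the orbit-finite equivariant family $\{\beta_h\}_{h\in H}$. The finiteness of the sum for each fixed $u$ is exactly what makes this span the finitely-supported linear span that the theory of orbit-finite vector spaces in \cite{orbitFiniteVectorTheoretics} works with.

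The main obstacle is to make sure this condition matches the hypothesis of the orbit-finite Fliess Theorem as that theorem will be stated. The theorem presumably requires the left derivatives to lie in a finitely generated equivariant subspace, that is, a subspace spanned by an orbit-finite set. It may in addition require the resulting automaton to have only finitely many nonzero runs per input. My first attempt is to build the automaton directly, with states the transcripts and partial transcripts, and to check finiteness of runs via the fact that each $\alpha_h$ is supported on transcripts whose atoms occur in $w_1$ or in the atoms of the protocol's messages. If instead the theorem is stated via the $\mathbb{D}$-rank of an orbit-finite matrix, I will obtain that rank bound from the factorisation through $\domain^{(H)}$, the space of finitely supported vectors indexed by $H$. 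This works because the Hankel matrix factors as the product of the matrix $(\alpha_h(u))_{u,h}$ and the matrix $(\mathrm{out}(h)\beta_h(v))_{h,v}$.
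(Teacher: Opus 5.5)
There is a genuine gap at the step where you assert that your decomposition ``exhibits each left derivative $\chi_L(u\,\cdot)$ as a finite combination of the functions $\beta_h$''.

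For fixed $u=w_1$, the condition $\alpha_h(u)=1$ only requires Alice's messages in $h$ to be the ones her strategy produces from Bob's earlier messages. Bob's messages $b_1,\dots,b_k$ are unconstrained. So the set of Alice-consistent transcripts is in bijection with $Q_B^k$; it is the graph of her strategy $Q_B^k\to Q_A^k$. This set is infinite as soon as Bob can send an atom. Hence $v\mapsto\sum_h\alpha_h(u)\,\mathrm{out}(h)\,\beta_h(v)$ is an infinite sum that is defined only pointwise, because each $v$ selects a single nonzero term. It is not an element of the linear span of $\{\beta_h\}_{h\in H}$, whereas the orbit-finite Fliess Theorem needs finite linear combinations.

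The claim is not just unjustified but false. Take the language ``the last letter appears elsewhere'' with the following protocol (cf.\ Example~\ref{ex:reg-det-too-weak}): Bob always sends his last letter together with a bit saying whether it occurs earlier in his part, and Alice replies whether that letter occurs in her part. Every $h$ with $\beta_h(cc)=1$ records the atom $c$ in Bob's first message. So any finite combination of the $\beta_h$ vanishes on $v=cc$ for all but finitely many atoms $c$, while $\chi_L(ucc)=1$ for every $c$.

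You are conflating two kinds of finiteness. The vector $(\alpha_h(u))_{h\in H}$ is finitely supported in the nominal sense (by the atoms of $u$), but it has infinitely many nonzero entries. When the message spaces are finite, $H$ is finite and your argument is correct, but that is exactly the case that is not at issue.

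Your fallback inherits the same defect. The rows do not have finitely many nonzero entries. Nominally finitely supported vectors on $H$ do form a space of orbit-finite dimension by \cref{lem:function-space-of-ofd}. However, the map $x\mapsto\sum_h x(h)\,\mathrm{out}(h)\,\beta_h$ is not defined on that space (try a constant vector), so the Hankel matrix does not factor through it linearly. The finiteness argument for your direct automaton also fails, since Bob's messages carry atoms that need not occur in $w_1$.

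What is missing is a way to express Alice's dependence on Bob's infinitely many possible messages as a \emph{finite} linear combination, together with an output map that is genuinely bilinear on such combinations. The paper achieves this by eliminating rounds from the end, inside hybrid protocols whose last round produces vectors in spaces $V_A,V_B$ of orbit-finite dimension (\cref{lem:orbit-finite-protocol-to-scalar}).

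Alice's contribution to the last two rounds is a pair $(q_A,f_A)$ with $f_A\in\fsfun{Q_B}{V_A}$, obtained by partial application of her strategy. The key nontrivial input is \cref{lem:function-space-of-ofd}, which gives this function space an orbit-finite basis $F_A$, so $f_A$ really is a finite combination of basis functions. Alice sends an element of $(\lincomb Q_A)\otimes(\lincomb F_A)$, and Bob does the symmetric thing. The new output map is defined on basis elements by evaluating each party's function at the other's \emph{actual} message, $(q_A,f_B,q_B,f_A)\mapsto\mathrm{out}(f_A(q_B),f_B(q_A))$, which is linear in $f_A$ and $f_B$. This causal evaluation replaces your sum over all transcripts.

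Iterating gives a one-round bilinear protocol. There, Alice's vector in $V_A$ determines her left derivative via the linear map $x\mapsto\mathrm{out}(x,\sigma_B(\cdot))$, so the left derivatives are orbit-finitely spanned and \cref{thm:orbit-finite-fliess} applies. This last step of your plan is the part that does go through.
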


In the proof of the theorem, we use the recently developed theory of
orbit-finite vector spaces \cite{orbitFiniteVectorTheoretics}. To streamline
the presentation, we will use a special case of these spaces, namely those that
have an orbit-finite basis. We begin by summarizing the necessary background:
For an orbit-finite set $Q$, let us write $\lincomb Q$ for the vector space
which consists of finite formal linear combinations of elements of $Q$. In
other words, an element of this space is a vector of the form 
\begin{align*}
\alpha_1 q_1 + \cdots + \alpha_n q_n,
\end{align*}
where the coefficients $\alpha_i$ are from the field, and the element $q_i$ (which can be seen as basis vectors) are from $Q$. We use such spaces as the orbit-finite generalisation of finite dimension\footnote{Similarly to the case of orbit-finite sets, we use a simplified definition for this paper, as compared to the literature. The usual notion of vector spaces for orbit-finite sets, see \cite[Definition 8.1]{bojanczyk_slightly} allows for more features; these features are irrelevant for our application and hence not discussed here.}.

\begin{definition}
    [Vector space of orbit-finite dimension]\label{def:orbit-finite-vector-space}
    \AP
    A \intro{vector space of orbit-finite dimension} 
    is a vector space of the form $\lincomb Q$ for some \kl{orbit-finite} set $Q$.
\end{definition}
A space as in the above definition is equipped with two structures: as a vector
space it is closed under linear combinations, and as a set with \kl{atoms} it
is closed under applying \kl{atom permutations}. We will typically be
interested in functions between such spaces that preserve both of those
structures.

In the proof of \cref{thm:orbit-finite-protocol-to-weighted}, we will use an
orbit-finite version of \kl{protocols} with field outputs. Recall that there
are two variants: the general version from Section~\ref{sec:intro-field} and
the simpler \kl{scalar-product protocols} from Section~\ref{sec:field-domain}.
We could begin with the general version and show that it is equivalent to the
scalar-product one -- this is indeed the case. However, to keep the exposition
concise, we treat \kl{orbit-finite protocols} with field outputs merely as a tool
for proving \cref{thm:orbit-finite-protocol-to-weighted}, rather than as an
object of independent interest. Therefore, we focus on the simplest form of
protocol required for the proof, namely an orbit-finite version of the scalar
product protocols from Section~\ref{sec:field-domain}. In the orbit-finite
case, instead of scalar products we will use the slightly more general notion
of bilinear maps, i.e.~maps that have two arguments and are linear in each
argument separately\footnote{It is possible to define orbit-finite scalar
  product protocols, but they are harder to work with. In particular, from our
  results it will follow that orbit-finite scalar product protocols (suitably
  defined) and orbit-finite bilinear protocols are equivalent, but we are not
  aware of a direct proof of this fact.}. 


\begin{definition}
    [Orbit-finite bilinear protocol] 
    \label{def:orbit-finite-scalar-product-protocol}
    \AP
    An \intro{orbit-finite bilinear protocol} consists of:
    \begin{enumerate}
      \item two \kl{vector spaces of orbit-finite dimension}
        $V_A$ and $V_B$ and two strategies, which are equivariant functions
        \begin{align*}
        \sigma_A : \Sigma^* \to V_A 
        \quad \text{and} \quad
        \sigma_B : \Sigma^* \to V_B
        \end{align*}
        \item an output map, which is an equivariant bilinear map 
        \begin{align*}
        \text{out} : V_A \times V_B \to \domain.
        \end{align*}
    \end{enumerate}
\end{definition}
The output of the \kl[of bilinear protocol]{protocol} is defined in the natural way: 
Alice and Bob apply their strategies to their local strings, yielding two vectors, 
and the output of the protocol is obtained using the output map. 
As usual, we require split invariance, i.e.~the output of the protocol should
depend only on the input string $w$ and not on its factorisation $w = w_1 w_2$.


Let us go back to the proof of \cref{thm:orbit-finite-protocol-to-weighted}. The proof has two steps, as described in the following diagram.
\[
\begin{tikzcd}
\text{orbit-finite protocol}
\ar[d,Rightarrow,"\text{Lemma~\ref{lem:orbit-finite-protocol-to-scalar}}"]
\\
\text{orbit-finite bilinear protocol}
\ar[d,Rightarrow, "\text{\cref{claim:bilinear-prot-to-of-automaton}}"]
\\
\text{orbit-finite weighted automaton}
\end{tikzcd}
\]

We begin with the first step, which can be seen as form of reduction to one round. Recall that without vector spaces, a reduction to one round was not possible, see Example~\ref{ex:no-one-round-reduction}. This phenomenon is connected to closure under taking function spaces: orbit-finite sets are not closed under taking function spaces, but this closure is recovered once one moves to vector spaces, see~\cite[Section 8.3]{bojanczyk_slightly}. 

\begin{lemma}\label{lem:orbit-finite-protocol-to-scalar}
  If $L \subseteq \Sigma^*$ is computed by an \kl{orbit-finite protocol}, then its characteristic function is computed by an \kl{orbit-finite bilinear product protocol}, over any field.
\end{lemma}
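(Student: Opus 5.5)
Following the hint given just before the lemma, the plan is to replace each party's interactive strategy by a vector in a space of orbit-finite dimension that linearly encodes the whole strategy. Fix a $k$-round orbit-finite protocol with (after unifying) signal space $Q$, an orbit-finite set. The set $H = Q^{\le k}$ of finite message histories is again orbit-finite: a finite disjoint union of products of polynomial orbit-finite sets is of the form in Definition~\ref{def:orbit-finite-sets}. For Alice, consider the set of \emph{transcripts}: pairs of sequences $(a_1,\ldots,a_k)\in Q^k$ (her messages) and $(b_1,\ldots,b_k)\in Q^k$ (Bob's messages). This set $T = Q^{2k}$ is orbit-finite. Take $V_A = V_B = \lincomb T$.

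Alice's vector $\sigma_A(w_1)$ is the formal sum of all transcripts $t=(\bar a,\bar b)$ with the following property: for every $i$, $a_i = \sigma^i_A(w_1, b_1,\ldots,b_{i-1})$, i.e. $\bar a$ is her response to $\bar b$. Bob's vector $\sigma_B(w_2)$ is defined symmetrically. The obstacle is already visible here: this is an infinite sum, since $\bar b$ ranges over an infinite set, so it is not a finite linear combination.

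To fix this, I would use the notion of support. The messages Alice can send are supported by the atoms of $w_1$ together with the atoms of the previous messages. Restricting to transcripts whose atoms all occur in $w_1 w_2$ does not help, because Alice does not know $w_2$. The vector-space route from \cite[Section 8.3]{bojanczyk_slightly} is therefore needed. There, the dual of $\lincomb Q$ (all functions $Q\to\domain$, which are equivariant objects of finite support) is itself a vector space in the extended sense, and it carries a bilinear pairing with $\lincomb Q$. So the plan is to let Alice's vector be the finitely supported \emph{function} $\chi_A : T\to \{0,1\}$, the characteristic function of her consistent transcripts. Bob's vector is the finite sum of his consistent transcripts, but with his dependence made finite as follows. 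Bob's messages depend on Alice's messages. However, for a fixed $w_2$, the relevant transcripts are those where Bob's replies are determined, and Alice's messages range over an infinite set. So neither side is finite, and both vectors must be taken in dual-like spaces.

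The key step, and the main obstacle, is to find the right spaces. I would do an induction on rounds that alternately dualises. Define $V^{(0)}=\domain$ and $V^{(i)} = $ functions from $Q$ to $V^{(i-1)}$ with finite support. Using the closure result of \cite[Section 8.3]{bojanczyk_slightly}, which says that finitely supported linear maps between orbit-finite-dimensional spaces form such a space again, this keeps us in the orbit-finite-dimensional world, up to the polynomial simplification the paper uses. Then $\sigma_A(w_1)$ is the curried, round-by-round strategy of Alice. The output map pairs it with Bob's curried strategy by evaluating alternately, exactly as causality combines the two functions in Lemma~\ref{lem:one-round-reduction-boolean}. The final output bit is embedded into $\domain$ as $0$ or $1$.

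Next comes bilinearity. Evaluation of a linear map on a vector is bilinear. Sending a message $q$ is represented by the basis vector $q$, or by the indicator of $q$, so that evaluation at $q$ becomes a linear functional. Equivariance of $\sigma_A$, $\sigma_B$ and of the output map follows from equivariance of the strategies and of evaluation. Split invariance is inherited from the original protocol, since the computed value equals the original output. The step I expect to be hardest is verifying that the iterated function spaces are genuinely of orbit-finite dimension and that the strategy vectors are finitely supported, namely by the atoms of the local string. I would first try to prove this via the explicit basis of the dual of $\lincomb Q$ given by finitely supported equivariant-orbit indicators.
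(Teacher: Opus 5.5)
Your final plan is the paper's construction, unrolled. The paper works with hybrid protocols and repeatedly merges the last signal round into the vector round. It replaces Alice's contribution, an element of $Q_A \times (\fsfun{Q_B}{V_A})$, by a vector of $(\lincomb Q_A)\otimes(\lincomb F_A)$, where $F_A$ is the orbit-finite basis from Lemma~\ref{lem:function-space-of-ofd}. Iterating this from the last round backwards produces exactly a nested curried strategy, and the output map is your alternating evaluation.

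\textbf{The gap.} The recursion you actually wrote, $V^{(i)} = \fsfun{Q}{V^{(i-1)}}$, has no place for the party's own round-$i$ message, and you never say how that message is combined with the continuation. This is the one non-obvious ingredient. In the pairing, Alice contributes both her message $q_A$, at which Bob's continuation $f_B$ is evaluated, and her continuation $f_A$, which is evaluated at $q_B$. The value $\langle f_A(q_B), f_B(q_A)\rangle$ is therefore bilinear in the pair $(q_A,f_A)$. If you store the two side by side (product or direct sum), the output is not linear in Alice's vector: setting either component to zero gives $0$, so additivity would force everything to vanish.

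\textbf{The fix} is the tensor product. Take $V^{(i)} = (\lincomb Q)\otimes(\fsfun{Q}{V^{(i-1)}})$ and let Alice send $q_A\otimes f_A$. Define the pairing on basis tuples $(q_A,f_A,q_B,f_B)$ by $\langle f_A(q_B), f_B(q_A)\rangle$, using the level-$(i-1)$ pairing, and extend bilinearly. You must then check, as the paper does, that this agrees with the protocol when $f_A,f_B$ are the actual continuations rather than basis vectors. This follows from bilinearity of the inner pairing and the pointwise linear structure of function spaces. Equivariance of the coordinate maps comes from the bases being equivariant sets.

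The step you single out as hardest is routine given the cited result. Orbit-finite dimensionality of each level follows by induction from Lemma~\ref{lem:function-space-of-ofd} together with $\lincomb X\otimes\lincomb Y \cong \lincomb(X\times Y)$, identifying each level with $\lincomb$ of its orbit-finite basis. Finite support of each continuation is automatic, because it is a partial application of an equivariant strategy to the local string and the history, as in Example~\ref{ex:partial-application-not-equivariant}. Also, the dual relevant here consists of the finitely supported functions $Q\to\domain$, not all functions.
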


\begin{proof}
  \AP
    We first introduce a common generalisation of the two models called \intro{hybrid protocols}. 
    In such a protocol, there are multiple rounds of messages, 
    followed by a bilinear operation. We will show that the multiple rounds can be eliminated,
    yielding a plain \kl{orbit-finite bilinear protocol}.
    
    Here is the formal definition of the \kl{hybrid protocol}:
    For each round $i \in \set{1, \ldots, k-1}$, the parties exchange messages just as in an 
    \kl[of protocol]{orbit-finite protocol} from Definition~\ref{def:orbit-finite-protocol}, 
    using message spaces $Q_A$ and $Q_B$ and strategies of the following types: 
        \begin{align*}
        \sigma_{A,i} & : \Sigma^* \times (Q_B)^{i-1} \to Q_A\\
        \sigma_{B,i} & : \Sigma^* \times (Q_A)^{i-1} \to Q_B
        \end{align*}
    Then, in  the last $k$-th round, the message histories are used to produce vectors 
    in two \kl[ofd vector spaces]{vector spaces} $V_A$ and $V_B$ of orbit-finite dimension, 
    as in Definition~\ref{def:orbit-finite-vector-space}, using strategies 
    of types
    \begin{align*}
        \sigma_{A,k} & : \Sigma^* \times (Q_B)^{k-1} \to V_A\\
        \sigma_{B,k} & : \Sigma^* \times (Q_A)^{k-1} \to V_B.
        \end{align*}
    Finally, from the two vectors, the output is computed using a bilinear map
    \begin{align*}
        \text{out} : V_A \times V_B \to \domain.
    \end{align*}

    The hybrid protocol generalises both \kl{orbit-finite protocols} and 
    \kl{orbit-finite bilinear protocols}. 
    For the latter, this is clear: we simply use $k=1$ and there is no message exchange. 
    For the former, we proceed as follows.
    We use  trivial vector spaces, i.e.~both $V_A$ and $V_B$ are the field. 
    The bilinear map is multiplication. 
    Once the two parties have agreed on a Boolean decision, 
    they can both send $1$ (in the case of a ``yes'' decision) or $0$ (in the case of a ``no'' decision), 
    and the bilinear map will give the correct output. 
    
    In order to complete the proof of the lemma, we will show that the number
    of rounds can always be reduced to one, thus yielding a \kl[of bilinear protocol]{bilinear protocol}.

    \begin{claim}\label{claim:reduce-round}
      For every $k > 1$, a \kl{hybrid protocol} with $k$ rounds can be simulated by a 
      \kl{hybrid protocol} with $k-1$ rounds.
    \end{claim}
    \begin{proof}
        We will eliminate round $k-1$, where the last message is sent. 
        Once Alice has received the first $k-2$ messages from Bob, 
        her contribution to the rest of the protocol is described by an object of type 
          \begin{align}\label{eq:contribution-last-two-rounds}
            \myunderbrace{Q_A}{message \\ sent in \\ round $k-1$} \quad \times \quad  \myunderbrace{(\fsfun  {Q_B} {V_A})}{message sent in \\ round $k$, as a function \\ of the message sent \\ in  round $k-1$}
        \end{align}

        Before we continue, let us now explain the ``fs'' annotation on the arrow above. It stands for
        the set of \emph{finitely supported functions}, which is a relaxation of
        equivariant functions. The general idea is that while equivariant functions can only refer to equality between letters, the  finitely supported functions can also use finitely many constants. Before presenting the definition, we give two examples. The first example is the function of type $\atoms \to \atoms$ defined by 
        \begin{align}\label{eq:fs-function-example}
        a \in \atoms \mapsto 
        \begin{cases}
        \text{John} & \text{if $a = \text{Eve}$}\\
        a & \text{otherwise}.
        \end{cases}
        \end{align} 
        This function is  finitely supported but not equivariant. Here is a second example of finitely supported functions, which is particularly  relevant to our application.

        \begin{myexample}[Partial application not equivariant]
        \label{ex:partial-application-not-equivariant}
        A partial application of a function 
        \begin{align*}
        f : X \times Y \to Z
        \end{align*}
        to an argument $x \in X$ is a function $f(x, -) : Y \to Z$, defined as $f(x, -)(y) = f(x,y)$. Even if $f$ is equivariant, then the  partially applied function $f(x, -)$ might not be equivariant. For example, consider the equivariant function
        \begin{align*}
        f: {\atoms^* \times \atoms} \to  {\set{\textrm{true}, \textrm{false}}},
        \end{align*}
        such that $f(w, a) = \textrm{true}$ if and only if $a$ appears in $w$. Now, consider the partial application
        \begin{align*}
        f' := f(\textrm{Celine}\, \textrm{John}\, \textrm{Anne}, -)
        \end{align*}
        which is of type $\atoms \to \set{\textrm{true}, \textrm{false}}$. This function is no longer equivariant, because the input needs to be compared to the constants Celine, John, and Anne. However, the number of constants is finite, hence it is finitely supported.
      \end{myexample}
      \begin{definition} [finitely supported function]
        \label{def:finitely supported-function}
        \AP
         A function $f : X \to Y$ is \intro{finitely supported} if there is a finite set $S$ of atoms such that for every atom permutation $\pi$ we have 
        \begin{align*}
            (\forall a \in S\ \pi(a)=a) 
            \quad \implies \quad f(\pi(x)) = \pi(f(x)).
        \end{align*}
      \end{definition}
      An important property of \kl{finitely supported} function spaces is that they do not preserve orbit-finiteness: 

      \begin{myexample} If we take two orbit-finite sets, then the set of \kl{finitely supported} functions between them might not be \kl{orbit-finite}. For example, the set
        \begin{align*}
        \fsfun{\atoms}{\set{\text{true, false}}}
        \end{align*}
        of \kl{finitely supported} functions from $\atoms$ to the Booleans is not \kl{orbit-finite}, since such functions can refer to any number of constants, as demonstrated by the partial applications in Example~\ref{ex:partial-application-not-equivariant}.
      \end{myexample}

      As mentioned before, one can recover closure under taking function spaces by moving to vector spaces. This is formalized in the following lemma, which stems 
      from \cite{orbitFiniteVectorTheoretics}, but we use its formulation from \cite{bojanczyk_slightly}, as it is more convenient for our purposes:
      \begin{lemma}[{\cite[Lemma 8.17]{bojanczyk_slightly}}]
      \label{lem:function-space-of-ofd}
      For all \kl{orbit-finite} sets $X$ and $Y$, the space of \kl{finitely supported} functions $\fsfun{X}{\lincomb Y}$ is a \kl{vector space of orbit-finite dimension}.
      \end{lemma}
      Let us now close this digression about function spaces (for more discussion, see \cite[Section 8.3]{bojanczyk_slightly} or \cite{functionSpaces2024})
      and go back to the proof of the claim. Observe, that the second coordinate in~\eqref{eq:contribution-last-two-rounds} is indeed of type 
      \begin{align*}
      \fsfun{Q_B}{V_A},
      \end{align*}
      as it arises from a partial application of the equivariant function $\sigma_{A,k}$ to Alice's part of the input (i.e. $w_1 \in \Sigma^*$)
      and her message history from the first $k-2$ rounds (i.e. an element of $(Q_B)^{k-2}$). Therefore, in order to transmit it to Bob, 
      we need to turn the type in~\eqref{eq:contribution-last-two-rounds} into a vector space. The second coordinate is already a vector space,
      since functions with outputs in a vector space can be added and scaled pointwise. What is more, by Lemma \ref{lem:function-space-of-ofd},
      it has an orbit-finite basis:
      \begin{align*}
        F_A \subseteq \fsfun  {Q_B} {V_A}.  
      \end{align*}
      The first coordinate $Q_A$ in~\eqref{eq:contribution-last-two-rounds} can be turned into a vector space by  allowing linear combinations, i.e.~$\lincomb Q_A$. We combine these two using tensor product, yielding a vector space of orbit-finite dimension
        \begin{align*}
           W_A =  (\lincomb Q_A) \otimes (\lincomb F_A).
        \end{align*}
        We can do the same thing for Bob, obtaining a vector space
        \begin{align*}
           W_B =  (\lincomb Q_B) \otimes (\lincomb F_B),
        \end{align*}
        where $F_B$ is an orbit-finite basis of the vector space 
        \begin{align*}
        \fsfun  {Q_A} {V_B}.
        \end{align*}
        In order to define an equivariant bilinear map of type
        \begin{align*}
        \varphi : W_A \times W_B \to \domain
        \end{align*}
        it is enough to define an equivariant linear map of type 
        \begin{align*}
        \varphi : W_A \otimes W_B \to \domain
        \end{align*}
        for which it is enough to define an equivariant function on its basis
        \begin{align*}
        Q_A \times F_B \times Q_B \times F_A \to \domain
        \end{align*}
        This definition is the only one that types, namely 
        \begin{align*}
        (q_A, f_B, q_B, f_A) \quad 
        \mapsto \quad 
        \text{out}(f_A(q_B), f_B(q_A)).
        \end{align*}
        Because the output map is bilinear, one can check that $\varphi$ defined this way is consistent with the original protocol, i.e.~if we take functions 
        \begin{align*}
        f_A : Q_B \to V_A \qquad \text{and} \qquad f_B : Q_A \to V_B,
        \end{align*}
        which are not necessarily basis vectors from $F_A$ and $F_B$, then we have 
        \begin{align*}
        \text{out}(f_A(q_B), f_B(q_A)) = \varphi((q_A, f_B), (q_B, f_A)).
        \end{align*}
        Therefore, we can implement the last two round of the original hybrid protocol using a single round. The message spaces and the strategies for the first $k-2$ rounds are unchanged. In the last round $k-1$, the new strategies
        \begin{align*}
        \sigma'_{A,k-1} & : \Sigma^* \times (Q_B)^{k-2} \to W_A\\
        \sigma'_{B,k-1} & : \Sigma^* \times (Q_A)^{k-2} \to W_B
        \end{align*}
        output the tensor pairs consisting of the contribution that was described in~\eqref{eq:contribution-last-two-rounds}. Finally, the output  map for the new protocol is $\varphi$. 
    \end{proof}

    By repeatedly applying the above claim, we can reduce the number of rounds to one, in which case we get a bilinear protocol, as required in the statement of the lemma. 
\end{proof}

\subsubsection{Orbit-Finite Fliess Theorem}
In this section, we prove an orbit-finite version of the Fliess Theorem, which characterises functions $\Sigma^* \to \domain$ that are computed by \kl{weighted orbit-finite automata}. 
This result will be used to complete the proof of \cref{thm:orbit-finite-protocol-to-weighted}.

As in the original Fliess Theorem, we will be interested in \kl{derivatives} of the function,
which live in the space  
\begin{align*}
\Sigma^* \to \domain.
\end{align*}
This space has three kinds of structure, all of which will are used in the Fliess Theorem: 
\begin{enumerate}
    \item It is a vector space, since we can take linear combinations of functions.
    \item It has a notion of \kl{left derivatives}, i.e.~for each function $f$ and input string $w \in \Sigma^*$, we can consider the \kl{left derivative} $v \mapsto f(wv)$, which is denoted by $\leftderivative{f}{w}$.
    \item It has a notion of \kl{atom permutations}: for each function $f$ and atom permutation $\pi$, we can consider the function $\pi(f)$, which is the composition $\pi;f$.
\end{enumerate}

\AP
We say that a subset  $U \subseteq \Sigma^* \to \domain$  is
\intro{orbit-finitely spanned} if there is some \kl{orbit-finite} set $Q$, such that
every element of $U$ is a finite linear combination of elements from $Q$. We do
not require the linear combination to be unique, i.e.~we do not require $Q$ to
be a basis. (Choosing a basis can be problematic in the context of orbit-finite
sets, see~\cite[Example 77]{bojanczyk_slightly}.) We are now ready to state the
orbit-finite version of the Fliess Theorem.

\begin{theorem}[Orbit-Finite Fliess Theorem]\label{thm:orbit-finite-fliess}
    The following two conditions are equivalent for every function
    $f : \Sigma^* \to \domain$
    where $\Sigma$ is an orbit-finite alphabet and $\domain$ is a field.
    \begin{enumerate}
      \item \label{it:fliess-weighted} $f$ is computed by a \kl{weighted orbit-finite automaton};
      \item \label{it:fliess-derivatives} $f$ is \kl{equivariant} and its set of
        \kl{left derivatives} is \kl{orbit-finitely spanned}.
    \end{enumerate}
\end{theorem}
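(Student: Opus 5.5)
We prove the two implications separately, mimicking the classical proof of the Fliess Theorem but replacing finite bases by orbit-finite spanning sets.

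\textbf{From automata to derivatives.} Suppose $f$ is computed by a \kl{weighted orbit-finite automaton} with state space $Q$ and equivariant weight functions $I$, $F$, $\Delta$. Equivariance of $f$ is immediate: an atom permutation $\pi$ maps runs over $w$ bijectively to runs over $\pi(w)$ and preserves their weights. For the derivatives, for each state $q \in Q$ let $f_q : \Sigma^* \to \domain$ be the function that sums the weights of runs starting in $q$, without the initial weight, and ending with the final weight. Splitting a run over $wv$ at position $|w|$ gives
\begin{align*}
\leftderivative{f}{w} = \sum_{q \in Q} \alpha_w(q) \cdot f_q,
\end{align*}
where $\alpha_w(q)$ is the total weight of runs over $w$ from an initial state to $q$. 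By the finiteness requirement on runs with nonzero weight, only finitely many $\alpha_w(q)$ are nonzero, so this is a finite linear combination. Hence the left derivatives are spanned by $\setbuild{f_q}{q \in Q}$, which is the image of the orbit-finite set $Q$ under the equivariant map $q \mapsto f_q$. We index the spanning set by $Q$ itself, which the definition of \kl{orbit-finitely spanned} allows. One subtlety remains: the finiteness condition must also hold for runs starting in arbitrary states, so that each $f_q$ is well defined. I would check this, or else restrict to states that are reachable with nonzero weight.

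\textbf{From derivatives to automata.} Let $Q$ be orbit-finite and let $g : Q \to (\Sigma^* \to \domain)$ be a map whose image spans all left derivatives. First we make $g$ equivariant. The family of left derivatives is equivariant, since $\pi(\leftderivative{f}{w}) = \leftderivative{f}{\pi(w)}$ because $f$ is equivariant. Given a spanning family, we can pass to the equivariant closure of its image. This closure is again a finite union of orbits, because each element is finitely supported. This step needs care: I would take $Q' = $ the union of the orbits of the finitely many generators, and argue via \cite[Lemma 1.4]{bojanczyk_slightly} that $Q'$ is orbit-finite. Next, replace the spanning set by one consisting of actual derivatives. Each $g(q)$ need not be a derivative, so we use that the span $U$ of all left derivatives is an equivariant subspace of a space spanned by an orbit-finite set. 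We then extract an orbit-finite set $W \subseteq \Sigma^*$ such that the derivatives $\leftderivative{f}{w}$ for $w \in W$ span $U$. I expect this \textbf{extraction to be the main obstacle}. Classically one picks a basis, but bases are problematic here, as noted in \cite[Example 77]{bojanczyk_slightly}. I would try a dimension argument instead: each vector in $U$ is a combination of generators supported on a finite set, and finitely supported subspaces of orbit-finite-dimensional spaces satisfy the ascending chain condition, by the results of \cite{orbitFiniteVectorTheoretics}. So we add derivatives orbit by orbit until the span stabilises.

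With $W$ in hand, we build the automaton with state space $W$. The initial weight puts $1$ on $\varepsilon$, after adding $\varepsilon$ to $W$. The final weight is $F(w) = f(w)$. For transitions, we write $\leftderivative{f}{wa} = \sum_{u} \beta_{w,a}(u) \cdot \leftderivative{f}{u}$ and set $\Delta(w,a,u) = \beta_{w,a}(u)$. The coefficients must be chosen equivariantly and with finite support. Because the representation is not unique, I would pick one representative per equivariant orbit of $(w,a)$ and transport it along permutations. For this to be consistent, the chosen combination must be invariant under the stabiliser of $(w,a)$. We can achieve that by averaging over the finitely many images under the stabiliser's action on the relevant finite set of atoms, which assumes the characteristic does not interfere. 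Alternatively, we choose the combination using only atoms of $(w,a)$, which is possible since everything is supported by those atoms. Finally, an induction on $|v|$ shows that the automaton's total weight from state $u$ on input $v$ equals $f(uv)$, so the automaton computes $f$ and has finitely many nonzero runs per input.
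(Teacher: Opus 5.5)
Your skeleton is the paper's. For (1)$\Rightarrow$(2) you split runs at position $|w|$ and express $\leftderivative{f}{w}$ through the state functions $f_q$; the paper does the same with configurations in $\lincomb Q$. For (2)$\Rightarrow$(1) you take an orbit-finite set $W$ of strings as states, with initial weight $1$ on $\varepsilon$, final weight $f(w)$, and transitions read off an equivariant decomposition of $\leftderivative{f}{wa}$. You are right that two steps need more than the paper gives. First, the paper simply takes item 2 to mean spanning by derivatives at an orbit-finite $Q\subseteq\Sigma^*$. Second, its transport $\delta'(\pi(w),\pi(a)):=\pi(\delta(w,a))$ is well defined only when $\delta(w,a)$ is supported by the atoms of $(w,a)$. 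Your extraction step works once the chain condition is stated correctly. It fails for finitely supported subspaces: in $\lincomb\atoms$ the chain $\mathrm{span}\{a_1\}\subsetneq\mathrm{span}\{a_1,a_2\}\subsetneq\cdots$ never stabilises. What you need, and what holds, is the version for equivariant subspaces, and your chain is equivariant.

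The genuine gap is the choice of transitions.

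\textbf{Averaging cannot help.} The stabiliser of $(w,a)$ fixes each atom of $(w,a)$ and sends every other atom to infinitely many places. So a combination mentioning such an atom has infinitely many images, and a combination mentioning none needs no averaging.

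\textbf{Your alternative can fail.} You claim a combination using only atoms of $(w,a)$ exists ``since everything is supported by those atoms''. This does not follow from supports. Let $\Sigma=\atoms+\{\#\}$ and let $f(v)=1$ iff $v$ contains an atom or begins with $\#^{n+1}$.
\begin{itemize}
\item The derivatives are $\leftderivative{f}{\#^k}$ for $k\le n$, together with the constant function $\mathbf{1}$. These are linearly independent, as you can check by evaluating at $\#^0,\ldots,\#^{n+1}$.
\item Any $W$ whose derivatives span must contain $\#^n$. Take such a $W$ with no $\#^k$ for $k>n$, e.g.\ $W=\Sigma^{\le n}$. This is what adding orbits by length until stabilisation produces, because $\mathbf{1}$ already arises from a single atom.
\item Since $(\#^n,\#)$ is equivariant, $\delta(\#^n,\#)$ may only combine the atom-free strings $\#^k$ with $k\le n$. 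Their derivatives do not span $\leftderivative{f}{\#^{n+1}}=\mathbf{1}$, so no equivariant $\delta$ exists.
\end{itemize}
This also shows that the paper's Claim fails for $Q=\Sigma^{\le n}$.

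\textbf{What is missing} is a \emph{support-closed} $W$: for every finite set $S$ of atoms, the derivatives at strings of $W$ over $S$ must span the derivatives at all strings over $S$. In the example, adding $\#^{n+1}$ suffices. Proving that such an orbit-finite $W$ exists needs an argument uniform in $S$, which neither your sketch nor the paper supplies.

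\textbf{A smaller point, shared with the paper.} The definition bounds the number of runs of nonzero \emph{weight}, final weight included. So it does not force $\alpha_w(q)\neq 0$ for only finitely many $q$. An automaton that guesses an atom on the first letter and compares it with the second letter already has $\alpha_w(q)\neq 0$ for infinitely many $q$ after one letter.
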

\begin{proof} Our proof follows the lines of the original theorem, without any significant changes.

    \AP
    We begin with the implication \ref{it:fliess-weighted} $\implies$ \ref{it:fliess-derivatives}. Consider a weighted orbit-finite automaton with state space $Q$. 
    Define the \intro{pre-weight} of a run in the same way as its weight, except that we do not use the final weight. In other words, this is the product of: (1) the initial weight of the first state; and (2) the weights of all transitions. Consider an input string $w$. Define the \intro{configuration} of $w$ to be the linear combination
    \begin{align}
        \label{eq:configuration-wa}
        \sum_\rho \alpha_\rho \cdot q_\rho,
    \end{align}
    where $\rho$ ranges over runs that have input $w$ and non-zero \kl{pre-weight},
    $\alpha_\rho$ is the \kl{pre-weight} of the run $\rho$, and $q_\rho$ is the last state in this run.
    By the assumption that each input string has finitely many runs with non-zero weight, 
    the configuration is a finite sum, i.e.~it belongs to the vector space $\lincomb Q$. 
    The \kl{left derivative} which corresponds to the input string is uniquely determined 
    by this configuration, and the space of \kl{configurations} is \kl{orbit-finitely spanned}. 
    Hence, we get~\ref{it:fliess-derivatives}.

    We now prove the other implication, \ref{it:fliess-derivatives} $\implies$
    \ref{it:fliess-weighted}. Assume~\ref{it:fliess-weighted}, which means that
    there is an \kl{orbit-finite} set $Q \subseteq \Sigma^*$ such that every
    \kl{derivative} of $f$ can be decomposed as a finite linear combination of left
    derivatives
    \begin{align*}
            \sum_i \alpha_i \leftderivative{f}{w_i},
    \end{align*}
    where each string $w_i$ is in $Q$. The following claim shows that strings obtained by taking elements of $Q$ and appending one letter
    (i.e. strings of the form $Q \cdot \Sigma$) can be decomposed in an equivariant way:
    \begin{claim}
      There is an \kl{equivariant function}
        \begin{align*}
       \delta :  Q \times \Sigma \to \lincomb Q
        \end{align*}
        such that the following conditions holds for every $w \in Q$ and $a \in \Sigma$:
        \begin{align*}
        \delta(w,a) = 
        \sum_i \alpha_i w_i 
        \qquad \Rightarrow \qquad 
        \leftderivative f {wa} = \sum_i \alpha_i \leftderivative{f}{w_i}.
        \end{align*}
    \end{claim}
    \begin{proof}
        Observe that Condition~\ref{it:fliess-derivatives} in the theorem's statement already asserts that there exists such a function $\delta$,
        possibly non-equivariant. In this proof we show how to modify it so that it becomes equivariant
        while still satisfying other requirements of the claim. We construct this modified function $\delta'$ as follows:
        for every orbit $Q \times \Sigma$ pick its representative $(w, a)$ and then extend the result to the whole orbit by equivariance:
        \begin{align*}
        \delta'(\pi(w), \pi(a)) := \pi(\delta(w, a))
        \end{align*}
        The new function $\delta'$ is equivariant by construction, and it keeps satisfying other requirements of the claim because
        $f$ is an equivariant function, so its derivatives commute with atom permutation:
        \begin{align*}
        \leftderivative{f}{\pi(w)}= \pi(\leftderivative{f}{w}).
        \end{align*}
        This gives  us that:
        \begin{align*}
                    \leftderivative{f}{wa} = \sum_i \alpha_i \leftderivative{f}{w_i}
                    \quad \iff \quad 
                            \leftderivative{f}{\pi(wa)} = \sum_i \alpha_i \leftderivative{f}{\pi(w_i)},
        \end{align*}
        which completes the proof of the claim.
    \end{proof}        

    Using the function $\delta$ from the above claim, we define a \kl{weighted
    orbit-finite automaton}. The state space is the set $Q$. (We assume without
    loss of generality that $Q$ contains the empty string $\varepsilon$. This
    is not really necessary for the construction, but it makes it more
    intuitive.) The weights are defined as follows: 
    \begin{itemize}
        \item \textbf{Initial weights.} The initial weight of   $\varepsilon$ is $1$. All other states have initial weight $0$.
        \item \textbf{Transition weights.} The weight of a transition 
        \begin{align*}
        w \xrightarrow{a} v
        \end{align*}
    is the coefficient next $v$ in the linear decomposition $\delta(w,a)$.
        \item \textbf{Final weights.} The final weight of a state $w \in Q$ is $f(w)$.
    \end{itemize}
    Remember that \kl{orbit-finite weighted automata} can only admit finitely many runs for each input word.
    Our automaton satisfies this requirement, as all linear combinations returned by $\delta$ are finite. 

    Finally, we justify why this automaton computes the function $f$. A simple inductive proof shows that  
    \begin{align*}
\text{configuration of $w$} = \sum_\rho \alpha_\rho \cdot w_\rho
\quad \implies \quad 
        \leftderivative f w = 
    \sum_\rho \alpha_\rho \cdot \leftderivative f {w_\rho}.
    \end{align*}
    By choice of final weights, the output of the automaton is equal to $f(w)$. 
\end{proof}

We are now ready to complete the proof of \cref{thm:orbit-finite-protocol-to-weighted} by showing the missing link:
\begin{claim}
\label{claim:bilinear-prot-to-of-automaton}
Let $\domain$ be a field.
    If a function $\Sigma^* \to \domain$ is computed by an 
    \kl{orbit-finite bilinear protocol}, it can also be computed by an 
    \kl{orbit-finite weighted automaton}.
\end{claim}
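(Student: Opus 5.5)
The plan is to apply the Orbit-Finite Fliess Theorem (\cref{thm:orbit-finite-fliess}). We must check two things: that $f$ is \kl{equivariant}, and that its set of \kl{left derivatives} is \kl{orbit-finitely spanned}. The argument mirrors Lemma~\ref{lem:hankel-finite-rank}: once Alice's vector is fixed, the output becomes a linear function of Bob's vector.

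Equivariance is straightforward. Take $w$ and an atom permutation $\pi$, and use the trivial split $w = \varepsilon w$. The output is $\text{out}(\sigma_A(\varepsilon), \sigma_B(w))$. Equivariance of $\sigma_A$ gives that $\sigma_A(\varepsilon)$ is fixed by every $\pi$. Combining this with equivariance of $\sigma_B$ and of $\text{out}$, and using that $\pi$ acts trivially on field elements, we get $f(\pi(w)) = f(w)$.

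For the derivatives, let $V_A = \lincomb P$ with $P$ \kl{orbit-finite}. By split invariance, for every $u$ and $v$ we have $f(uv) = \text{out}(\sigma_A(u), \sigma_B(v))$. Write $\sigma_A(u) = \sum_i \alpha_i p_i$ with $p_i \in P$. Bilinearity then gives
\begin{align*}
\leftderivative{f}{u} = \sum_i \alpha_i \, g_{p_i}, \qquad \text{where } g_p(v) = \text{out}(p, \sigma_B(v)).
\end{align*}
So every left derivative lies in the span of the family $\setbuild{g_p}{$p \in P$}$. This family is indexed by the orbit-finite set $P$, and the indexing map $p \mapsto g_p$ is equivariant, because $\pi(g_p) = g_{\pi(p)}$ follows from equivariance of $\text{out}$ and $\sigma_B$.

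The one point needing care is the definition of \kl{orbit-finitely spanned}. It asks for an orbit-finite set of spanning functions, whereas we have the image of $P$ under an equivariant map into $\Sigma^* \to \domain$. The image of an orbit-finite set under an equivariant map is orbit-finite in the general sense, but it need not be of the polynomial form of Definition~\ref{def:orbit-finite-sets}, since distinct $p$ may give equal $g_p$. I would handle this by not identifying equal $g_p$'s and using $P$ itself as the index set. The Fliess proof only needs an equivariant assignment from an orbit-finite index set to spanning elements: in the $\ref{it:fliess-derivatives} \implies \ref{it:fliess-weighted}$ direction, the states are indices and the final weights are obtained by evaluation.

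If a literal reading is insisted on, I would instead restate the spanning set via strings. Use $Q = \Sigma^{\le m}$ for suitable $m$, or directly redo the automaton construction with state space $P$. Take initial configuration $\sigma_A(\varepsilon) \in \lincomb P$ and final weight $F(p) = g_p(\varepsilon)$. For transitions, we need an equivariant $\delta : P \times \Sigma \to \lincomb P$ with $g_{p}(a\,\cdot) = \sum \beta_j g_{p_j}$. Such a $\delta$ exists on the span of the vectors $\sigma_A(u)$, but $g_p(a\,\cdot)$ for a basis element $p$ outside that span need not be expressible. I expect this to be the main obstacle.

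To address it, I would restrict $P$ to the equivariant subspace spanned by $\setbuild{\sigma_A(u)}{$u \in \Sigma^*$}$. This subspace is a subspace of a \kl{vector space of orbit-finite dimension}, and it is again orbit-finitely spanned (by the results of \cite{orbitFiniteVectorTheoretics}, subspaces are orbit-finitely generated). On this subspace, $\sigma_A(ua)$ is determined linearly in the needed sense, and the claim inside the proof of \cref{thm:orbit-finite-fliess} produces an equivariant $\delta$ by choosing orbit representatives.
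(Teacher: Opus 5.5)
Your proposal is correct and follows the paper's proof. The paper applies the Orbit-Finite Fliess Theorem and uses bilinearity of $\text{out}$ to show that each left derivative $\leftderivative{f}{u}$ is determined linearly by $\sigma_A(u)$, exactly as your family $\{g_p\}_{p \in P}$ does; your explicit check that $f$ is equivariant is a detail the paper omits. One correction to your aside: the Fliess construction cannot take $P$ itself as its state space, because it needs spanning elements that are themselves left derivatives (so their letter-derivatives are again expressible), which you yourself go on to observe. Your fallback handles this correctly: restrict to the span of $\sigma_A(\Sigma^*)$ and use the orbit-finite subspace results to get a spanning set $\sigma_A(\Sigma^{\le m})$. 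This makes explicit a step the paper leaves implicit inside its proof of the Fliess Theorem.
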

\begin{proof}
    Thanks to the Orbit-Finite Fliess Theorem,
    it is enough to show that if a function is computed by an \kl{orbit-finite bilinear 
    protocol}, then it has an \kl{orbit-finitely spanned vector space} of \kl{left derivatives}. This follows from the same argument as in Section~\ref{sec:field-domain}:
    the vector produced by Alice in a \kl[of bilinear protocol]{bilinear protocol}
    uniquely determines the \kl{left derivative} of her part of the input. 
\end{proof}

\section{Conclusions}
\label{sec:conclusions}
One could possibly consider other inputs, such as trees. It seems that at least some of our results could generalise to such inputs, as long as there would be a suitable theory of automata for the inputs, with theorems in the style of Myhill-Nerode (which is the case for trees). Nevertheless, we leave the the exploration of non-string inputs to future work.

\bibliographystyle{alpha}
\bibliography{bib}

\end{document}